\def\E{\mathds{E}}
\def\R{\mathds{R}}
\mathchardef\mhyphen="2D
\newtheorem{proposition}{Proposition}
\newtheorem{remark}{Remark}
\def\simiid{\stackrel{\mbox{\scriptsize{iid}}}{\sim}}
\newtheorem{thm}{Theorem}[section]
\newtheorem{lem}[thm]{Lemma}
\begin{document}

\def\spacingset#1{\renewcommand{\baselinestretch}%
{#1}\small\normalsize} \spacingset{1}


  \title{\bf Gaussian credible intervals in Bayesian nonparametric estimation of the unseen}

\author{
Claudia Contardi \\
Department of Mathematics\\
University of Pavia, Pavia, Italy \\[0.4cm]
Emanuele Dolera \\
Department of Mathematics\\
University of Pavia, Pavia, Italy \\[0.4cm]
Stefano Favaro \\  Department of Economics and Statistics\\ 
University of Torino and Collegio Carlo Alberto, Torino, Italy
 }
\maketitle

\bigskip
\begin{abstract}
The unseen-species problem assumes $n\geq1$ samples from a population of individuals belonging to different species, possibly infinite, and calls for estimating the number $K_{n,m}$ of hitherto unseen species that would be observed if $m\geq1$ new samples were collected from the same population. This is a long-standing problem in statistics, which has gained renewed relevance in biological and physical sciences, particularly in settings with large values of $n$ and $m$. In this paper, we adopt a Bayesian nonparametric approach to the unseen-species problem under the Pitman-Yor prior, and propose a novel  methodology to derive large $m$ asymptotic credible intervals for $K_{n,m}$, for any $n\geq1$. By leveraging a Gaussian central limit theorem for the posterior distribution of $K_{n,m}$, our method improves upon competitors in two key aspects: firstly, it enables the full parameterization of the Pitman-Yor prior, including the Dirichlet prior; secondly, it avoids the need of Monte Carlo sampling, enhancing computational efficiency. We validate the proposed method on synthetic and real data, demonstrating that it improves the empirical performance of competitors by significantly narrowing the gap between asymptotic and exact credible intervals for any $m\geq1$.
\end{abstract}

\noindent%
\textbf{Keywords:} {\it Bayesian nonparametrics; central limit theorem; coverage; Dirichlet prior; Gaussian credible intervals; Mittag-Leffler credible intervals; Pitman-Yor prior; unseen-species problem.} 
\vfill

\spacingset{1.9} 


\section{Introduction}\label{sec1}

The estimation of the number of unseen species is a long-standing problem in statistics, dating back to the seminal work of \citet{Fis(43)} on  ``species extrapolation". It assumes that $n\geq1$ random samples $(X_{1},\ldots,X_{n})$ are collected from an unknown discrete distribution $P$ on $\mathbb{S}$, with $\mathbb{S}$ being a space of species' labels or symbols, and calls for estimating 
\begin{displaymath}
K_{n,m}=|\{X_{n+1},\ldots,X_{n+m}\}\setminus \{X_{1},\ldots,X_{n}\}|,
\end{displaymath}
namely the number of hitherto unseen species that would be observed if $m\geq1$ additional samples $(X_{n+1},\ldots,X_{n+m})$ were collected from the same distribution $P$. First introduced in ecology \citep{Cha(84),Cha(92),Bun(93)}, the unseen-species problem has more recently found applications in biological and physical sciences, where it poses significant challenges in handling large values of $n$ and $m$ \citep{Kro(99),Gao(07),Ion(09),Dal(13)}. See \citet{Den(19)} for an overview with emphasis on large-scale biological data. Further applications include, e.g., information theory and theoretical computer science \citep{Haa(95),Mot(06),Flo(07),Bub(13),Cai(18)}, empirical linguistics and natural large language modeling \citep{Efr(76),Thi(87),Orl(04),Oha(12),Ben(18),Kal(24)}, and forensic DNA analysis \citep{Cer(17),FN(24)}. 

\subsection{Background and motivation}

A frequentist nonparametric approach to the unseen-species problem was proposed by \citet{Goo(56)} and \citet{Efr(76)}, then developed rigorously by \citet{Orl(17)}. This is a distribution-free approach, in the sense that it does not rely on any assumption on $P$, leading to estimates of $K_{n,m}$ that are minimax optimal for any $n$ and $m\leq n\log n$, with such a range being the best (largest) possible \citep{Orl(17),Wu(16),Wu(19)}. From a Bayesian nonparametric (BNP) perspective, it is natural to specify a prior distribution for $P$, an approach that was first investigated in \citet{Lij(07)} by focussing on the Pitman-Yor prior \citep{Pit(97)}, which is a prior indexed by $\alpha\in[0,1)$ and $\theta>-\alpha$, with $\alpha=0$ being the celebrated Dirichlet prior \citep{Fer(73)}. Under the Pitman-Yor prior, \citet{Lij(07)} showed that the posterior distribution of $K_{n,m}$, given $(X_{1},\ldots,X_{n})$, depends on the sampling information only through the sample size $n$ and the number $K_{n}$ of species in $(X_{1},\ldots,X_{n})$. Such a distribution is in closed-form, with the posterior mean estimate $\hat{K}_{n,m}$ that can be easily evaluated for any value of $n$ and $m$ \citep{Fav(09)}.  See \citet{Bal(23)} for an up-to-date overview.

Uncertainty quantification for estimates of $K_{n,m}$ has been addressed under the BNP approach, but remains an open problem under the distribution-free approach, especially when $m>n$ \citep{Orl(17)}. For $\alpha\in[0,1)$, exact credible intervals for $K_{n,m}$ can be derived, for any $n$ and $m$, by Monte Carlo sampling the posterior distribution through the predictive distributions of the Pitman-Yor prior \citep{Bal(23)}. Further, if $\alpha\in(0,1)$ then large $m$ asymptotic credible intervals can be derived using the method proposed by \citet{Fav(09)}. Denoting by $K_{m}^{(n)}$ a random variable whose distribution is the posterior distribution of $K_{n,m}$ given $K_{n}=j$, \citet{Fav(09)} showed that, as $m\rightarrow+\infty$
\begin{displaymath}
\frac{K_{m}^{(n)}}{(\theta+n+m)^{\alpha}-(\theta+n)^{\alpha}}\stackrel{\text{w}}{\longrightarrow}S_{\alpha,\theta}^{(n,j)},
\end{displaymath}
where $S_{\alpha,\theta}^{(n,j)}$ is a scaled Mittag-Leffler random variable \citep[Chapter 0]{Pit(06)}. Given values of $(n,j)$ and $(\alpha,\theta)$, with $(\alpha,\theta)$ estimated by means of empirical or fully Bayes procedures, for $m$ sufficiently large the distribution of $K_{m}^{(n)}$ is approximated by the distribution of $c_{\alpha,\theta,n}(m)S_{\alpha,\theta}^{(n,j)}$, with $c_{\alpha,\theta,n}(m)=(\theta+n+m)^{\alpha}-(\theta+n)^{\alpha}$. Mittag-Leffler credible intervals for $K_{n,m}$ are then derived by Monte Carlo sampling $c_{\alpha,\theta,n}(m)S_{\alpha,\theta}^{(n,j)}$. In particular, the  scaling $c_{\alpha,\theta,n}(m)$ is determined in such a way that $\hat{K}_{n,m}$ coincides with the expected value of $c_{\alpha,\theta,n}(m)S_{\alpha,\theta}^{(n,j)}$, ensuring that intervals are centered on the BNP estimator $\hat{K}_{n,m}$ for any $m\geq1$. 

While the method proposed by \citet{Fav(09)} addresses uncertainty quantification for large values of $m$, it comes with notable limitations. Firstly, Monte Carlo sampling $c_{\alpha,\theta,n}(m)S_{\alpha,\theta}^{(n,j)}$ is computationally expensive, despite the recent advances on sampling scaled Mittag-Leffler distributions \citep{Qu(21)}, which limits the practical (computational) benefits of asymptotic credible intervals over exact ones derived by sampling the posterior distribution, unless $m$ is extremely large. Secondly, empirical analyses by \citet{Fav(09)} show that Mittag-Leffler credible intervals are shorter than the exact intervals, with the gap decreasing when $m$ enters the regime $m\gg\theta+n$, namely $m$ is much larger than $\theta+n$. Thirdly, the method of \citet{Fav(09)} fails to extend to the case $\alpha=0$ due to a degenerate behaviour of the limiting posterior distribution \citep{Bal(23)}. These theoretical and empirical limitations highlight the motivation for this paper, which introduces an alternative methodology to uncertainty quantification for BNP estimates of $K_{n,m}$.

\subsection{Preview of our contributions}

We propose a novel method to derive large $m$ asymptotic credible intervals for $K_{n,m}$, which allows to deal with $\alpha\in[0,1)$ and avoids the use of Monte Carlo sampling. Under the Pitman-Yor prior, for $\alpha\in[0,1)$ and $\theta>-\alpha$, we rely on the large $m$ asymptotic behaviour of $K_{m}^{(n)}$ assuming that both the sampling information $(n,j)$ and the parameter $\theta$ are large. In particular, we set $n=\nu m$, $j=\rho m$ and $\theta=\tau m$, with $\nu,\rho,\tau>0$, and show that, as $m\rightarrow+\infty$
\begin{displaymath}
\frac{K_{m}^{(n)}-m\mathscr{M}_{\alpha,\tau,\nu,\rho}}{\sqrt{m\mathscr{S}_{\alpha,\tau,\nu,\rho}}}\stackrel{\text{w}}{\longrightarrow}N(0,1),
\end{displaymath}
where
\begin{displaymath}
 \mathbb{E}\left[K_m^{(n)}\right] =m\mathscr{M}_{\alpha, \tau, \nu, \varrho}+ O(1)
\end{displaymath}
and
\begin{displaymath}
 \operatorname{Var}\left(K_m^{(n)}\right) =m \mathscr{S}_{\alpha, \tau, \nu, \varrho}^2 + O(1),
\end{displaymath}
for some (explicit) functions $\mathscr{M}_{\alpha, \tau, \nu, \varrho}$ and $\mathscr{S}_{\alpha, \tau, \nu, \varrho}$ of $(\alpha,\tau,\nu,\varrho)$, respectively, and where $N(0,1)$ is the standard Gaussian random variable. Given values of $(\nu m,\rho m)$ and $(\alpha,\tau m)$, with $(\alpha,\tau m)$ estimated by means of empirical or fully Bayes  procedures, for $m$ sufficiently large the distribution of $K_{m}^{(n)}$ is approximated by a Gaussian distribution with mean $m\mathscr{M}_{\alpha, \tau, \nu, \varrho}$ and variance $m \mathscr{S}_{\alpha, \tau, \nu, \varrho}^2$. Gaussian credible intervals for $K_{n,m}$, with a prescribed (asymptotic) level, are then derived from the Gaussian quantiles, enhancing computational efficiency.  

Given the sampling information $(n,j)$ and the parameter $(\alpha,\theta)$, the proposed methodology and that of \citet{Fav(09)} employ different approximations of the posterior distribution of $K_{n,m}$, for $m$ sufficiently large. Both approximations lead to large $m$ asymptotic credible intervals that are centered on the BNP estimator $\hat{K}_{n,m}$. Figure \ref{fig1_intro} shows that our method outperforms that of \citet{Fav(09)} on synthetic data generated from various natural distributions, namely Zipf, Dirichlet-Multinomial and Uniform distributions; the same synthetic datasets were analyzed in \citet[Figure 3]{Orl(17)}. In particular, compared to Mittag-Leffler credible intervals, Gaussian credible intervals provide greater coverage of the exact credible intervals, for any $m\geq1$. A comprehensive empirical validation of our methodology is presented in the paper, encompassing both synthetic and real datasets. For real data, we present an application to the Expressed Sequence Tags (ESTs) data considered in \citet[Section 3]{Fav(09)}, confirming the behaviour displayed in Figure \ref{fig1_intro}.

\begin{figure}[h]
\begin{minipage}{0.5 \textwidth}
\begin{center}
\medskip
A) Zipf
\medskip

\includegraphics[width = \textwidth]{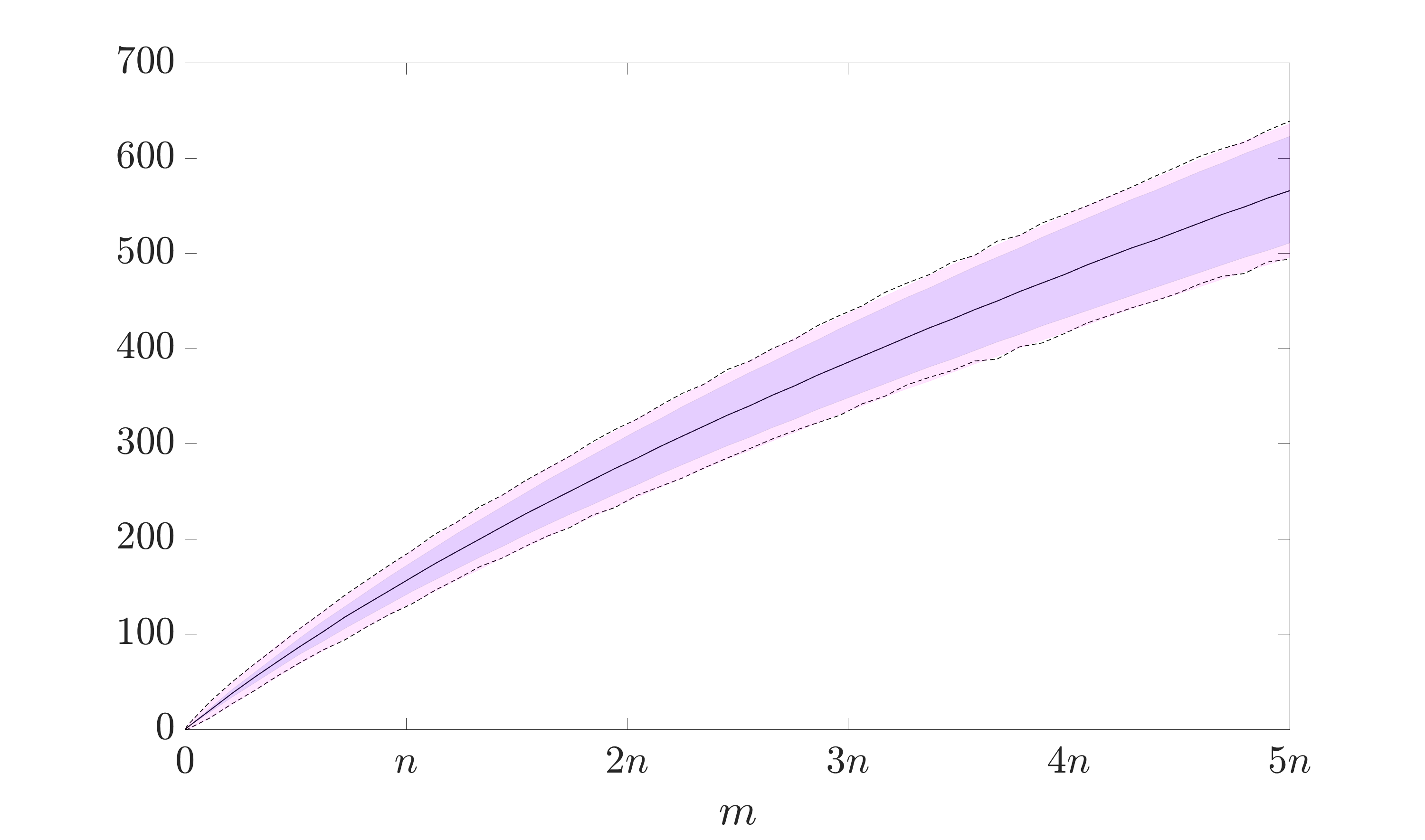}
\end{center}
\end{minipage}
\begin{minipage}{0.5 \textwidth}
\begin{center}pdf
\medskip
B) Zipf
\medskip

\includegraphics[width = \textwidth]{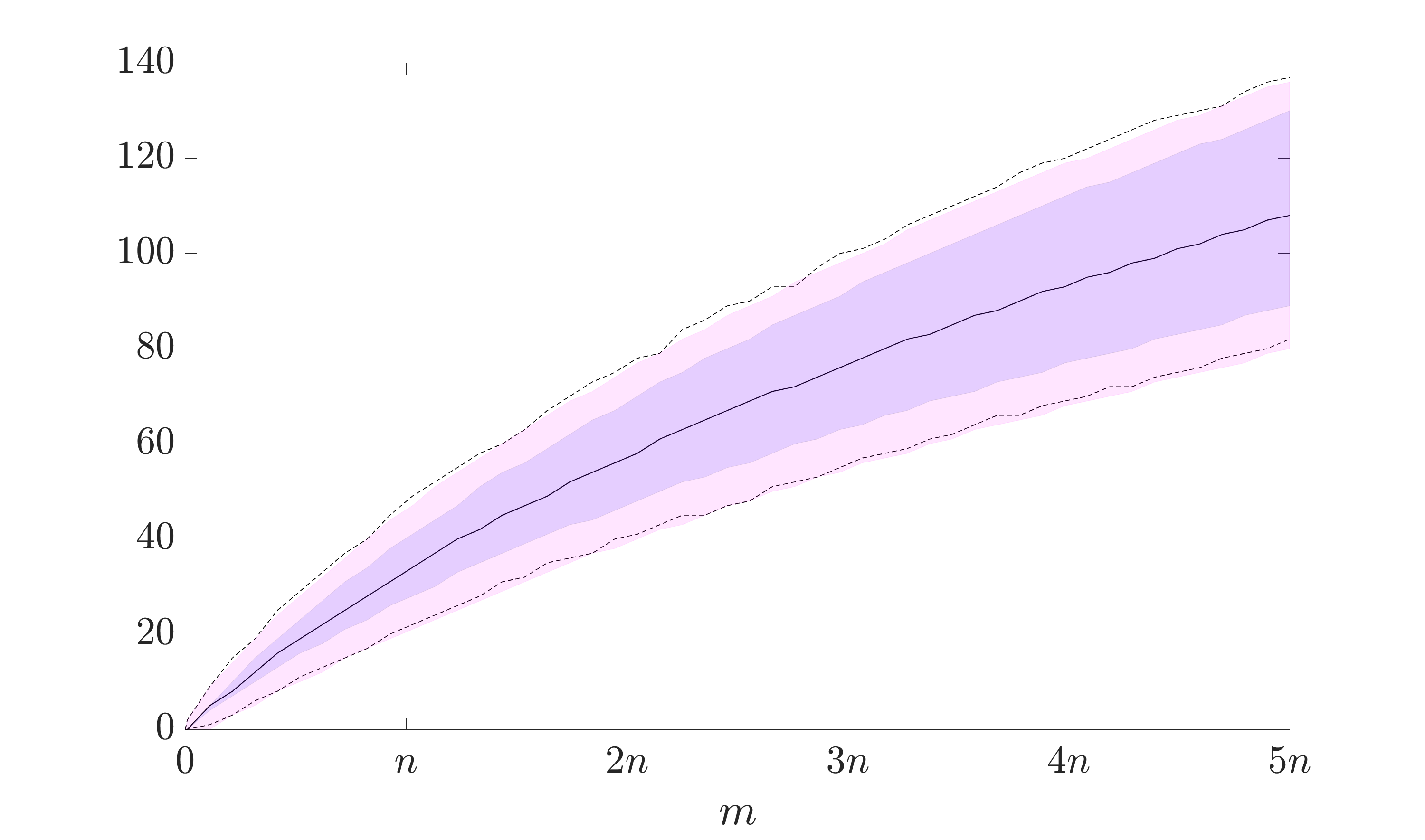}
\end{center}
\end{minipage}

\begin{minipage}{0.5 \textwidth}
\begin{center}
\medskip
C) P\'olya
\medskip

\includegraphics[width = \textwidth]{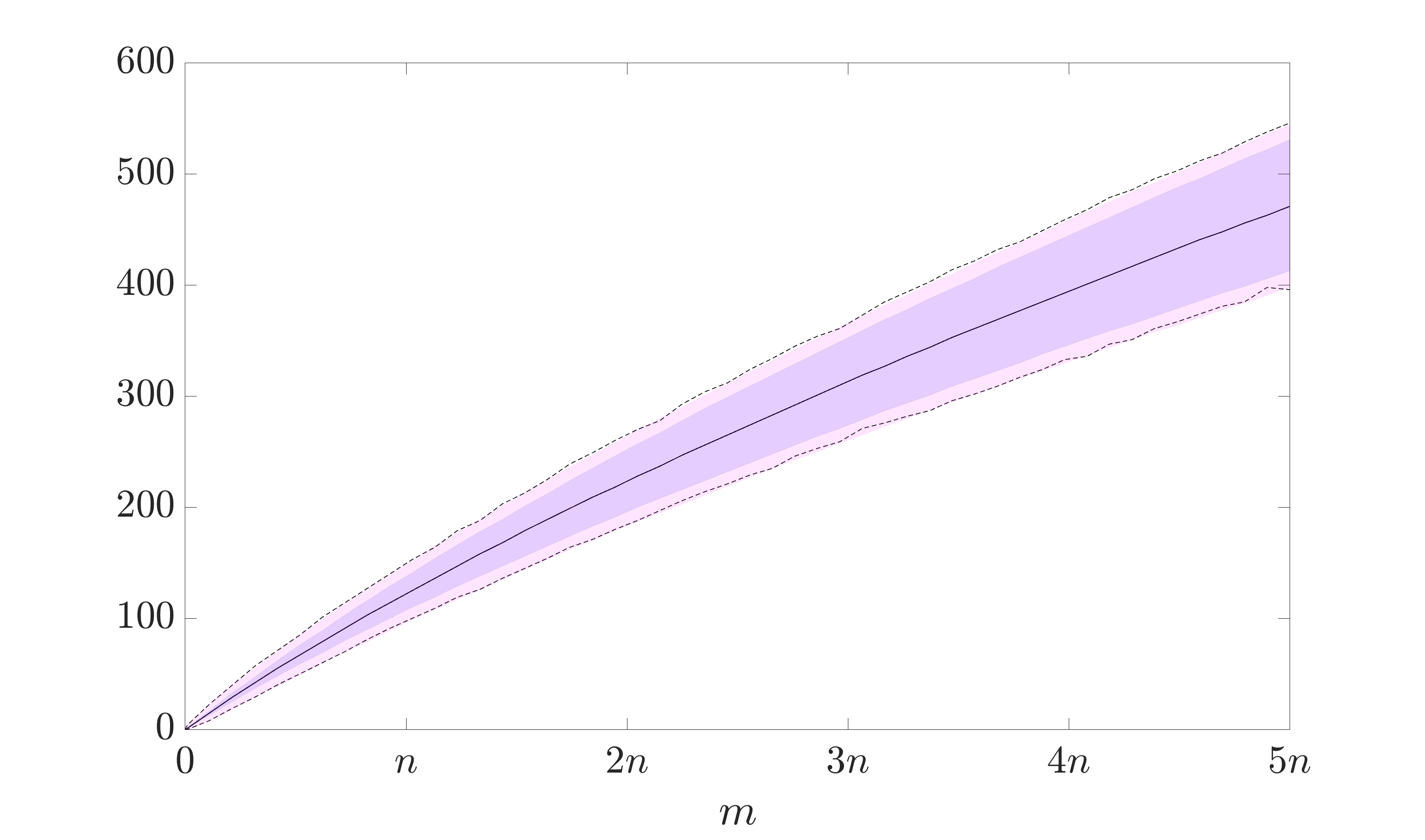}
\end{center}
\end{minipage}
\begin{minipage}{0.5 \textwidth}
\begin{center}
\medskip
D) Uniform
\medskip

\includegraphics[width = \textwidth]{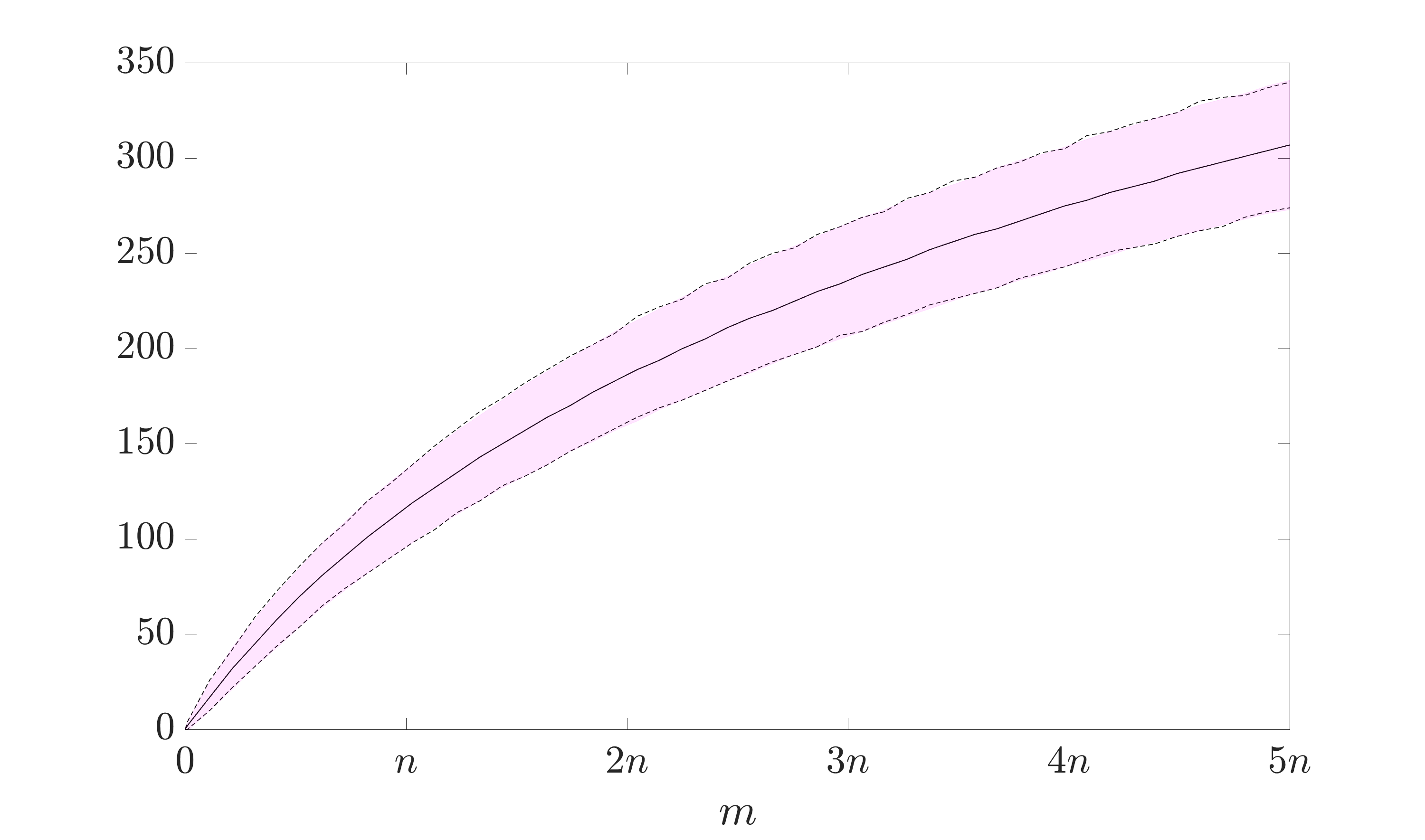}
\end{center}
\end{minipage}
\caption{\scriptsize{BNP estimates of $K_{n,m}$ (solid line --) with 95\% exact credible intervals (dashed line - -), Mittag-Leffler credible intervals (violet) and Gaussian credible intervals (pink), as a function of $m \in [0, 5n]$. Synthetic datasets generated from the following discrete distributions: A) Zipf distribution on $\{0,1,\ldots,300\}$ with parameter $2$, $n=977$, $j=300$, and estimated $(\alpha,\theta)=(0.54,\,26.67)$; B) Zipf distribution on $\{0,1,\ldots,100\}$ with parameter $1.5$, $n=1877$, $j=100$, and estimated $(\alpha,\theta)=(0.38,\,4.66)$; C) P\'olya distribution on $\{0,1,\ldots,500\}$ with parameter (2, 2, 500, 500, ..., 500), $n=2,000$, $j=227$, and estimated $(\alpha,\theta)=(0.69,\,1.80)$; D) Uniform distribution on $\{0,1,\ldots,500\}$, with $n=2,000$, $j=447$, and estimated $(\alpha,\theta)=(0,\,178.48)$. The parameter $(\alpha,\theta)$ is estimated through an empirical Bayes procedure \citep[Section 3]{Fav(09)}.}}
\label{fig1_intro}
\end{figure}

\subsection{Outline of the paper}

The paper is organized as follows. In Section \ref{sec2} we recall the BNP approach to the unseen-species problem, including the modeling assumptions and posterior inferences. Section \ref{sec3} contains the Gaussian CLT for $K_{m}^{(n)}$, with corresponding large $m$ asymptotic credible intervals for $K_{n,m}$. The empirical performance of our methodology is investigated in Section \ref{sec4}. Section \ref{sec5} concludes with a discussion and some directions for future research. Technical preliminary results, proofs and additional numerical illustrations are deferred to the Appendix.


\section{The BNP approach to the unseen-species problem}\label{sec2}

Following the BNP approach of \citet{Lij(07)}, we consider $n\geq1$ observations with values in the space of species' labels or symbols $\mathbb{S}$, modeled as random samples $\mathbf{X}_{n} = (X_{1},\ldots,X_{n})$ such that
\begin{align}\label{eq:exchangeable_model}
\begin{split}
X_1,\ldots,X_{n}\,|\,P &\, \simiid\, P, \\
P&\, \sim\,\text{PYP}(\alpha,\theta),
\end{split}
\end{align}
where $\text{PYP}(\alpha,\theta)$ is the Pitman-Yor prior indexed by $\alpha\in[0,1)$ and $\theta>-\alpha$. For short, we say that $\mathbf{X}_{n}$ is a random sample from $\text{PYP}(\alpha,\theta)$. A simple and intuitive definition of $P\sim\text{PYP}(\alpha,\theta)$ follows from its stick-breaking construction \citep{Per(92)}. Specifically, let: i) $(V_{i})_{i\geq1}$ be independent random variables, with each $V_{i}$ following a Beta distribution with parameter $(1-\alpha,\theta+i\alpha)$; ii) $(S_{j})_{j\geq1}$ be random variables following a non-atomic distribution $\nu$ on $\mathbb{S}$ and independent of each other as well as of the $V_{i}$'s. If $P_{1}=V_{1}$ and $P_{j}=V_{j}\prod_{1\leq i\leq j-1}(1-V_{i})$ for $j\geq1$, so that $P_{j}\in(0,1)$ for any $j\geq1$ and $\sum_{j\geq1}P_{j}=1$ almost surely, then $P=\sum_{j\geq1}P_{j}\delta_{S_{j}}\sim\text{PYP}(\alpha,\theta)$, with the Dirichlet prior corresponding to $\alpha=0$. 

\begin{remark}
If $(P_{(j)})_{j\geq1}$ are the decreasingly ordered stick-breaking random probabilities $P_{j}$'s of $P\sim\text{PYP}(\alpha,\theta)$, then, for $\alpha\in(0,1)$, as $j\rightarrow+\infty$ the $P_{(j)}$'s follow a power-law distribution of exponent $c=\alpha^{-1}$ \citep{Pit(97)}. The parameter $\alpha\in(0,1)$ controls the power-law tail of $P$ through the small $P_{(j)}$'s: the larger $\alpha$, the heavier the tail of $P$. As $\alpha\rightarrow0$, the Dirichlet prior features geometric tails \citep[Chapter 4]{Pit(06)}. We refer to Appendix \ref{asymptotics} for an alternative description of the power-law behaviour driven by $\alpha\in[0,1)$.
\end{remark}

\subsection{Estimates}

Under the model \eqref{eq:exchangeable_model}, \citet[Propostion 1]{Lij(07)} computes the posterior distribution of $K_{n,m}$, given $\mathbf{X}_{n}$. For $\alpha\in(0,1)$, this is expressed in terms of the generalized factorial coefficient 
\begin{displaymath}
\mathscr{C}(u,v;a,b):=\frac{1}{v!}\sum_{i=0}^{v}(-1)^{i}{v\choose i}(-ia-b)_{(u)}
\end{displaymath}
for $a>0$, $b\geq0$ and $u,v\in\mathbb{N}_{0}$ such that $v\leq u$, where $(a)_{(u)}$ denotes the $u$-th rising factorial of $a$, i.e. $(a)_{(u)}:=\prod_{0\leq i\leq u-1}(a+i)$; see Appendix \ref{app_comb_gencoeff}. If the sample $\mathbf{X}_{n}$ features $K_{n}=j$ species with (empirical) frequencies $(N_{1,n},\ldots,N_{1,K_{n}})=(n_{1},\ldots,n_{j})$ then for $k\in\{0,1,\ldots,m\}$
\begin{equation}\label{post_py_k}
\text{Pr}[K_{m}^{(n)}=k]=\text{Pr}[K_{n,m}=k\,|\,\mathbf{X}_{n}]=\frac{\left(j+\frac{\theta}{\alpha}\right)_{(k)}}{(\theta+n)_{(m)}}\mathscr{C}(m,k;\alpha,-n+j\alpha).
\end{equation}
From \eqref{post_py_k}, \citet[Proposition 1]{Fav(09)} provides a BNP estimator of $K_{n,m}$ as posterior expectation, i.e., 
\begin{equation}\label{est_py_k}
\hat{K}_{n,m}=\E[K_{m}^{(n)}]=\E[K_{n,m}\,|\,\mathbf{X}_{n}]=\left(j+\frac{\theta}{\alpha}\right)\left(\frac{(\theta+n+\alpha)_{(m)}}{(\theta+n)_{(m)}}-1\right).
\end{equation}
While the estimator \eqref{est_py_k} can be easily evaluated for any $n$ and $m$, the computational burden for evaluating \eqref{post_py_k} becomes overwhelming as $m$ increases, due the generalized factorial coefficients.

For $\alpha=0$, the posterior distribution of $K_{n,m}$, given $\mathbf{X}_{n}$, follows from \eqref{post_py_k} by taking the limit as $\alpha\rightarrow0$ \citep{Lij(07)}. The resulting distribution is expressed in terms of Stirling numbers, whose evaluation remains computationally unfeasible for large values of $m$; see Appendix \ref{app_comb_stirling}. The corresponding BNP estimator of $K_{n,m}$ as posterior expectation is 
\begin{equation}\label{est_dp_k}
\hat{K}_{n,m}=\E[K_{m}^{(n)}]=\E[K_{n,m}\,|\,\mathbf{X}_{n}]=\sum_{i=1}^{m}\frac{\theta}{\theta+n+i-1},
\end{equation}
which can be easily evaluated for any $n$ and $m$. See \citet{Bal(23)} for further details on the case $\alpha=0$.

\subsection{Exact credible intervals} \label{sec 2.2}

Exact credible intervals for $K_{n,m}$ are derived by Monte Carlo sampling the posterior distribution of $K_{n,m}$, given $\mathbf{X}_{n}$. By exploiting the closed-form expression of the posterior distribution \eqref{post_py_k}, as well as the corresponding expression for $\alpha=0$, one may implement the inverse transform algorithm to sample $K_{m}^{(n)}$. However, due to the evaluation of the distribution of $K_{m}^{(n)}$, the inverse transform algorithm becomes computationally unfeasible for large values of $m$. As an alternative approach, one may consider to sample $K_{m}^{(n)}$ by relying on the predictive distribution, or generative scheme, of the Pitman-Yor prior \citep{Pit(95)}; see Appendix \ref{predictive}. The use of the predictive distribution reduces the problem of Monte Carlo sampling $K_{m}^{(n)}$ to the problem of sampling $m$ Bernoulli random variables, which can be easily performed for any value of $n$ and $m$ \citep{Bal(23)}; in particular, for $\alpha=0$ the Bernoulli random variables are independent. This approach is reported in Algorithm \ref{alg:k}.

\begin{algorithm}
\caption{Monte Carlo sampling $K_{m}^{(n)}$}\label{alg:k}
\begin{algorithmic}
\Require{$n$, $K_{n}$, $m$, $\alpha$, $\theta$} 
\Function{MonteCarloK}{$n$, $K_{n}$, $m$, $\alpha$, $\theta$}
  \State {$K$ $\gets$ {$K_{n}$}}
    \For{$i \gets 0$ to $m-1$}                    
        \State {$b$ $\gets$ {$\text{Random sample from Bernoulli}\left(\frac{\theta+\alpha K}{\theta+n+i}\right)$}}
         \State {$K$ $\gets$ {$K+b$}}
    \EndFor
    \State \Return {$K_{m}^{(n)}=K-K_{n}$}
\EndFunction
\end{algorithmic}
\end{algorithm}

\subsection{Large $m$ asymptotic credible intervals}\label{sec 2.3}

For $\alpha\in(0,1)$, \citet{Fav(09)} proposed a method to derive large $m$ asymptotic credible intervals for $K_{n,m}$. In particular, \citet[Proposition 2]{Fav(09)} shows that, as $m\rightarrow+\infty$
\begin{equation}\label{aslimit}
\frac{K_{m}^{(n)}}{(\theta+n+m)^{\alpha}-(\theta+n)^{\alpha}}\stackrel{\text{w}}{\longrightarrow}S_{\alpha,\theta}^{(n,j)}\stackrel{\text{d}}{=}B_{j+\theta/\alpha,n/\alpha-j}S_{\alpha,(\theta+n)/\alpha},
\end{equation}
where $B_{j+\theta/\alpha,n/\alpha-j}$ and $S_{\alpha,(\theta+n)/\alpha}$ are independent random variables such that: i) $B_{a,b}$ is Beta distributed with parameter $a,b>0$; ii) $S_{\alpha,q}$ is Mittag-Leffler distributed with parameter $\alpha\in(0,1)$ and $q>0$, i.e. with density function $f_{S_{\alpha,q}}(s)\propto s^{q-1/\alpha-1}f_{\alpha}(y^{1/\alpha})$, where $f_{\alpha}$ is the positive $\alpha$-Stable density \citep{Zol(86)}. From \eqref{aslimit}, if $c_{\alpha,\theta,n}(m)=(\theta+n+m)^{\alpha}-(\theta+n)^{\alpha}$ then
\begin{equation}
K_{m}^{(n)}\stackrel{d}{\approx} c_{\alpha,\theta,n}S_{\alpha,\theta}^{(n,j)},
\label{ML approx}
\end{equation}
namely for $m$ sufficiently large the distribution of $K_{m}^{(n)}$ is approximated by means of the distribution of $c_{\alpha,\theta,n}(m)S_{\alpha,\theta}^{(n,j)}$. In particular, the scaling $c_{\alpha,\theta,n}(m)$ is determined in such a way that
\begin{displaymath}
\hat{K}_{n,m}=\E[K_{m}^{(n)}]=\E[c_{\alpha,\theta,n}S_{\alpha,\theta}^{(n,j)}].
\end{displaymath}
Mittag-Leffler credible intervals for $K_{n,m}$, centered on the BNP estimator $\hat{K}_{n,m}$, are then derived by Monte Carlo sampling $c_{\alpha,\theta,n}(m)S_{\alpha,\theta}^{(n,j)}$. We refer to \citet{Qu(21)} for recent developments on (exact) sampling $S_{\alpha,(\theta+n)/\alpha}$, improving over the original strategy of \citet{Fav(09)}. 

Instead, for $\alpha=0$, it follows from \citet[Theorem 2.3]{Kor(73)} that, as $m\rightarrow+\infty$
\begin{equation}\label{lim_dp_k}
\frac{K_{m}^{(n)}}{\log m}\stackrel{\text{w}}{\longrightarrow}\theta,
\end{equation}
namely the posterior distribution has a large $m$ limiting behaviour that is degenerate at $\theta>0$. See also \citet{Bal(23)} for details. Such a degenerate limiting behaviour prevents from extending the Monte Carlo sampling procedure of \citet{Fav(09)} to the case $\alpha=0$.


\section{Gaussian credible intervals}\label{sec3}

For the BNP approach to the unseen-species problem, we present a new method to derive large $m$ asymptotic credible intervals for $K_{n,m}$, which improves over the method of \citet{Fav(09)}: firstly, it allows to deal with $\alpha\in[0,1)$, thus including the case $\alpha=0$; secondly, it avoids the use of Monte Carlo sampling, enhancing computational efficiency. Following the notation of Section \ref{sec2}, for $\alpha\in[0,1)$ and $\theta>-\alpha$ let $\mathbf{X}_{n}$ by a collection of $n\geq1$ random samples from $\text{PYP}(\alpha,\theta)$, namely the model \eqref{eq:exchangeable_model}, such that $\mathbf{X}_{n}$ features $K_{n}=j$ species with (empirical) frequencies $(N_{1,n},\ldots,N_{1,K_{n}})=(n_{1},\ldots,n_{j})$. Our methodology relies on a large $m$ approximation of the posterior distribution of $K_{n,m}$, given $\mathbf{X}_{n}$, which is obtained assuming that both the sampling information $(n,j)$ and the parameter $\theta$ are large. In particular, we set $n=\nu m$, $j=\rho m$ and $\theta=\tau m$, with $\nu,\rho,\tau>0$, and provide a (weak) law of large numbers (LLN) and a Gaussian central limit theorem (CLT) for $K_{m}^{(n)}$, as $m\rightarrow+\infty$. The CLT is then applied to derive a large $m$ Gaussian approximation of the distribution of $K_{m}^{(n)}$. Here, we describe the approach that leads to the Gaussian credible intervals for $K_{n,m}$.

\subsection{Preliminary results}

For $\alpha\in[0,1)$ and $\theta>-\alpha$, let $K_{m}^{\ast}$ be the (random) number of species in $m\geq1$ random samples from $\text{PYP}(\alpha,\theta+n)$, such that $K^{\ast}_{m}\in\{1,\ldots,m\}$, and let $B_{a,b}$ be a Beta random variable with parameter $a,b>0$. If we denote by $Q(n, p)$ a Binomial random variable with parameter $n\in\mathbb{N}$ and $p\in(0,1)$, then according to \citet[Proposition 1]{Bal(23)} there hold:
\begin{itemize}
\item[i)] for $\alpha\in(0,1)$
\begin{equation}\label{bino_rap}
K_{m}^{(n)}\stackrel{\text{d}}{=}Q\left(K_{m}^{\ast},B_{\frac{\theta}{\alpha}+j,\frac{n}{\alpha}-j}\right);
\end{equation}
\item[ii)] for $\alpha=0$
\begin{equation}\label{bino_rap0}
K_{m}^{(n)}\stackrel{\text{d}}{=}Q\left(K_{m}^{\ast},\frac{\theta}{\theta+n}\right).
\end{equation}
\end{itemize}
As discussed in \citet{Bal(23)}, the compound Binomial representations \eqref{bino_rap}-\eqref{bino_rap0} follow from the quasi-conjugacy and conjugacy properties of the Pitman-Yor and Dirichlet priors, respectively. See also \citet{Dol(20),Dol(21)} for a more detailed account on \eqref{bino_rap}.

The next theorem from \citet{Con(24)} provides a LLN and a Gaussian CLT for $K^{\ast}_{m}$, as $m\rightarrow+\infty$, under the assumption that $\theta+n$ increases linearly in $m$, i.e. $\theta+n=\lambda m$, with $\lambda>0$.

\begin{thm}\label{thm_main}
For $m\in\mathbb{N}$ let $K^{\ast}_{m}\in\{1,\ldots,m\}$ be the (random) number of species in $m$ random samples from $\text{PYP}(\alpha,\theta+n)$, such that: $\alpha\in[0,1)$ and $\theta+n=\lambda m$, for some $\lambda>0$. If
\begin{displaymath}
\mathfrak{m}_{\alpha, \lambda}=\begin{cases} \frac{\lambda}{\alpha}\left[\left(1+\frac{1}{\lambda}\right)^{\alpha}-1\right] & \text{for } \alpha \in (0, 1)\\[0.4cm]
\lambda \log \left(1+\frac{1}{\lambda}\right) &  \text{for } \alpha = 0
\end{cases} 
\end{displaymath}
and
\begin{displaymath}
\mathfrak{s}_{\alpha, \lambda}^2= \begin{cases}  \frac{\lambda}{\alpha}\left[\left(1+\frac{1}{\lambda}\right)^{2\alpha}\left(1-\frac{\alpha}{1+\lambda}\right)-\left(1+\frac{1}{\lambda}\right)^{\alpha}\right]& \text{for } \alpha \in (0, 1)\\[0.4cm]
\lambda \log \left(1 + \frac{1}{\lambda}\right)  - \frac{\lambda}{\lambda +1}&  \text{for } \alpha = 0,
\end{cases}
\end{displaymath}
then, as $m\rightarrow+\infty$ there hold:
\begin{itemize}
\item[i)]
\begin{equation}\label{mom1}
\nonumber \mathbb{E}\left[K_m^*\right] =m\mathfrak{m}_{\alpha, \lambda} + O(1)
\end{equation}
and
\begin{equation}\label{mom2}
 \operatorname{Var}(K_m^*) =m \mathfrak{s}_{\alpha, \lambda}^2 + O(1);
\end{equation}
\item[ii)] 
\begin{equation}\label{lln}
\frac{K_{m}^*}{m}\stackrel{\text{p}}{\longrightarrow}\mathfrak{m}_{\alpha, \lambda};
\end{equation}
\item[iii)]
\begin{equation}\label{clt}
\frac{K_{m}^*-m\mathfrak{m}_{\alpha, \lambda}}{\sqrt{m\mathfrak{s}_{\alpha, \lambda}^{2}}}\stackrel{\text{w}}{\longrightarrow}N(0,1).
\end{equation}
\end{itemize}
Furthermore, for any $\lambda>0$ there hold that $\mathfrak{m}_{0, \lambda}=\lim_{\alpha\rightarrow0}\mathfrak{m}_{\alpha, \lambda}$ and also that $\mathfrak{s}_{0, \lambda}^2=\lim_{\alpha\rightarrow0}\mathfrak{s}_{\alpha, \lambda}^2$.
\end{thm}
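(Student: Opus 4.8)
The plan is to exploit the sequential (Chinese-restaurant) construction of $\mathrm{PYP}(\alpha,\theta+n)$. Write $K_m^*=\sum_{i=1}^m\xi_i$, where $\xi_i=\mathbf{1}[\text{the }i\text{-th draw is a new species}]$ and, with $\theta'=\theta+n=\lambda m$ and $\mathcal F_i=\sigma(X_1,\dots,X_i)$, the predictable probabilities are $p_i=\mathbb{E}[\xi_i\mid\mathcal F_{i-1}]=(\theta'+\alpha K_{i-1}^*)/(\theta'+i-1)$. Put $D_i=\xi_i-p_i$, so that $M_i=\sum_{l\le i}D_l$ is a bounded martingale. Subtracting means in $K_i^*=K_{i-1}^*+\xi_i$ gives the linear fluctuation recursion $Y_i:=K_i^*-\mathbb{E}[K_i^*]=Y_{i-1}\big(1+\tfrac{\alpha}{\theta'+i-1}\big)+D_i$, and dividing by $\phi_i:=\prod_{r=1}^i(1+\tfrac{\alpha}{\theta'+r-1})$ shows that $Y_i/\phi_i=\sum_{l\le i}D_l/\phi_l$ is itself a martingale. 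Hence $Y_i=\sum_{l\le i}(\phi_i/\phi_l)D_l$ with deterministic, uniformly bounded weights ($1\le\phi_i\le\phi_m\to(1+1/\lambda)^\alpha$). This single identity underlies every part of the argument.

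For part (i) I would read the first moment off the closed form $\mathbb{E}[K_m^*]=(\theta'/\alpha)\big((\theta'+\alpha)_{(m)}/(\theta')_{(m)}-1\big)$, the empty-initial-sample specialization ($n=j=0$, parameter $\theta'$) of \eqref{est_py_k}, together with the analogous closed form for the second factorial moment. Writing $\log\big((\theta'+j\alpha)_{(m)}/(\theta')_{(m)}\big)=\sum_{i=0}^{m-1}\log(1+j\alpha/(\lambda m+i))$ and applying a second-order Euler--Maclaurin expansion gives $(\theta'+j\alpha)_{(m)}/(\theta')_{(m)}=(1+1/\lambda)^{j\alpha}\big(1+O(1/m)\big)$; substituting $j=1$ yields $\mathbb{E}[K_m^*]=m\mathfrak m_{\alpha,\lambda}+O(1)$, and combining $j=1,2$ and simplifying yields \eqref{mom2}. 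The delicate point is carrying the expansion far enough to certify an $O(1)$ rather than merely $o(m)$ error, since the prefactor $\theta'/\alpha=\lambda m/\alpha$ amplifies any $O(1/m)$ slack into $O(1)$.

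The constants in parts (ii)--(iii) are organized by the fluid limit: the heuristic $k'(t)=(\lambda+\alpha k(t))/(\lambda+t)$ with $k(0)=0$ integrates to $k(t)=(\lambda/\alpha)[(1+t/\lambda)^\alpha-1]$, so $\mathfrak m_{\alpha,\lambda}=k(1)$, the limiting new-species rate is $\pi(t)=(1+t/\lambda)^{\alpha-1}$ and the limiting weight is $w(t)=((1+1/\lambda)/(1+t/\lambda))^\alpha$. For \eqref{lln}, Doob's $L^2$ maximal inequality applied to $\sum_{l\le i}D_l/\phi_l$ gives $\mathbb{E}[\max_{i\le m}Y_i^2]=O(m)$, so $\max_{i\le m}|K_i^*-\mathbb{E}[K_i^*]|/m=O_{p}(m^{-1/2})$; combined with the uniform convergence $\mathbb{E}[K_i^*]/m\to k(i/m)$ (an Euler-scheme comparison of the deterministic moment recursion with its ODE), this yields the uniform law $\sup_{i\le m}|K_i^*/m-k(i/m)|\stackrel{\mathrm{p}}{\longrightarrow}0$, hence \eqref{lln}. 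For \eqref{clt} I would treat $Y_m/\sqrt m=\sum_{i=1}^m(\phi_m/\phi_i)D_i/\sqrt m$ as a martingale-difference array and invoke the martingale central limit theorem: the conditional Lindeberg condition is automatic because every summand is $O(m^{-1/2})$, while the conditional variance $m^{-1}\sum_i(\phi_m/\phi_i)^2 p_i(1-p_i)$ converges in probability---after replacing $p_i$ by $\pi(i/m)$ via the uniform law---to the Riemann integral $\int_0^1 w(t)^2\pi(t)(1-\pi(t))\,\mathrm dt$, which evaluates exactly to $\mathfrak s_{\alpha,\lambda}^2$. Since the centering differs from $m\mathfrak m_{\alpha,\lambda}$ by $O(1)$, the shift is negligible after division by $\sqrt m$, giving \eqref{clt}.

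The case $\alpha=0$ runs in parallel but is simpler: the probabilities $\theta'/(\theta'+i-1)$ no longer involve $K_{i-1}^*$, so $K_m^*$ is an exact sum of independent Bernoulli variables and the moment, LLN and CLT assertions follow from direct summation and the Lindeberg--Feller theorem, the integrals collapsing to the stated logarithmic formulas. The final continuity claim is a Taylor expansion in $\alpha$: writing $(1+1/\lambda)^\alpha=e^{\alpha L}$ with $L=\log(1+1/\lambda)$, one gets $\mathfrak m_{\alpha,\lambda}=\lambda L+O(\alpha)\to\mathfrak m_{0,\lambda}$, and after the $O(\alpha^{-1})$ and $O(1)$ terms cancel, $\mathfrak s_{\alpha,\lambda}^2=\lambda L-\lambda/(1+\lambda)+O(\alpha)\to\mathfrak s_{0,\lambda}^2$. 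I expect the main obstacle to be making the conditional-variance convergence in \eqref{clt} rigorous, since it rests on upgrading the $L^2$ bound into a genuinely uniform law of large numbers for the self-referential probabilities $p_i$; a secondary difficulty is the bookkeeping required to pin the moment errors at $O(1)$ rather than $o(m)$.
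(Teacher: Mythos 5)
Your proposal is correct in its essentials, but the comparison with ``the paper's own proof'' is moot: the paper never proves Theorem \ref{thm_main}. It imports it wholesale from \citet{Con(24)}, and only the downstream result, Theorem \ref{thm_main_post}, is proved here, by combining the cited theorem with the compound-binomial representations \eqref{bino_rap}--\eqref{bino_rap0} and a Berry--Esseen bound for the binomial stage (Proposition \ref{BE_Q}). Judging from the paper's own remark that its proof of Theorem \ref{thm_main_post} follows \citet[Section 2.3]{Con(24)} with $K_m^*$ playing the role of $Z_n$, the cited proof is of conditional-decomposition/Berry--Esseen type, so your Chinese-restaurant martingale argument is a genuinely different, self-contained route. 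Its key steps check out: the recursion $Y_i = Y_{i-1}\bigl(1+\tfrac{\alpha}{\theta'+i-1}\bigr) + D_i$ and the weighted identity $Y_m=\sum_{l\le m}(\phi_m/\phi_l)D_l$ are exact; the second-order expansion of the Pochhammer ratios does pin the moments at $O(1)$ precision (the $1/m$ corrections $a_j = j\alpha(1-j\alpha)/(2\lambda(\lambda+1))$ to $(\theta'+j\alpha)_{(m)}/(\theta')_{(m)}$ combine as $a_2-2a_1=-\alpha^2/(\lambda(\lambda+1))$, which generates precisely the $-\tfrac{\alpha}{1+\lambda}(1+1/\lambda)^{2\alpha}$ piece of $\mathfrak{s}^2_{\alpha,\lambda}$); and the limiting conditional variance $\int_0^1 w(t)^2\pi(t)(1-\pi(t))\,\mathrm{d}t$, with $\pi(t)=(1+t/\lambda)^{\alpha-1}$ and $w(t)=\bigl(\tfrac{\lambda+1}{\lambda+t}\bigr)^{\alpha}$, indeed evaluates to $\tfrac{\lambda}{\alpha}\bigl[(1+1/\lambda)^{2\alpha}\bigl(1-\tfrac{\alpha}{1+\lambda}\bigr)-(1+1/\lambda)^{\alpha}\bigr]$. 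The two points you flag as delicate are the right ones, and your plans for them are sound: Doob's maximal inequality on $\sum_{l\le i}D_l/\phi_l$ together with the $O(1)$-accurate mean expansion upgrades the LLN to a uniform-in-$i$ law, which is exactly what legitimizes replacing $p_i$ by $\pi(i/m)$ in the conditional-variance sum, while conditional Lindeberg is trivial since every increment is $O(m^{-1/2})$. Two caveats if you write this up: (a) the weights $\phi_m/\phi_i$ depend on $m$, so you must invoke a martingale CLT for triangular arrays (Hall--Heyde type), not a fixed-filtration version; (b) your soft argument yields weak convergence without rates, whereas the approach of \citet{Con(24)} is quantitative --- but since the paper only ever uses \eqref{lln} and \eqref{clt} from this theorem (in Propositions \ref{lem:lem3_ter} and \ref{lem:lem4_ter}; the rates it needs are for $Q_m(z)$, not $K_m^*$), your version would suffice for this paper's purposes.
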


\subsection{Main result}

By relying on the compound Binomial representations \eqref{bino_rap}-\eqref{bino_rap0} and Theorem \ref{thm_main}, the next theorem provides a LLN and a Gaussian CLT for $K_{m}^{(n)}$. See Appendix \ref{proof1} and Appendix \ref{proof2} for the proof.

\begin{thm}\label{thm_main_post}
For $n,m\in\mathbb{N}$ let $K^{(n)}_{m}\in\{0,1,\ldots,m\}$ be distributed according to the posterior distribution of $K_{n,m}$, given $\mathbf{X}_{n}$ from $\text{PYP}(\alpha,\theta)$, with $\mathbf{X}_{n}$ featuring $K_{n}=j$ species, such that: $n = \nu m$, $j = \varrho m$ and $\theta = \tau m$, for some $\nu,\rho,\tau>0$, and $\tau + \nu = \lambda$, for some $\lambda>0$. If
\begin{displaymath}
\mathscr{M}_{\alpha, \tau, \nu, \varrho}=\begin{cases}  \frac{\tau + \varrho \alpha }{\alpha } \left[ -1 + \left(  \frac{\lambda +1 }{\lambda}\right)^\alpha\right] & \text{for } \alpha \in (0, 1)\\[0.4cm]
\tau \log \left(1+\frac{1}{\lambda}\right) &  \text{for } \alpha = 0
\end{cases} 
\end{displaymath}
and
\begin{displaymath}
\mathscr{S}^{2}_{\alpha, \tau, \nu, \varrho} = \begin{cases} \frac{\tau + \varrho \alpha }{\lambda } \left(  \frac{\lambda +1 }{\lambda}\right)^\alpha \left\{ \frac{\lambda } {\alpha} \left[-1 + \left(  \frac{\lambda +1 }{\lambda}\right)^\alpha \right] - \frac{\tau + \varrho \alpha }{\lambda+1} \left(  \frac{\lambda +1 }{\lambda}\right)^\alpha\right\} & \text{for } \alpha \in (0, 1)\\[0.4cm]
\tau \log\left(1+ \frac{1}{\lambda}\right)  - \frac{\tau^2}{\lambda(\lambda+1)}&  \text{for } \alpha = 0,
\end{cases}
\end{displaymath}
then, as $m\rightarrow+\infty$ there hold:
\begin{itemize}
\item[i)] 
\begin{equation}\label{mom_post_1}
\nonumber \mathbb{E}\left[K_m^{(n)}\right]=m\mathscr{M}_{\alpha, \tau, \nu, \varrho}+ O(1)
\end{equation}
and
\begin{equation}\label{mom_post_2}
 \operatorname{Var}\left(K_m^{(n)}\right)=m \mathscr{S}_{\alpha, \tau, \nu, \varrho}^2 + O(1);
\end{equation}
\item[ii)]
\begin{equation}\label{lln_post}
\frac{K_m^{(n)}}{m}\stackrel{\text{p}}{\longrightarrow}\mathscr{M}_{\alpha, \tau, \nu, \varrho};
\end{equation}
\item[iii)]
\begin{equation}\label{clt_post}
\frac{K_m^{(n)}-m\mathscr{M}_{\alpha, \tau, \nu, \varrho}}{\sqrt{m\mathscr{S}_{\alpha, \tau, \nu, \varrho}^{2}}}\stackrel{\text{w}}{\longrightarrow}N(0,1).
\end{equation}
\end{itemize}
Furthermore, for any $\lambda>0$ there hold that $\mathscr{M}_{0, \tau, \nu, \varrho}=\lim_{\alpha\rightarrow0}\mathscr{M}_{\alpha, \tau, \nu, \varrho}$ and also that $\mathscr{S}_{0, \tau, \nu, \varrho}^2=\lim_{\alpha\rightarrow0}\mathscr{S}_{\alpha, \tau, \nu, \varrho}^2$.
\end{thm}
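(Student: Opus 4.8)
The plan is to transport the law of large numbers and Gaussian CLT of Theorem \ref{thm_main} for $K_m^\ast$ through the compound Binomial representations \eqref{bino_rap}--\eqref{bino_rap0}. Since $\theta+n=(\tau+\nu)m=\lambda m$, Theorem \ref{thm_main} applies directly to $K_m^\ast$, with coefficients $\mathfrak{m}_{\alpha,\lambda}$ and $\mathfrak{s}_{\alpha,\lambda}^2$. For $\alpha\in(0,1)$ I write $K_m^{(n)}\stackrel{\text{d}}{=}Q(K_m^\ast,B)$ with $B=B_{\theta/\alpha+j,\,n/\alpha-j}$ independent of $K_m^\ast$; its Beta parameters $a=m(\tau+\varrho\alpha)/\alpha$ and $b=m(\nu-\varrho\alpha)/\alpha$ grow linearly in $m$ with $a+b=\lambda m/\alpha$, so $\mathbb{E}[B]=p^\ast:=(\tau+\varrho\alpha)/\lambda$ exactly and $\operatorname{Var}(B)=C/m+o(1/m)$ with $C=\alpha p^\ast(1-p^\ast)/\lambda$; thus $B$ concentrates at $p^\ast$ at the rate $m^{-1/2}$. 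For $\alpha=0$ the mixing probability $\theta/(\theta+n)=\tau/\lambda=p^\ast$ is deterministic, i.e.\ $C=0$, so this case is a specialization of the same formulae.

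For part (i) I condition on $(K_m^\ast,B)$, under which $K_m^{(n)}$ is Binomial, and use independence. The mean is $\mathbb{E}[K_m^{(n)}]=\mathbb{E}[K_m^\ast]\,\mathbb{E}[B]=(m\mathfrak{m}_{\alpha,\lambda}+O(1))p^\ast=m\mathscr{M}_{\alpha,\tau,\nu,\varrho}+O(1)$, since $\mathscr{M}_{\alpha,\tau,\nu,\varrho}=p^\ast\mathfrak{m}_{\alpha,\lambda}$. The law of total variance gives $\operatorname{Var}(K_m^{(n)})=\mathbb{E}[K_m^\ast]\,\mathbb{E}[B(1-B)]+\operatorname{Var}(K_m^\ast B)$, and expanding $\operatorname{Var}(K_m^\ast B)=\operatorname{Var}(K_m^\ast)(p^\ast)^2+\mathbb{E}[(K_m^\ast)^2]\operatorname{Var}(B)$ and collecting the $O(m)$ contributions yields the coefficient $(p^\ast)^2\mathfrak{s}_{\alpha,\lambda}^2+\mathfrak{m}_{\alpha,\lambda}^2C+\mathfrak{m}_{\alpha,\lambda}p^\ast(1-p^\ast)$. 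The remaining task is the algebraic identity that this equals $\mathscr{S}_{\alpha,\tau,\nu,\varrho}^2$; writing $r=(1+1/\lambda)^\alpha$, the useful reductions are $\nu-\varrho\alpha=\lambda(1-p^\ast)$, $\mathfrak{s}_{\alpha,\lambda}^2=r\mathfrak{m}_{\alpha,\lambda}-\lambda r^2/(1+\lambda)$ and $\alpha\mathfrak{m}_{\alpha,\lambda}^2/\lambda=(r-1)\mathfrak{m}_{\alpha,\lambda}$, after which the terms collapse via $p^\ast+(1-p^\ast)=1$. Part (ii) then follows by Slutsky: the Binomial weak law gives $K_m^{(n)}/K_m^\ast\to B$ in probability, and combining with $K_m^\ast/m\to\mathfrak{m}_{\alpha,\lambda}$ and $B\to p^\ast$ yields $K_m^{(n)}/m\to\mathscr{M}_{\alpha,\tau,\nu,\varrho}$.

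The crux is part (iii), which I would prove via characteristic functions. Conditioning gives $\mathbb{E}[e^{itK_m^{(n)}}]=\mathbb{E}[(1+B(e^{it}-1))^{K_m^\ast}]$; setting $t=s/\sqrt m$ and expanding the logarithm, $K_m^\ast\log(1+B(e^{it}-1))=\tfrac{is}{\sqrt m}BK_m^\ast-\tfrac{s^2}{2m}B(1-B)K_m^\ast+o_P(1)$, where the remainder is $o_P(1)$ because $K_m^\ast=O_P(m)$. The quadratic term converges in probability to $\tfrac12 s^2p^\ast(1-p^\ast)\mathfrak{m}_{\alpha,\lambda}$ by (ii), while the linear term, once the deterministic centering $m\mathscr{M}_{\alpha,\tau,\nu,\varrho}$ is extracted, is governed by $(BK_m^\ast-mp^\ast\mathfrak{m}_{\alpha,\lambda})/\sqrt m$. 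Writing the independent fluctuations $\kappa=(K_m^\ast-m\mathfrak{m}_{\alpha,\lambda})/\sqrt m\to N(0,\mathfrak{s}_{\alpha,\lambda}^2)$ (Theorem \ref{thm_main}) and $\beta=\sqrt m(B-p^\ast)\to N(0,C)$ (asymptotic normality of a Beta with linearly growing parameters), this quantity equals $p^\ast\kappa+\mathfrak{m}_{\alpha,\lambda}\beta+m^{-1/2}\beta\kappa$ and hence converges weakly to a centered Gaussian of variance $(p^\ast)^2\mathfrak{s}_{\alpha,\lambda}^2+\mathfrak{m}_{\alpha,\lambda}^2C$. Assembling the three asymptotically independent contributions, the characteristic function of $(K_m^{(n)}-m\mathscr{M}_{\alpha,\tau,\nu,\varrho})/\sqrt m$ converges to $\exp(-\tfrac12 s^2\mathscr{S}_{\alpha,\tau,\nu,\varrho}^2)$, which gives \eqref{clt_post} after dividing by $\mathscr{S}_{\alpha,\tau,\nu,\varrho}$.

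The main obstacle is the rigorous control of the expansion remainder and the interchange of limit and expectation in the characteristic-function argument. Since the conditional characteristic function is bounded in modulus by one, the dominated passage to the limit is harmless; the delicate point is to show the exponent's $o_P(1)$ error is negligible uniformly enough, which I would handle by truncating on the event $\{|K_m^\ast-m\mathfrak{m}_{\alpha,\lambda}|\le m^{3/4}\}$ (whose complement has vanishing probability by (i)) and exploiting the independence of $K_m^\ast$ and $B$ to factor the expectation and to pass from marginal to joint weak convergence of $(\kappa,\beta)$. Finally, the relations $\mathscr{M}_{0,\tau,\nu,\varrho}=\lim_{\alpha\to0}\mathscr{M}_{\alpha,\tau,\nu,\varrho}$ and $\mathscr{S}_{0,\tau,\nu,\varrho}^2=\lim_{\alpha\to0}\mathscr{S}_{\alpha,\tau,\nu,\varrho}^2$ follow from the corresponding limits for $\mathfrak{m}_{\alpha,\lambda}$ and $\mathfrak{s}_{\alpha,\lambda}^2$ in Theorem \ref{thm_main}, the continuity of $p^\ast$ in $\alpha$, and $C=\alpha p^\ast(1-p^\ast)/\lambda\to0$, which recovers the deterministic ($\alpha=0$) case.
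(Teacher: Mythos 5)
Your proposal is correct, and parts (i)--(ii) essentially follow the paper's own route: the law of total expectation/variance applied to the compound representations \eqref{bino_rap}--\eqref{bino_rap0} combined with Theorem \ref{thm_main} (your part (ii) uses a ratio-plus-Slutsky argument where the paper uses Chebyshev directly, a minor variant), and the variance algebra you describe is exactly the identity the paper isolates as Proposition \ref{p2}, since your $(p^\ast)^2\mathfrak{s}_{\alpha,\lambda}^2+\mathfrak{m}_{\alpha,\lambda}^2C+\mathfrak{m}_{\alpha,\lambda}p^\ast(1-p^\ast)$ coincides with $\sigma^2(\mathfrak{m}_{\alpha,\lambda})+\mathfrak{s}_{\alpha,\lambda}^2\left(\mu'(\mathfrak{m}_{\alpha,\lambda})\right)^2$. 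Where you genuinely depart from the paper is part (iii). The paper conditions on $K_m^*/m=z$, splits the CDF as $F_m=\mathcal{I}_1^{(m)}+\mathcal{I}_2^{(m)}$, and controls $\mathcal{I}_2^{(m)}$ by a Berry--Esseen bound for the (beta-)binomial $Q_m(z)$ that is uniform over compact $z$-intervals (Proposition \ref{BE_Q}); that bound is the technical heart of the paper, occupying Appendix \ref{sec:d2} for $\alpha\in(0,1)$, where the beta-binomial characteristic function and the Beta density must be expanded with care. You instead work with the unconditional characteristic function $\mathbb{E}[(1+B(e^{it}-1))^{K_m^*}]$ at $t=s/\sqrt m$, Taylor-expand the conditional log-characteristic function, and reduce everything to the joint weak convergence of the independent fluctuations $\kappa=(K_m^*-m\mathfrak{m}_{\alpha,\lambda})/\sqrt m$ and $\beta=\sqrt m(B-p^\ast)$, plus Slutsky and bounded convergence. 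This is sound: the Taylor remainder is deterministically $O(m^{-1/2})$ on your truncation event (whose complement is negligible by Chebyshev and the expansion of $\operatorname{Var}(K_m^*)$), the integrand has modulus at most one so the passage to the limit is harmless, and the only extra input is the asymptotic normality of a Beta with linearly growing parameters---a standard fact (e.g.\ via the Gamma-ratio representation and the delta method) whose limiting variance $C=\alpha p^\ast(1-p^\ast)/\lambda$ you compute correctly. The trade-off: your argument is shorter, avoids the Berry--Esseen machinery altogether, and treats $\alpha=0$ and $\alpha\in(0,1)$ in one stroke (the former being the degenerate case $C=0$, $\beta\equiv 0$); the paper's heavier route delivers a quantitative rate ($m^{-1/6}$ for $\alpha\in(0,1)$, $m^{-1/2}$ for $\alpha=0$) in \eqref{eq:be_Qm}, which is precisely the kind of refinement Section \ref{sec5} flags as useful for future work on exact credible intervals, whereas a pure L\'evy-continuity argument yields no rate.
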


\subsubsection{Sketch of the proof of Theorem \ref{thm_main_post}}

The asymptotic expansions \eqref{mom_post_2} follow by combining \eqref{bino_rap} and \eqref{bino_rap0}, the asymptotic expansions \eqref{mom2} and standard properties of conditional expectation;  see sections \ref{sec:proof_asy_1} and \ref{sec:proof_asy_2} for details. The proof of the LLN \eqref{lln_post} relies on \eqref{mom_post_2} and Chebychev inequality; see section \ref{sec:lln_proof} for details. 

The proof of the CLT  \eqref{clt_post} relies on Theorem \ref{thm_main}, in combination with Proposition \ref{BE_Q} and Proposition \ref{p2} below. In particular, the structure of the proof is the same for $\alpha\in(0,1)$ and for $\alpha=0$, with some differences that are of technical nature, and can be appreciated in the proof of Proposition \ref{BE_Q}, which is deferred to Appendix \ref{sec:d1} (for $\alpha =0$) and to Appendix \ref{sec:d2} (for $\alpha \in (0, 1)$). From \eqref{bino_rap} and \eqref{bino_rap0}, which in the regime $\theta = \tau m ,\,  n = \nu m, \, j = \varrho m$, become
\begin{enumerate}[i)]
\item for $\alpha \in (0, 1)$,
\begin{displaymath} 
 K_m^{(n)} \stackrel{\text{d}}{ =} Q\left(K_m^*, \, B_{\left(\frac{\tau}{\alpha} + \varrho\right)\, m, \,  \left(\frac{\nu}{\alpha} - \varrho\right)\, m }\right)
\end{displaymath}
\item for $\alpha = 0$,
\begin{displaymath}
 K_m^{(n)} \stackrel{\text{d}}{ =} Q\left(K_m^*, \, \frac{\tau}{\lambda}\right),
 \end{displaymath}
 \end{enumerate}
 define

\begin{equation}
    \label{eq: def_Qm}
    Q_m(z) = \begin{cases} Q\left(\left\lfloor mz \right\rfloor, \, B_{\left(\frac{\tau}{\alpha} + \varrho\right)\, m, \,  \left(\frac{\nu}{\alpha} - \varrho\right)\, m }\right) & \text{if } \alpha \in (0, 1)\\
    Q\left(\left\lfloor mz \right\rfloor, \, \frac{\tau}{\tau + \nu}\right) & \text{if } \alpha =0
    \end{cases}
\end{equation}
where $\left\lfloor mz \right\rfloor : = \max\left(k \in \mathbb{N} \, : \, k \le mz\right)$. 

The next propositions are critical for the proof of Theorem \ref{thm_main_post}, showing how Theorem \ref{thm_main} interplay with the asymptotic expansions \eqref{mom_post_2}. The proof is deferred to  Appendix \ref{sec:d1} and to Appendix \ref{sec:d2}.
 
 \begin{proposition}[Berry-Esseen theorem for $Q_m(z)$] \label{BE_Q}
Let $\alpha\in[0,1)$ and $z>0$. Further, let $\mu:  (0, +\infty) \to \mathbb{R}$ and , $\sigma :  (0, +\infty) \to \mathbb{R}$ be functions defined as
\begin{equation*}
    \mu(z)  = z \cdot \frac{\tau + \varrho \alpha}{\tau + \nu}  
\end{equation*}
and
\begin{equation*}
    \sigma^2(z)  = z \cdot \frac{(\tau + \varrho \alpha)(\nu - \varrho \alpha)}{(\tau +\nu)^2}  \left[ 1+ \frac{\alpha z}{\tau + \nu} \right] 
\end{equation*}
Then, as $m \to +\infty$, 
    \begin{equation}
        \mathbb{E}[Q_m(z)] = m \, \mu(z) + O(1)
        \label{mean_Q}
    \end{equation}
    \begin{equation}
       \operatorname{Var}(Q_m(z)) = m \, \sigma^2(z) + O(1)
       \label{var_Q}
    \end{equation}
If
    \begin{equation*}
        V_m(z): = \frac{Q_m(z) - m\,  \mu(z)}{\sqrt{m \, \sigma^2(z)}}.
    \end{equation*}
then, for every $0< \zeta_0 < \mathfrak{m}_{\alpha, \lambda}< \zeta_1 <+\infty $ there exist  $\bar{m} = \bar{m}(\zeta_0, \zeta_1) \in \mathbb{N}$, and a continuous function $C = C_{\zeta_0, \zeta_1}: \left[\zeta_0, \zeta_1\right] \rightarrow (0, +\infty)$, such that for every $z \in \left[\zeta_0, \zeta_1\right]$ and every $m \ge \bar{m}$
    \begin{equation}
    \label{eq:be_Qm}
\left\|F_{V_m(z)}  - \Phi \right\|_\infty   \le C(z) \, \phi(m).
    \end{equation}
    where
    \begin{displaymath}
    \phi(m) = \begin{cases} m^{-\frac{1}{6}}  & \text{if } \alpha \in (0, 1)\\
    m^{-\frac{1}{2}} & \text{if } \alpha =0
    \end{cases}
    \end{displaymath}
 This implies in particular that for every $z \in \left[\zeta_0, \zeta_1\right]$, $ V_m(z) \stackrel{w}{\longrightarrow} N(0,1)$ as $m \rightarrow + \infty$.
\end{proposition}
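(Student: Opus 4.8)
The plan is to separate the moment identities \eqref{mean_Q}--\eqref{var_Q} from the uniform Berry--Esseen bound \eqref{eq:be_Qm}, and to run the two cases $\alpha=0$ and $\alpha\in(0,1)$ in parallel, the former being essentially classical and the latter carrying all the difficulty. For the moments I would condition on the mixing variable. When $\alpha=0$ the law of $Q_m(z)$ is $\mathrm{Bin}(\lfloor mz\rfloor,\tau/(\tau+\nu))$, so $\mathbb{E}[Q_m(z)]=\lfloor mz\rfloor\tfrac{\tau}{\tau+\nu}$ and $\operatorname{Var}(Q_m(z))=\lfloor mz\rfloor\tfrac{\tau\nu}{(\tau+\nu)^2}$, and replacing $\lfloor mz\rfloor$ by $mz+O(1)$ gives \eqref{mean_Q}--\eqref{var_Q} with $\alpha=0$. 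When $\alpha\in(0,1)$, write $B=B_{(\tau/\alpha+\varrho)m,\,(\nu/\alpha-\varrho)m}$ and set $\bar p=\mathbb{E}[B]=\tfrac{\tau+\varrho\alpha}{\tau+\nu}$, $\bar q=1-\bar p=\tfrac{\nu-\varrho\alpha}{\tau+\nu}$. The tower property gives $\mathbb{E}[Q_m(z)]=\lfloor mz\rfloor\,\bar p$, and the law of total variance gives $\operatorname{Var}(Q_m(z))=\lfloor mz\rfloor\,\mathbb{E}[B(1-B)]+\lfloor mz\rfloor^2\operatorname{Var}(B)$. Using the Beta moments $\operatorname{Var}(B)=\tfrac{\alpha\bar p\bar q}{(\tau+\nu)m}+O(m^{-2})$ and $\mathbb{E}[B(1-B)]=\bar p\bar q-\operatorname{Var}(B)$, the two pieces combine to $mz\,\bar p\bar q\big[1+\tfrac{\alpha z}{\tau+\nu}\big]+O(1)$, which is exactly $m\sigma^2(z)+O(1)$; the $O(1)$ terms are uniform for $z$ in a compact set. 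This is routine bookkeeping with Beta moments.

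The substantive part is the Berry--Esseen bound. For $\alpha=0$, $Q_m(z)$ is a sum of $\lfloor mz\rfloor$ i.i.d.\ Bernoulli variables with success probability bounded away from $0$ and $1$, so the classical Berry--Esseen theorem applies directly and yields $\|F_{V_m(z)}-\Phi\|_\infty\le C(z)/\sqrt{\lfloor mz\rfloor}$. Since $z\ge\zeta_0>0$, the constant $C(z)\asymp 1/\sqrt{z}$ (the ratio of the third absolute central moment to the cube of the standard deviation of the Bernoulli summand, divided by $\sqrt z$) is continuous and bounded on $[\zeta_0,\zeta_1]$, which gives the rate $\phi(m)=m^{-1/2}$.

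For $\alpha\in(0,1)$ I would use a conditioning-and-smoothing argument. Conditionally on $B=p$, the variable $Q_m(z)$ is $\mathrm{Bin}(\lfloor mz\rfloor,p)$, so on the ``good'' event $\{|B-\bar p|\le\delta_m\}$ the conditional Berry--Esseen theorem approximates $F_{Q_m(z)\mid B}$ by the Gaussian with mean $\lfloor mz\rfloor B$ and variance $\lfloor mz\rfloor B(1-B)$, with error $O(m^{-1/2})$ uniformly in $p$ over that interval; off the good event the contribution is bounded by the Beta concentration $\Pr(|B-\bar p|>\delta_m)$. Integrating the conditional Gaussian over the law of $B$ then exhibits $Q_m(z)$ as a mixture of Gaussians whose means $\lfloor mz\rfloor B$ fluctuate at scale $\sqrt m$ (this fluctuation is what produces the second, $O(z^2)$, term of $\sigma^2(z)$), while the conditional standard deviation is nearly constant: replacing $\lfloor mz\rfloor B(1-B)$ by $\lfloor mz\rfloor\bar p\bar q$ costs $O(\delta_m)$ in sup-norm, since $u\,\Phi'(u)$ is bounded. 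A convolution step then collapses the mixture to a single Gaussian of the matched variance $m\sigma^2(z)+O(1)$, using that convolution with the (approximately Gaussian) conditional law contracts the sup-distance of distribution functions.

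The main obstacle is precisely this last case. One must choose the truncation threshold $\delta_m$ so as to balance the Beta tail $\Pr(|B-\bar p|>\delta_m)$ against the variance-perturbation error $O(\delta_m)$ and the conditional Berry--Esseen error, and then carry all the constants continuously and uniformly in $z\in[\zeta_0,\zeta_1]$ (the condition $\zeta_0<\mathfrak{m}_{\alpha,\lambda}<\zeta_1$ fixes the range of the normalized argument, while $z\ge\zeta_0>0$ keeps the binomial constant bounded). It is this trade-off, together with the two-scale nature of the $\alpha\in(0,1)$ mixture, that degrades the rate from the $m^{-1/2}$ of the $\alpha=0$ case down to $\phi(m)=m^{-1/6}$. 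Finally, the weak-convergence statement $V_m(z)\stackrel{w}{\longrightarrow}N(0,1)$ is immediate from \eqref{eq:be_Qm}, since $\phi(m)\to 0$ and uniform-norm convergence of distribution functions to the continuous limit $\Phi$ implies convergence in distribution.
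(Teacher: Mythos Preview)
Your handling of the moment asymptotics \eqref{mean_Q}--\eqref{var_Q} and of the $\alpha=0$ Berry--Esseen bound matches the paper's: direct computation with Beta/binomial moments, and the classical Berry--Esseen theorem for i.i.d.\ Bernoulli sums, respectively.

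For $\alpha\in(0,1)$ the paper takes a different route. It works on the characteristic-function side via Esseen's smoothing inequality, Taylor-expands $\varphi_{V_m(z)}(\xi)$ inside the Beta integral, and compares the result to $e^{-\xi^2/2}$ through two auxiliary integrals, the second of which is controlled by a local limit theorem for the rescaled Beta density. Your conditioning-plus-Kolmogorov-distance approach is more probabilistic and cleanly separates the two sources of randomness (the binomial given $B$, and $B$ itself), but it has a genuine gap at the final ``convolution'' step. After the conditional Berry--Esseen, the truncation, and the variance replacement, the mixture is the distribution function of $N(0,\lfloor mz\rfloor\bar p\bar q)+\lfloor mz\rfloor B$. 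The contraction property of convolution in Kolmogorov distance that you invoke only yields
\[
\bigl\|F_{\text{mixture}}-\Phi_{m\mu(z),\,m\sigma^2(z)}\bigr\|_\infty
\;\le\;
\bigl\|F_{\lfloor mz\rfloor B}-\Phi_{\lfloor mz\rfloor\bar p,\;\lfloor mz\rfloor^2\operatorname{Var}(B)}\bigr\|_\infty,
\]
and you have not bounded the right-hand side: a normal approximation for $\sqrt m\,(B-\bar p)$ is still required. That Berry--Esseen bound for the Beta (rate $O(m^{-1/2})$, via the Gamma representation or a direct local CLT) is precisely the content of the paper's Step~5, only phrased there as a pointwise density comparison $\psi_m(x)\to f_{G(z)}(x)$. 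Without it, your argument does not close.

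A secondary looseness concerns the rate. You assert that the trade-off forces $\phi(m)=m^{-1/6}$ but do not exhibit the balance producing it. With the missing Beta normal approximation at rate $m^{-1/2}$ in hand, optimising your $\delta_m$ against the Chebyshev tail $\operatorname{Var}(B)/\delta_m^2\asymp(m\delta_m^2)^{-1}$ and the variance-perturbation error $O(\delta_m)$ gives $\delta_m\asymp m^{-1/3}$ and an overall bound of order $m^{-1/3}$, not $m^{-1/6}$. This is not a contradiction---the paper's exponent $-1/6$ is an artefact of the specific truncations in its smoothing-inequality argument and is not claimed to be sharp---but it means the rate you quote is not the one your own outline would deliver.
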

\begin{proposition} \label{p2}
     For any $\alpha \in [0, 1)$, the following two equalities hold:
    \begin{equation}
        \label{eq:muQ(M)}
        \mu\left(\mathfrak{m}_{\alpha, \lambda}\right) = \mathscr{M}_{\alpha, \tau, \nu, \varrho}
    \end{equation}
    and
    \begin{equation}
        \label{eq:sigmaQ(M)}
        \sigma^2\left(\mathfrak{m}_{\alpha, \lambda}\right) + \mathfrak{s}_{\alpha, \lambda}^{2}\cdot \left(\mu'\left(\mathfrak{m}_{\alpha, \lambda}\right)\right)^2 = \mathscr{S}_{\alpha, \tau, \nu, \varrho}
^2.
    \end{equation}
\end{proposition}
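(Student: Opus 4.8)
The plan is to treat both equalities as algebraic identities obtained by substituting the explicit expressions from Theorem \ref{thm_main} into the explicit functions $\mu$ and $\sigma^2$ supplied by Proposition \ref{BE_Q}, and then matching the result against the closed forms of $\mathscr{M}_{\alpha,\tau,\nu,\varrho}$ and $\mathscr{S}^2_{\alpha,\tau,\nu,\varrho}$ stated in Theorem \ref{thm_main_post}. No new probabilistic input is needed: the content of the proposition is precisely that the mean and variance produced by a first-order (delta-method / law-of-total-variance) expansion of the compound object $Q_m(K_m^\ast/m)$ agree with the coefficients $\mathscr{M}_{\alpha,\tau,\nu,\varrho}$ and $\mathscr{S}^2_{\alpha,\tau,\nu,\varrho}$. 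Throughout I will use the regime constraint $\tau+\nu=\lambda$, and it is convenient to set $a=\tau+\varrho\alpha$, $b=\nu-\varrho\alpha$, so that $a+b=\tau+\nu=\lambda$, together with the shorthand $r=(1+1/\lambda)^{\alpha}=((\lambda+1)/\lambda)^{\alpha}$ in the case $\alpha\in(0,1)$.

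For the first equality \eqref{eq:muQ(M)}, I would simply evaluate the linear function $\mu$ at $z=\mathfrak{m}_{\alpha,\lambda}$. For $\alpha\in(0,1)$ this gives $\mu(\mathfrak{m}_{\alpha,\lambda})=\mathfrak{m}_{\alpha,\lambda}\cdot a/\lambda=\frac{\lambda}{\alpha}(r-1)\cdot a/\lambda=\frac{a}{\alpha}(r-1)$, which is exactly $\mathscr{M}_{\alpha,\tau,\nu,\varrho}$ after writing $1+1/\lambda=(\lambda+1)/\lambda$. For $\alpha=0$ the coefficient $a$ reduces to $\tau$ and $\mathfrak{m}_{0,\lambda}=\lambda\log(1+1/\lambda)$, so $\mu(\mathfrak{m}_{0,\lambda})=\frac{\tau}{\lambda}\cdot\lambda\log(1+1/\lambda)=\tau\log(1+1/\lambda)=\mathscr{M}_{0,\tau,\nu,\varrho}$. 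This step is immediate.

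The second equality \eqref{eq:sigmaQ(M)} carries the real work. I would first record that $\mu$ is linear, so $\mu'(\mathfrak{m}_{\alpha,\lambda})=a/\lambda$ is constant, and then exploit the single key simplification that collapses the variance formula: since $\alpha\,\mathfrak{m}_{\alpha,\lambda}/\lambda=(1+1/\lambda)^{\alpha}-1=r-1$, the bracket in $\sigma^2$ evaluates cleanly as $1+\alpha\,\mathfrak{m}_{\alpha,\lambda}/\lambda=r$. Hence for $\alpha\in(0,1)$ one gets $\sigma^2(\mathfrak{m}_{\alpha,\lambda})=\mathfrak{m}_{\alpha,\lambda}\cdot\frac{ab}{\lambda^2}\cdot r=\frac{ab\,r(r-1)}{\alpha\lambda}$, while $\mathfrak{s}^2_{\alpha,\lambda}\,(\mu'(\mathfrak{m}_{\alpha,\lambda}))^2=\frac{a^2}{\alpha\lambda}\big[r^2(1-\tfrac{\alpha}{1+\lambda})-r\big]$. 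Adding the two, factoring $r$ out of the quadratic-in-$r$ combination, and using $a+b=\lambda$ to turn $ab+a^2=a(a+b)=a\lambda$ reduces the sum to $\frac{a r(r-1)}{\alpha}-\frac{a^2 r^2}{\lambda(\lambda+1)}$, which is precisely the expanded form of $\mathscr{S}^2_{\alpha,\tau,\nu,\varrho}=\frac{a}{\lambda}r\big\{\frac{\lambda}{\alpha}(r-1)-\frac{a}{\lambda+1}r\big\}$. The analogous computation for $\alpha=0$ uses $\sigma^2(z)=z\,\tau\nu/\lambda^2$ (the bracket being $1$) and $\mathfrak{s}^2_{0,\lambda}=\lambda\log(1+1/\lambda)-\lambda/(\lambda+1)$, and the two contributions combine via $\tau\nu+\tau^2=\tau\lambda$ to give $\tau\log(1+1/\lambda)-\tau^2/(\lambda(\lambda+1))=\mathscr{S}^2_{0,\tau,\nu,\varrho}$.

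I expect the only genuine obstacle to be the bookkeeping in the $\alpha\in(0,1)$ case of \eqref{eq:sigmaQ(M)}: the identity is not visually obvious because $\sigma^2$ and $\mathfrak{s}^2_{\alpha,\lambda}$ each contribute terms in $r$ and $r^2$ that must cancel correctly. The two observations that make it routine rather than delicate are the collapse $1+\alpha\,\mathfrak{m}_{\alpha,\lambda}/\lambda=r$ and the structural relation $a+b=\lambda$; once these are in place the remaining manipulation is a finite, deterministic simplification. Finally, I would remark that the $\alpha=0$ formulas are consistent with the $\alpha\to 0$ limits already established in Theorems \ref{thm_main} and \ref{thm_main_post}, so the two cases of the proposition agree on the boundary, which also provides a useful internal check of the computation.
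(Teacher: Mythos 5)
Your proposal is correct and coincides with the paper's approach: the identity is established by direct algebraic substitution of $\mathfrak{m}_{\alpha,\lambda}$ and $\mathfrak{s}^2_{\alpha,\lambda}$ into the explicit $\mu$ and $\sigma^2$, using precisely the two simplifications you isolate — namely $1+\alpha\,\mathfrak{m}_{\alpha,\lambda}/\lambda=\left(\frac{\lambda+1}{\lambda}\right)^{\alpha}$ and $(\tau+\varrho\alpha)+(\nu-\varrho\alpha)=\lambda$ — which is the same computation the paper carries out (via the law of total expectation/variance) in Appendix \ref{proof1} when matching the leading coefficients to $\mathscr{M}_{\alpha,\tau,\nu,\varrho}$ and $\mathscr{S}^2_{\alpha,\tau,\nu,\varrho}$, in both the $\alpha\in(0,1)$ and $\alpha=0$ cases.
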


Now, we conclude the proof of the CLT \eqref{clt_post}. The line of reasoning is the same as in \citet[Section 2.3]{Con(24)}, with $K_m^*$ playing the role of $Z_n$ and $Q_m(z)$ that of $R_n(z)$. Denoting by $F_{m}$ the cumulative distribution function of the random variable $\frac{K_m^{(n)} - m\mathscr{M}_{\alpha, \tau, \nu, \varrho}}{\sqrt{m} \mathscr{S}_{\alpha, \tau, \nu, \varrho}}$, our aim is to prove
\begin{displaymath}
\lim_{m \rightarrow + \infty } F_m(x) := \lim_{ m \rightarrow + \infty } P\left[K_m^{(n)} \le m\mathscr{M}_{\alpha, \tau, \nu, \varrho} + \sqrt{m} \mathscr{S}_{\alpha, \tau, \nu, \varrho} \, x \right] = \Phi(x),
\end{displaymath}
where $\Phi$ denotes the cumulative distribution function of the standard normal distribution. By resorting to the compound Binomial representations \eqref{bino_rap} (for $\alpha\in(0,1)$) and \eqref{bino_rap0} (for $\alpha=0$), and by means standard properties of conditional probability (see Appendix \ref{sec:d} for details), we write
\begin{equation}
\label{F=I1+I2}
F_m(x) = \mathcal{I}^{(m)}_1(x) + \mathcal{I}^{(m)}_2(x), 
\end{equation}
with
\begin{displaymath}
 \mathcal{I}^{(m)}_1(x) : =  \int_0^{+\infty}\Phi\left( \frac{\sqrt{m}\, \left[\mathscr{M}_{\alpha, \tau, \nu, \varrho} - \mu(z)\right] + \mathscr{S}_{\alpha, \tau, \nu, \varrho} \, x}{\sigma(z)} \right)\ \mu_{\frac{K_m^*}{m}}(\mathrm{d} z)
\end{displaymath}
and
\begin{align*}
 \mathcal{I}^{(m)}_2(x) & := \int_0^{+\infty} \left\{F_{V_m(z)}\left( \frac{\sqrt{m}\, \left[\mathscr{M}_{\alpha, \tau, \nu, \varrho} - \mu(z)\right] + \mathscr{S}_{\alpha, \tau, \nu, \varrho} \, x}{\sigma(z)} \right)  \right. \\ 
 & \quad  \left. - \Phi\left( \frac{\sqrt{m}\, \left[\mathscr{M}_{\alpha, \tau, \nu, \varrho} - \mu(z)\right] + \mathscr{S}_{\alpha, \tau, \nu, \varrho} \, x}{\sigma(z)} \right)\right\}\ \mu_{\frac{K_m^*}{m}}(\mathrm{d} z).
\end{align*}
Accordingly, the proof of the CLT \eqref{clt_post} is completed by showing that, for every $x\in\mathbb{R}$ there hold
\begin{equation}\label{rv_1}
\lim_{m \to + \infty} \mathcal{I}^{(m)}_1(x) =\Phi(x)
\end{equation}
and
\begin{equation}\label{rv_2}
\lim_{m \to + \infty} \mathcal{I}^{(m)}_2(x) =0.
\end{equation}
This is done in Proposition \ref{lem:lem3_ter} (for Equation \eqref{rv_1}) and Proposition \ref{lem:lem4_ter} (for Equation \eqref{rv_2}) below.
\begin{proposition}\label{lem:lem3_ter} 
For every $x \in \mathbb{R}$, 
\begin{displaymath}
\lim_{m \to + \infty} \mathcal{I}^{(m)}_1(x) :=\lim_{m \rightarrow+\infty} \mathbb{E} \left[ \Phi \left(\frac{\sqrt{m} \left[\mu\left(\frac{K_m^*}{m}\right)- \mathscr{M}_{\alpha, \tau, \nu, \varrho}\right] +\mathscr{S}_{\alpha, \tau, \nu, \varrho }\, x}{\sigma\left(\frac{K_m^*}{m}\right)}\right)\right] =\Phi(x).
\end{displaymath}
\end{proposition}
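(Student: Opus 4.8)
The plan is to reduce the claim to the CLT \eqref{clt} of Theorem~\ref{thm_main} together with the two identities of Proposition~\ref{p2}, exploiting the fact that $\mu$ is linear. Write $\mu'(\mathfrak{m}_{\alpha, \lambda}) = (\tau + \varrho\alpha)/(\tau+\nu)$ for the (constant) slope of $\mu$ and set $\mathfrak{s}_{\alpha,\lambda} := (\mathfrak{s}^2_{\alpha,\lambda})^{1/2}$. Since $\mu$ is affine with $\mu(\mathfrak{m}_{\alpha,\lambda}) = \mathscr{M}_{\alpha,\tau,\nu,\varrho}$ by \eqref{eq:muQ(M)}, the centred numerator factors exactly (with no delta-method remainder) as
\[
\sqrt{m}\left[\mu\!\left(\tfrac{K_m^*}{m}\right)-\mathscr{M}_{\alpha,\tau,\nu,\varrho}\right]=\mu'(\mathfrak{m}_{\alpha,\lambda})\,W_m,\qquad W_m:=\frac{K_m^*-m\mathfrak{m}_{\alpha,\lambda}}{\sqrt{m}}.
\]
By the CLT \eqref{clt}, $W_m = \mathfrak{s}_{\alpha,\lambda}\,(K_m^*-m\mathfrak{m}_{\alpha,\lambda})/\sqrt{m\mathfrak{s}^2_{\alpha,\lambda}} \stackrel{\text{w}}{\longrightarrow} \mathfrak{s}_{\alpha,\lambda}Z$, where $Z\sim N(0,1)$.

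Next I would control the denominator. The map $\sigma$ is continuous on $(0,+\infty)$ and $\sigma^2(\mathfrak{m}_{\alpha,\lambda})>0$: indeed $\mathfrak{m}_{\alpha,\lambda}>0$, and the factor $\nu-\varrho\alpha>0$ since $\varrho\le\nu$ (because $K_n\le n$) and $\alpha<1$. Hence the LLN \eqref{lln} and the continuous mapping theorem give $\sigma(K_m^*/m)\stackrel{\text{p}}{\longrightarrow}\sigma(\mathfrak{m}_{\alpha,\lambda})>0$. As the denominator converges in probability to a positive constant, Slutsky's theorem yields joint convergence, and, the quotient being continuous at the limit point, the entire argument of $\Phi$ inside $\mathcal{I}^{(m)}_1(x)$ converges weakly to
\[
G:=\frac{\mu'(\mathfrak{m}_{\alpha,\lambda})\,\mathfrak{s}_{\alpha,\lambda}\,Z+\mathscr{S}_{\alpha,\tau,\nu,\varrho}\,x}{\sigma(\mathfrak{m}_{\alpha,\lambda})}.
\]
Since $\Phi$ is bounded and continuous, the definition of weak convergence then transfers this to $\mathcal{I}^{(m)}_1(x)=\mathbb{E}[\Phi(\cdot)]\to\mathbb{E}[\Phi(G)]$.

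Finally I would evaluate $\mathbb{E}[\Phi(G)]$ in closed form. Writing $G=aZ+b$ with $a=\mu'(\mathfrak{m}_{\alpha,\lambda})\mathfrak{s}_{\alpha,\lambda}/\sigma(\mathfrak{m}_{\alpha,\lambda})$ and $b=\mathscr{S}_{\alpha,\tau,\nu,\varrho}\,x/\sigma(\mathfrak{m}_{\alpha,\lambda})$, I would invoke the standard Gaussian identity $\mathbb{E}[\Phi(aZ+b)]=\Phi(b/\sqrt{1+a^2})$ (obtained by writing $\Phi(aZ+b)=\mathbb{P}(Y\le aZ+b\mid Z)$ for an independent $Y\sim N(0,1)$ and noting $Y-aZ\sim N(0,1+a^2)$). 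The decomposition \eqref{eq:sigmaQ(M)} of Proposition~\ref{p2}, namely $\sigma^2(\mathfrak{m}_{\alpha,\lambda})+\mathfrak{s}^2_{\alpha,\lambda}(\mu'(\mathfrak{m}_{\alpha,\lambda}))^2=\mathscr{S}^2_{\alpha,\tau,\nu,\varrho}$, gives $\sqrt{1+a^2}=\mathscr{S}_{\alpha,\tau,\nu,\varrho}/\sigma(\mathfrak{m}_{\alpha,\lambda})$, whence $b/\sqrt{1+a^2}=x$ and $\mathbb{E}[\Phi(G)]=\Phi(x)$, which is the assertion. (The sign discrepancy between the numerator in the definition of $\mathcal{I}^{(m)}_1$ and the one in the statement is immaterial, since $-Z\stackrel{\text{d}}{=}Z$.)

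The argument is short once Theorem~\ref{thm_main} and Proposition~\ref{p2} are in hand; the only points requiring genuine care are the strict positivity of $\sigma$ at $\mathfrak{m}_{\alpha,\lambda}$ (needed so that division is continuous at the Slutsky limit) and the exact affine structure of $\mu$, which is what lets the $\sqrt{m}$-fluctuation be an \emph{exact} linear image of the CLT-normalised $K_m^*$ rather than merely an asymptotic one. The variance-decomposition identity \eqref{eq:sigmaQ(M)} is precisely the algebraic fact that makes the Gaussian integral collapse to $\Phi(x)$, so I expect no substantive obstacle beyond verifying these structural points.
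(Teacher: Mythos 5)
Your proof is correct, and its overall skeleton coincides with the paper's: weak convergence of the numerator via the CLT \eqref{clt}, convergence in probability of the denominator via the LLN \eqref{lln} and continuity of $\sigma$, Slutsky's theorem, the Portmanteau theorem for the bounded continuous function $\Phi$, and finally a Gaussian conditioning identity combined with the variance decomposition \eqref{eq:sigmaQ(M)} of Proposition \ref{p2} to collapse the limiting expectation to $\Phi(x)$; indeed your closing identity $\mathbb{E}[\Phi(aZ+b)]=\Phi\bigl(b/\sqrt{1+a^2}\bigr)$, derived by introducing an independent standard Gaussian and conditioning, is exactly the statement and the proof of the paper's Lemma \ref{lem:lem1_ter}. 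Where you genuinely depart from the paper is the treatment of the numerator: the paper invokes a delta-method lemma (Lemma \ref{lem:lemma2_ter}, proved via the fundamental theorem of calculus and boundedness of $\psi'$) to obtain $\sqrt{m}\left[\mu\left(K_m^*/m\right)-\mu(\mathfrak{m}_{\alpha,\lambda})\right]\stackrel{\text{w}}{\longrightarrow}\mu'(\mathfrak{m}_{\alpha,\lambda})Y$, whereas you observe that $\mu(z)=z(\tau+\varrho\alpha)/(\tau+\nu)$ is linear, so the $\sqrt{m}$-fluctuation is an \emph{exact} linear image of the CLT-normalised variable, with no expansion and no remainder to control. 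This makes Lemma \ref{lem:lemma2_ter} unnecessary for this proposition and shortens the argument; the paper's route, in exchange, is stated for general $C^1$ functions with bounded derivative and so would survive a nonlinear reparameterisation, a generality not needed here. Two further points in your favour: you verify explicitly that $\sigma^2(\mathfrak{m}_{\alpha,\lambda})>0$ (via $\varrho\le\nu$, $\alpha<1$ and $\mathfrak{m}_{\alpha,\lambda}>0$), a hypothesis that Slutsky's theorem silently requires and that the paper never checks, and you flag the sign discrepancy between the definition of $\mathcal{I}_1^{(m)}$ in the decomposition \eqref{F=I1+I2} and the form appearing in the proposition's statement, correctly dismissing it by the symmetry of the centred Gaussian limit.
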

\begin{proof}
It follows from Proposition \ref{p2} and \eqref{clt} of Theorem \ref{thm_main}, trough Lemma \ref{lem:lemma2_ter}, that as $m \to +\infty$
\begin{displaymath}
\sqrt{n} \left[ \mu\left(\frac{K_m^*}{m}\right) - \mu(\mathfrak{m}_{\alpha, \lambda})\right]\stackrel{\text{w}}{\longrightarrow} \mu'(\mathfrak{m}_{\alpha, \lambda})Y.
\end{displaymath}
Further, since $\sigma$ is a continuous function, by means of \eqref{lln}, as $m\rightarrow+\infty$, it holds
\begin{displaymath}
\sigma\left(\frac{K_m^*}{m}\right)\stackrel{\text{p}}{\longrightarrow} \sigma(\mathfrak{m}_{\alpha, \lambda}).
\end{displaymath}
Since $\mathscr{M}_{\alpha, \tau, \nu, \varrho} = \mu(\mathfrak{m}_{\alpha, \lambda})$ by Proposition \ref{p2}, by means of Slutsky's theorem, as $m\rightarrow+\infty$
\begin{displaymath}
\frac{\sqrt{m} \left[\mu\left(\frac{K_m^*}{m}\right)- \mathscr{M}_{\alpha, \tau, \nu, \varrho}\right] +\mathscr{S}_{\alpha, \tau, \nu, \varrho }\, x}{\sigma\left(\frac{K_m^*}{m}\right)}\stackrel{\text{w}}{\longrightarrow}\frac{\mu'(\mathfrak{m}_{\alpha, \lambda})Y + \mathscr{S}_{\alpha, \tau, \nu, \varrho} \, x}{\sigma\left(\mathfrak{m}_{\alpha, \lambda}\right)}.
\end{displaymath}
Since $\Phi$ is a bounded and continuous function, the proof is completed by Portmanteau theorem.
\end{proof}

\begin{proposition}\label{lem:lem4_ter}
For every $x \in \mathbb{R}$, 
\begin{displaymath}
\left| \mathcal{I}^{(m)}_2(x) \right| \le \int_0^{+\infty} \left\|F_{V_m(z)} - \Phi \right\|_{\infty} \mu_{\frac{K_m^*}{m}}(\mathrm{d} z)\rightarrow 0.
\end{displaymath}
\end{proposition}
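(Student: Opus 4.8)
The plan is to dispatch the stated inequality as a triviality and then to concentrate all the effort on proving the convergence to zero via a truncation argument. For the inequality, I note that at each fixed $z$ the integrand is a difference of two cumulative distribution functions evaluated at a common argument, hence bounded in absolute value by $\left\|F_{V_m(z)} - \Phi\right\|_{\infty}$; since $\mu_{\frac{K_m^*}{m}}$ is a probability measure, integrating preserves this bound and pulls the absolute value inside. The genuine content is therefore the claim that $\int_0^{+\infty} \left\|F_{V_m(z)} - \Phi\right\|_{\infty}\, \mu_{\frac{K_m^*}{m}}(\mathrm{d}z) \to 0$.

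To establish this, I would fix an arbitrary pair $0 < \zeta_0 < \mathfrak{m}_{\alpha,\lambda} < \zeta_1 < +\infty$ and split the integral over $[\zeta_0,\zeta_1]$ and over its complement in $(0,+\infty)$. On $[\zeta_0,\zeta_1]$ I would invoke the Berry-Esseen bound \eqref{eq:be_Qm} of Proposition \ref{BE_Q}, which for $m \ge \bar{m}(\zeta_0,\zeta_1)$ gives $\left\|F_{V_m(z)} - \Phi\right\|_{\infty} \le C(z)\,\phi(m)$ uniformly in $z \in [\zeta_0,\zeta_1]$. Because $C = C_{\zeta_0,\zeta_1}$ is continuous on the compact interval $[\zeta_0,\zeta_1]$, it attains a finite maximum $C^{\ast}$, and bounding the probability mass of $[\zeta_0,\zeta_1]$ trivially by $1$ leaves the estimate $C^{\ast}\phi(m)$, which vanishes since $\phi(m) \to 0$ as $m \to +\infty$.

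On the complement I cannot appeal to Proposition \ref{BE_Q}, so I would instead use the universal estimate $\left\|F_{V_m(z)} - \Phi\right\|_{\infty} \le 1$, valid because both quantities are distribution functions taking values in $[0,1]$. This bounds the remaining integral by $\mu_{\frac{K_m^*}{m}}\bigl((0,+\infty)\setminus[\zeta_0,\zeta_1]\bigr) = \mathbb{P}\bigl[K_m^*/m \notin [\zeta_0,\zeta_1]\bigr]$. The law of large numbers \eqref{lln} of Theorem \ref{thm_main}, namely $K_m^*/m \stackrel{\text{p}}{\longrightarrow} \mathfrak{m}_{\alpha,\lambda}$, then forces this probability to $0$, precisely because the strict inequalities $\zeta_0 < \mathfrak{m}_{\alpha,\lambda} < \zeta_1$ place the limit in the interior of $[\zeta_0,\zeta_1]$. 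Adding the two pieces, both summands tend to $0$, which concludes the argument.

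The argument is structurally routine, and the single point requiring care is the matching of the two regimes. The Berry-Esseen control is available only on a compact neighborhood of $\mathfrak{m}_{\alpha,\lambda}$, where continuity of $C$ ensures its boundedness, whereas on the complement the crude bound by $1$ would be useless were it not for the concentration of $K_m^*/m$ near $\mathfrak{m}_{\alpha,\lambda}$ supplied by the LLN, which starves the complementary region of probability mass. I expect no genuine obstacle beyond verifying that $\mathfrak{m}_{\alpha,\lambda}$ sits strictly inside $[\zeta_0,\zeta_1]$, which is guaranteed by the freedom in the choice of $\zeta_0$ and $\zeta_1$.
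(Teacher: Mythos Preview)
Your proposal is correct and follows the same truncation strategy as the paper: split the integral over a compact interval $[\zeta_0,\zeta_1]$ containing $\mathfrak{m}_{\alpha,\lambda}$ in its interior and its complement, apply the Berry--Esseen bound \eqref{eq:be_Qm} together with the continuity of $C$ on the compact piece, and bound the integrand crudely by $1$ on the complement. The one genuine difference is in how the complementary probability is handled. The paper fixes $\varepsilon>0$, chooses $\delta$ with $\Phi(\delta)=1-\varepsilon/2$, and works with a sub-interval $[\tilde\zeta_0,\tilde\zeta_1]\subset[\zeta_0,\zeta_1]$ of width of order $1/\sqrt{\tilde m}$; the probability that $K_m^*/m$ falls outside this sub-interval is then controlled via the CLT \eqref{clt}, yielding $\limsup\le\varepsilon$ and concluding by arbitrariness of $\varepsilon$. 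You instead keep $[\zeta_0,\zeta_1]$ fixed and invoke the LLN \eqref{lln} directly, which sends $\mathbb{P}[K_m^*/m\notin[\zeta_0,\zeta_1]]\to 0$ without any $\varepsilon$-argument. Your route is shorter and uses a weaker ingredient (LLN rather than CLT), which is entirely adequate here since the proposition only asserts convergence to zero and no rate is claimed; the paper's CLT-based argument is a slight over-engineering for this particular statement.
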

\begin{proof}
Fix $\varepsilon>0$ and choose $\delta = \delta(\varepsilon)>0$ such that $\Phi(\delta) = 1 - \varepsilon/2$. Now, let
\begin{displaymath}
\tilde{m} = \tilde{m}(\varepsilon, \zeta_0, \zeta_1): = \min \left\{m \in \mathbb{N}\text{ : } z_0 - \frac{\delta \mathfrak{s}_{\alpha, \lambda}}{\sqrt{m}} >\zeta_0,\text{ and } z_0 - \frac{\delta \Sigma}{\sqrt{m}} <\zeta_1, \right\},
\end{displaymath}
which exists because $\mathfrak{m}_{\alpha, \lambda} \in [\zeta_0, \zeta_1]$. We set $\tilde{\zeta}_0 := z_0 - \frac{\delta \mathfrak{s}_{\alpha, \lambda}}{\sqrt{\tilde{m}}}$ and $\tilde{\zeta}_1 := z_0 + \frac{\delta \mathfrak{s}_{\alpha, \lambda}}{\sqrt{\tilde{m}}} $, and write
\begin{align}\label{par31}
&\int_0^{+\infty} \left\|F_{V_m(z)} - \Phi \right\|_{\infty} \mu_{\frac{K_m^*}{m}}(\mathrm{d} z)\\
&\notag\quad = \int_{\tilde{\zeta}_0}^{\tilde{\zeta}_1} \left\|F_{V_m(z)} - \Phi \right\|_{\infty} \mu_{\frac{K_m^*}{m}}(\mathrm{d} z) + \int_{\mathbb{R} \smallsetminus [\tilde{\zeta}_0, \tilde{\zeta}_1]} \left\|F_{V_m(z)} - \Phi \right\|_{\infty} \mu_{\frac{K_m^*}{m}}(\mathrm{d} z)
\end{align}
and treat the terms on the right-hand side of \eqref{par31} separately. For the first term on the right-hand side of \eqref{par31}, by means of \eqref{eq:be_Qm}, for every $m \ge \bar{m} $ it holds
\begin{equation}\label{par32}
\int_{\tilde{\zeta}_0}^{\tilde{\zeta}_1}\left\|F_{V_m(z)} - \Phi \right\|_{\infty} \mu_{\frac{K_m^*}{m}}(\mathrm{d} z) \le \phi(m) \int_{\tilde{\zeta_0}}^{\tilde{\zeta}_1}C(z) \ \mu_{\frac{K_m^*}{m}}(\mathrm{d} z) \le \phi(m) \, \mathcal{M}_C,
\end{equation}
where $\mathcal{M}_C: = \max_{z \in [\zeta_0, \zeta_1]} C(z)$ exists because $C$ is continuous; in particular, $\mathcal{M}_C$ is bounded on $[\tilde{\zeta}_0, \tilde{\zeta}_1]$. For the second term on the right-hand side of \eqref{par31},
\begin{align}\label{par33}
\int_{\mathbb{R} \smallsetminus [\tilde{\zeta}_0, \tilde{\zeta}_1]} \left\|F_{V_m(z)} - \Phi \right\|_{\infty} \mu_{\frac{K_m^*}{m}}(\mathrm{d} z) & \le \int_{\mathbb{R} \smallsetminus [\tilde{\zeta}_0, \tilde{\zeta}_1]}\mu_{\frac{K_m^*}{m}}(\mathrm{d} z) \\
&\notag = P\left[\frac{K_m^*}{m} \notin \left[\mathfrak{m}_{\alpha, \lambda}- \frac{\delta \mathfrak{s}_{\alpha, \lambda}}{\sqrt{\tilde{m}}},\mathfrak{m}_{\alpha, \lambda} + \frac{\delta \mathfrak{s}_{\alpha, \lambda}}{\sqrt{\tilde{m}}} \right] \right]\\
&\notag\le P\left[\frac{K_m^*}{m} \notin \left[\mathfrak{m}_{\alpha, \lambda} - \frac{\delta \mathfrak{s}_{\alpha, \lambda}}{\sqrt{m}},\mathfrak{m}_{\alpha, \lambda} + \frac{\delta \mathfrak{s}_{\alpha, \lambda}}{\sqrt{m}} \right] \right]\\
&\notag = P \left[\frac{K_m^*- m \,  \mathfrak{m}_{\alpha, \lambda}}{\sqrt{m} \mathfrak{s}_{\alpha, \lambda}} \notin [-\delta, \delta]\right].
\end{align}
From \eqref{par31}, with \eqref{par32} and \eqref{par33}, we obtained that, for every $n \ge \max(\bar{m}, \tilde{m})$ it holds
\begin{displaymath}
\int_0^{+\infty} \left\|F_{V_m(z)} - \Phi \right\|_{\infty} \mu_{\frac{K_m^*}{m}}(\mathrm{d} z) \le \phi(m) \, \mathcal{M}_C + P \left[\frac{K_m^*- m \, \mathfrak{m}_{\alpha, \lambda}}{\sqrt{m} \mathfrak{s}_{\alpha, \lambda}} \notin [-\delta, \delta]\right].
\end{displaymath}
Hence,
\begin{align*}
0  & \le \limsup_{m \to + \infty} \int_0^{+\infty} \left\|F_{V_m(z)} - \Phi \right\|_{\infty} \mu_{\frac{K_m^*}{m}}(\mathrm{d} z) \\
& \le \lim_{m \to + \infty} \left\{\phi(m) \, \mathcal{M}_C + P \left[\frac{K_m^*- m \, \mathfrak{m}_{\alpha, \lambda}}{\sqrt{m} \, \mathfrak{s}_{\alpha, \lambda}}\notin [-\delta, \delta]\right]\right\}\\
& = 0+ 2-2 \Phi(\delta)\\
& = \varepsilon,
\end{align*}
where the last identities follows from \eqref{clt} and the definition of $\delta$ such that $\Phi(\delta) = 1 - \varepsilon/2$, respectively. The proof is completed by the arbitrariness of $\varepsilon$. 
\end{proof}

This completes the proof of the CLT \eqref{clt_post}, and the proof of Theorem \ref{thm_main_post}. See Appendix \ref{proof1} and Appendix \ref{sec:d}.

\subsection{Credible intervals}
Theorem \ref{thm_main_post} provides  a large $m$ Gaussian approximation of the distribution of $K_{m}^{(n)}$; in particular the approximation is centered on the BNP estimator $\hat{K}_{n,m}$. From \eqref{clt_post},
\begin{displaymath}
K_{m}^{(n)} \stackrel{d}{\approx} N \left(m \mathscr{M}_{\alpha, \tau, \nu, \varrho},\,  m\mathscr{S}_{\alpha, \tau, \nu, \varrho}^{2} \right)
\end{displaymath}
namely for $m$ sufficiently large the distribution of $K_{m}^{(n)}$ is approximated by a Gaussian distribution with mean $m \mathscr{M}_{\alpha, \tau, \nu, \varrho}$ and variance $m\mathscr{S}_{\alpha, \tau, \nu, \varrho}^{2}$. This approximation is applied to construct Gaussian credible intervals for $K_{n,m}$ with a prescribed (asymptotic) probability level for $K_m^{(n)}$. Given $\delta \in (0, 1)$, a $(1-\delta)$-level symmetric credible interval centered at the mean is
\begin{equation}\label{eq: intervals}
\text{Pr} \left[ K_{m}^{(n)} \in \left[ m \mathscr{M}_{\alpha, \tau, \nu, \varrho}  -q_{\frac{\delta}{2}} \sqrt{m\mathscr{S}_{\alpha, \tau, \nu, \varrho}^{2}}, \,  m \mathscr{M}_{\alpha, \tau, \nu, \varrho}  - q_{1- \frac{\delta}{2} } \sqrt{m\mathscr{S}_{\alpha, \tau, \nu, \varrho}^{2}}\right]  \right] \approx 1-\delta,
\end{equation}
where $q_\varepsilon := \Phi^{-1}(\varepsilon)$ is the $\varepsilon$-quantile of the standard Gaussian distribution. The Gaussian credible interval \eqref{eq: intervals} is fully analytical, thus avoiding the use of Monte Carlo sampling algorithms.


\section{Numerical illustrations}\label{sec4}

We validate our methodology on synthetic and real data, comparing its performance with that of state-of-the-art procedures for the construction of exact  and asymptotic credible intervals for $K_{m, n}$.

Under the Pitman-Yor prior, the BNP approach to estimate $K_{n,m}$ requires the specification of the prior's parameters $\alpha\in[0,1)$ and $\theta>-\alpha$. Here, we adopt an empirical Bayes approach to estimate $(\alpha,\theta)$, which relies on the marginal distribution of $\mathbf{X}_{n}=(X_{1},\ldots,X_{n})$ from $\text{PYP}(\alpha,\theta)$ \citep[Section 3]{Fav(09)}. For $\mathbf{X}_{n}$ featuring $K_n$ distinct species with (empirical) frequencies $\mathbf{N}_{n} = (N_{1,n},\dots,N_{K_{n},n})$, such a distribution is Ewens-Pitman’s formula \citep{Pit(95)},
\begin{equation}
\label{EPSampling}
Pr\left[ K_n = j, \mathbf{N}_{n} = \mathbf{n} \right] = \frac{\prod_{i = 1}^{k-1}(\theta+ i \alpha)}{(\theta + 1)_{(n-1)}} \, \prod_{j = 1}^k (1-\alpha)_{(n_j -1)} .
\end{equation}
The empirical Bayes approach estimates the parameters $(\alpha, \theta)$ with the values $(\hat{\alpha}_n, \hat{\theta}_n)$ that maximizes the distribution (marginal likelihood) \eqref{EPSampling} corresponding to the observed sample $(j, n_1, \dots, n_j)$, i.e.,
\begin{equation}
\label{MLE}
(\hat{\alpha}_n, \hat{\theta}_n) = \arg \max_{\{(\alpha, \theta) \, : \, \alpha \in [0, 1), \theta >-\alpha\}} \left\{ \frac{\prod_{i = 1}^{k-1}(\theta+ i \alpha)}{(\theta + 1)_{(n-1)}} \, \prod_{j = 1}^k (1-\alpha)_{(n_j -1)} \right\}.
\end{equation}
Alternatively, one may specify a prior on $(\alpha,\theta)$ and pursue fully Bayes estimates of prior's parameters. Due to the large sample size of the datasets considered, we do not expect substantial differences between empirical Bayes and fully Bayes. See \citet[Section 6]{Bal(23)} for a discussion of the estimation of $(\alpha,\theta)$ and issues thereof, especially with respect to $\theta>0$.

\subsection{Synthetic data}
We test our method on synthetic data generated from a collection of discrete distributions taken from \citet[Figure 3]{Orl(17)}. More precisely, we generate four datasets, which correspond to the four panels of Figure \ref{fig1_intro}: A) Zipf distribution on $\{0, 1, . . . , 300\}$ with parameter 2; B) Zipf distribution on $\{0, 1, . . . , 100\}$ with parameter $1.5$; C) P\'olya distribution on $\{0, 1, . . . , 500\}$ with parameter $\mathbf{\beta} = (\beta_1, ..., \beta_{500})$ with $\beta_1 = \beta_2 = 2$ and $\beta_i = 500$ for all $i \ge 3$; D) Uniform distribution on $\{0, 1, . . . , 500\}$. For each dataset, Table \ref{tab:0} collects the sample size $n$, the number of distinct species $j$, and the empirical Bayes estimates $(\hat{\alpha}_n, \hat{\theta}_n)$ of $(\alpha,\theta)$.

\begin{table}[ht]
\footnotesize
    \begin{tabularx}{\textwidth}{
    >{\raggedright \hsize=1.5\hsize \arraybackslash}      
   >{\centering \hsize=0.3\hsize \arraybackslash}X 
    >{\centering \hsize=0.3\hsize \arraybackslash}X
    >{\centering \hsize=0.3\hsize \arraybackslash}X
   >{\centering \hsize=0.3 \hsize \arraybackslash}X
>{\centering \hsize=0.3 \hsize \arraybackslash}X}
\toprule
 Dataset &  $n$ & $j$ & $\hat{\alpha}_n$ &  $\hat{\theta}_n$  \\
\midrule
A) Zipf  & 977   & 300 &  0.54   & 26.67 \\[1ex]
B) Zipf  & 1877   & 100 &  0.38   & 4.66 \\[1ex]
C) P\'olya  & 2000   & 215&  0.64   & 2.39\\[1ex]
D) Uniform & 2000   & 447 & 0   & 178.48\\
\bottomrule
\end{tabularx}
\caption{\scriptsize{Sample size $n$, number of distinct species $j$ in the sample, and empirical Bayes estimates of $(\alpha,\theta)$ for datasets A, B, C and D.}}
\label{tab:0} 
\end{table}

For datasets A, B, C and D, Figure \ref{fig1_intro} displays the BNP estimates of $K_{n,m}$ with the 95\% exact credible intervals, Mittag-Leffler credible intervals and Gaussian credible intervals as a function of $m$, for $m \in [0, 5n]$. Table \ref{tab:1} contains the BNP estimates of $K_{n,m}$ for $m=2,\,2n,\,3,\,4n,\,5n$, and the corresponding 95\% exact credible intervals, Mittag-Leffler credible intervals with their coverages of exact intervals, and Gaussian credible intervals with their coverages of exact intervals. The coverage is defined as the ratio between the length of the intersection of the (rounded to the nearest integer) exact credible interval with the (rounded) approximate credible interval, and the length of the (rounded) exact interval. Table \ref{tab:2} contains the same quantities as Table \ref{tab:1}, for larger values of $m$. We refer to Appendix \ref{app:E} for figures displaying the coverage of Mittag-Leffler and Gaussian credible intervals as a function of $m \in [0, 5n]$. Exact intervals in Table \ref{tab:1} are derived by Monte Carlo sampling, through the inverse transform algorithm, the posterior distribution of $K_{n,m}$, given $\mathbf{X}_{n}$. Instead, due to the larger values of $m$, the exact intervals in Table \ref{tab:2} are derived by Monte Carlo sampling, through Algorithm \ref{alg:k}. In both algorithms, we applied 2000 Monte Carlo samples.

\begin{table}[ht]
\scriptsize
    \begin{tabularx}{\textwidth}{
    >{\raggedright \hsize=0.8\hsize \arraybackslash}      
   >{\centering \hsize=0.5\hsize \arraybackslash}X 
    >{\centering \hsize=0.1\hsize \arraybackslash}X
    >{\centering \hsize=0.2\hsize \arraybackslash}X
   >{\centering \hsize=0.3 \hsize \arraybackslash}X
    >{\centering \hsize=0.3 \hsize \arraybackslash}X
     >{\centering \hsize=0.25 \hsize \arraybackslash}X
>{\centering \hsize=0.3 \hsize \arraybackslash}X
>{\centering \hsize=0.25 \hsize \arraybackslash}X}
\toprule
\multirow{2}{*}{Dataset}  & \multirow{2}{*}{$m$}  & \multirow{2}{*}{$\hat{K}_{n,m}$} &\multirow{2}{*}{95\% Exact C.I.} &  \multicolumn{2}{c}{Mittag-Leffler C.I.}   &  \multicolumn{2}{c}{Gaussian C.I.}\\ 
\cmidrule(lr){5-6} \cmidrule(lr){7-8}
 &  &  & & 95\% C.I. & Coverage (\%) & 95\% C.I. & Coverage (\%) \\
\midrule
A) Zipf, $n = 977$ &  $n$   & 156 & (130, 184)   &  (141, 173) & 59.3 &  (129, 183) & 98.1 \\
 & $2n$   & 280 &  (241, 321)   & (252, 309) & 71.3 & (239, 320) & 98.8\\
& $3n$   & 386 &  (335, 439)   & (348, 426) & 75.0 & (334, 437) & 98.1 \\
& $4n$   & 480 &  (423, 541)    & (433, 530) & 82.2 & (419, 541)& 100\\
& $5n$   & 566 &  (501, 638)   & (511, 625)& 83.2 & (496, 636) & 98.5  \\[1ex]
 B) Zipf, $n = 1877$  &  $n$   & 33 & (22, 47)    &  (28, 40) & 48.0 &  (21, 46) & 96.0\\
 & $2n$   & 57 &  (40, 77)   & (47, 69) &  59.5 & (39, 76)& 97.3 \\
& $3n$   & 77 &  (57, 102)   & (63, 92) & 64.4 & (55, 99) & 93.3 \\
& $4n$   & 93 &  (69, 119)   & (77, 112)& 70.0& (68, 119) & 100 \\
& $5n$   & 108 &  (80, 137)  & (89, 129)  & 70.2 &(80, 136) & 98.2 \\[1ex]
C) P\'olya, $n = 2000$  & $n$   & 122 &(98, 149)  & (107, 139)& 62.7 & (96, 149) &100 \\
& $2n$   & 224 &  (185, 265)  & (195, 254)& 73.8 & (183, 264) & 98.8\\
& $3n$   & 313 &  (263, 369)    & (273, 356)& 78.3 & (261, 366) & 97.2\\
& $4n$   & 395 &  (334, 460)  & (344, 449) & 83.3 & (332, 458)  & 98.4 \\
& $5n$   & 471 &  (398, 549)  & (410, 535) & 82.8 & (398, 544) & 96.7 \\[1ex]
D) Uniform, $n = 2000$  &  $n$   & 116 &   (96, 137) &--& -- & (96, 137) & 100 \\
 & $2n$   & 186 &  (160, 211)   & -- & -- & (160, 212) & 100\\
& $3n$   & 236 &  (206, 265)   & -- & -- & (207, 266) & 98.3 \\
& $4n$   & 275 &  (244, 309)   & -- & -- & (243, 307) &96.9\\
& $5n$   & 307 &  (274, 341)   & -- & -- & (273, 341) &100 \\[1ex]
\bottomrule
\end{tabularx}
\caption{\scriptsize{Additional sample $m$, BNP estimates of $K_{n,m}$,  95\% exact C.I., Mittag-Leffler C.I. with their coverages (of the the exact C.I.) and Gaussian credible intervals with their coverages (of the exact C.I.). All values are rounded to the nearest integer.}}
\label{tab:1} 
\end{table}

\begin{table}[ht]
\scriptsize
    \begin{tabularx}{\textwidth}{
    >{\raggedright \hsize=0.8\hsize \arraybackslash}      
   >{\centering \hsize=0.5\hsize \arraybackslash}X 
    >{\centering \hsize=0.1\hsize \arraybackslash}X
    >{\centering \hsize=0.2\hsize \arraybackslash}X
   >{\centering \hsize=0.3 \hsize \arraybackslash}X
    >{\centering \hsize=0.3 \hsize \arraybackslash}X
     >{\centering \hsize=0.25 \hsize \arraybackslash}X
>{\centering \hsize=0.3 \hsize \arraybackslash}X
>{\centering \hsize=0.25 \hsize \arraybackslash}X}
\toprule
\multirow{2}{*}{Dataset}  & \multirow{2}{*}{$m$}  & \multirow{2}{*}{$\hat{K}_{n,m}$} & \multirow{2}{*}{95\% Exact C.I.} &  \multicolumn{2}{c}{Mittag-Leffler C.I.}   &  \multicolumn{2}{c}{Gaussian C.I.}\\ 
\cmidrule(lr){5-6} \cmidrule(lr){7-8}
 &  &  & & 95\% C.I. & Coverage (\%) & 95\% C.I. & Coverage (\%) \\
\midrule
 A) Zipf, $n = 977$ &  $10n$   & 923 & (813, 1030)   &  (834, 1019) & 85.3 &  (817, 1029) & 97.7 \\
 & $50n$   & 2582 &  (2299, 2863)   & (2332, 2851) & 92.0 & (2311, 2854) & 96.3\\
& $100n$   & 3904 &  (3519, 4304)   & (3526, 4311) & 99.1 & (3501, 4307) & 100 \\
& $1000n$   & 14493 &  (13064, 16031)   & (13088, 16002)& 98.2 & (13038, 15949) & 97.2  \\[1ex]
 B) Zipf, $n = 1877$  &  $10n$   & 165 & (127, 205)    &  (134, 196) & 79.5 &  (125, 204) & 98.7\\
 & $50n$   & 381 &  (304, 467)   & (309, 453) &  88.3 & (301, 460)& 95.7 \\
& $100n$   & 525 &  (421, 636)   & (427, 625) & 92.1 & (419, 632) & 98.1 \\
& $1000n$   & 1400 &  (1131, 1667)  & (1137, 1664)  & 98.1 &(1132, 1668) & 99.8 \\[1ex]
 C) P\'olya, $n = 2000$  & $10n$   & 799 & (685, 913)  & (701, 908) & 90.8 & (682, 915) &100 \\
& $50n$   & 2502 &  (2171, 2857)  & (2195, 2845)&94.8 & (2165, 2839) & 97.4\\
& $100n$   & 3998 &  (3491, 4547)    & (3508, 4547)& 98.4 & (3467, 4529) & 98.3\\
& $1000n$   & 18139 &  (15824, 20559)  & (15915, 20629) & 98.1 & (15776, 20501) & 98.8 \\[1ex]
D) Uniform, $n = 2000$  &  $10n$   & 414 &   (375, 457) &--& -- & (375, 453) & 95.1 \\
 & $50n$   & 687 &  (636, 739)   & -- & -- & (636, 738) & 99.0\\
& $100n$   & 809 &  (757, 864)   & -- & -- & (753, 864) & 100\\
& $1000n$   & 1218 &  (1150, 1286)   & -- & -- & (1150, 1286) &100 \\
\bottomrule
\end{tabularx}
\caption{\scriptsize{Additional sample $m$, BNP estimates of $K_{n,m}$,  95\% exact C.I., Mittag-Leffler C.I. with their coverages (of the the exact C.I.) and Gaussian credible intervals with their coverages (of the exact C.I.). All values are rounded to the nearest integer.}}
\label{tab:2} 
\end{table}

Figure \ref{fig1_intro}, as well as from Table \ref{tab:1} and the rows of Table \ref{tab:2} corresponding to $m = 10n$ and $m=50n$, show that Mittag-Leffler credible intervals have a smaller coverage than Gaussian credible intervals. As expected, the performance of Mittag-Leffler credible intervals improves as $m$ grows. In particular, for  $m=100 n$ and $m=1000 n$ the coverage of the Mittag-Leffler credible intervals eventually matches or outperforms that of the Gaussian credible intervals, which in any case maintain a coverage of at least 97\%. Such a behaviour provides a further indication on the magnitude of the additional sample size $m$ for which the Mittag-Leffler regime of approximation may be more suited than the Gaussian regime. Furthermore, the Gaussian intervals appear to be near-to-optimal, in the sense that their length is always comparable to that of the exact intervals even when they are larger. For the Uniform dataset, where the empirical Bayes estimate for $\alpha$ is $0$, Gaussian credible intervals sill work, preserving their good performance in terms of coverage of the exact interval. Instead, the method of \citet{Fav(09)} does not apply for $\alpha=0$, as discussed in Section \ref{sec 2.3}.

\subsection{Real data}
For real data analysis, we consider the same Expressed Sequence Tags (EST) datasets previously analyzed in \citet{Fav(09)}. These datasets are generated by sequencing cDNA libraries  consisting of millions of genes and one of the main quantities of interest is the number of distinct genes. Due to the cost of the sequencing procedure, only a small portion of the cDNA library is typically sequenced. Given the resulting sequenced sample of size $n$, it is required to estimate the number of new genes $K_{m,n}$ to appear in an additional sample of size $m$. On the basis of such estimates, geneticists decide whether it is worth proceeding with sequencing and, if so, up to which additional sample size. The five libraries considered are: i) tomato flower cDNA library \citep{Qua(00)}; ii) two cDNA libraries of the amitochondriate protist \emph{Mastigamoeba balamuthi}, one of which has undergone a normalization protocol \citep{Sus(04)}; iii)two \emph{Naegleria gruberi} cDNA libraries from cells grown respectively in aerobic and anaerobic conditions \citep{Sus(04)}.

For each EST dataset, Table \ref{table_fav} collects the sample size $n$, the number of distinct species $j$, and the empirical Bayes estimates $(\hat{\alpha}_n, \hat{\theta}_n)$ of $(\alpha,\theta)$; Table \ref{table_fav} coincides with \citet[Table 1]{Fav(09)}, where the same datasets are analyzed. For these datasets, Table \ref{tab:3} contains the BNP estimates of $K_{n,m}$ for $m=2,\,2n,\,3,\,4n,\,5n$, and the corresponding 95\% exact credible intervals, Mittag-Leffler credible intervals with their coverages of exact intervals, and Gaussian credible intervals with their coverages of exact intervals.  Table \ref{tab:4} contains the same quantities as Table \ref{tab:3}, for larger values of $m$. We refer to Appendix \ref{app:E} for figures displaying the coverage of Mittag-Leffler and Gaussian credible intervals as a function of $m \in [0, 5n]$.

\begin{table}[ht]
\footnotesize
    \begin{tabularx}{\textwidth}{
    >{\raggedright \hsize=1.7\hsize \arraybackslash}      
   >{\centering \hsize=0.3\hsize \arraybackslash}X 
    >{\centering \hsize=0.3\hsize \arraybackslash}X
    >{\centering \hsize=0.3\hsize \arraybackslash}X
   >{\centering \hsize=0.3 \hsize \arraybackslash}X
>{\centering \hsize=0.3 \hsize \arraybackslash}X}
\toprule
 Library  &  $n$ & $j$  & $\hat{\alpha}_n$ &  $\hat{\theta}_n$  \\[1ex]
\midrule
 Tomato flower  & 2586   & 1825 &  0.612   & 741.0 \\[1ex]
  \emph{Mastigamoeba} & 715   & 460 &  0.770   & 46.0 \\[1ex]
  \emph{Mastigamoeba}--normalized & 363   & 248 &  0.700   & 57.0\\[1ex]
  \emph{Naegleria} aerobic  & 959   & 473 & 0.670   & 46.3\\[1ex]
  \emph{Naegleria} anaerobic  & 969   & 631  &  0.660   & 155.5 \\
  \bottomrule
\end{tabularx}
\caption{\scriptsize{Sample size $n$, number of distinct species $j$ in the sample, and empirical Bayes estimates of $(\alpha,\theta)$ for the five EST datasets.}}
\label{table_fav}
\end{table}

\begin{table}[h!]
\scriptsize
    \begin{tabularx}{\textwidth}{
    >{\raggedright \hsize=0.8\hsize \arraybackslash}      
   >{\centering \hsize=0.5\hsize \arraybackslash}X 
    >{\centering \hsize=0.1\hsize \arraybackslash}X
    >{\centering \hsize=0.2\hsize \arraybackslash}X
   >{\centering \hsize=0.3 \hsize \arraybackslash}X
    >{\centering \hsize=0.3 \hsize \arraybackslash}X
     >{\centering \hsize=0.25 \hsize \arraybackslash}X
>{\centering \hsize=0.3 \hsize \arraybackslash}X
>{\centering \hsize=0.25 \hsize \arraybackslash}X}
\toprule
\multirow{2}{*}{Library}  & \multirow{2}{*}{$m$}  & \multirow{2}{*}{$\hat{K}_{n,m}$} &\multirow{2}{*}{95\% Exact C.I.} &  \multicolumn{2}{c}{Mittag-Leffler C.I.}   &  \multicolumn{2}{c}{Gaussian C.I.}\\ 
\cmidrule(lr){5-6} \cmidrule(lr){7-8}
 &   && &  95\% C.I. & Coverage (\%) & 95\% C.I. & Coverage (\%) \\
\midrule
 Tomato flower &  $n$   & 1281 & (1223, 1339)   &  (1244, 1321) & 66.4 &  (1222, 1340) & 100 \\
$n = 2586$ & $2n$   & 2354 &  (2264, 2446)   & (2287, 2427) & 76.9 & (2262, 2445) & 99.5\\
& $3n$   & 3305 &  (3184, 3424)   & (3211, 3409) & 82.5 & (3185, 3425) & 99.6 \\
& $4n$   & 4173 &  (4031, 4318)    & (4054, 4304)& 87.1 & (4028, 4319)& 100 \\
& $5n$   & 4980 &  (4815, 5146)   & (4838, 5136)& 90.0 & (4811, 5148) & 100  \\[1ex]
  \emph{Mastigamoeba}  &  $n$   & 346 & (312, 379)    &  (323, 369) & 68.7 &  (312, 379) &100\\
$n = 715$ & $2n$   & 654 &  (596, 706)   & (610, 697) &79.1 & (599, 708)&97.3 \\
& $3n$   & 939 &  (866, 1014)   & (875, 1001) & 85.1 & (865, 1012) & 98.6 \\
& $4n$   & 1208 &  (1119, 1301)   & (1126, 1288)& 89.0 & (1116, 1299) &98.9 \\
& $5n$   & 1465 &  (1357, 1578)  & (1366, 1562)  & 88.7 &(1356, 1573) & 97.7 \\[1ex]
 \emph{Mastigamoeba}--norm. & $n$   & 180 &(157, 202)  & (164, 197)&73.3 & (157, 203) &100 \\
$n = 363$ & $2n$   & 336 &  (299, 371)  & (306, 367)&84.7 & (299, 372) &100\\
& $3n$   & 477 &  (429, 525)    & (435, 522)& 90.6 & (428, 526) & 100\\
& $4n$   & 608 &  (546, 671)  & (555, 666) &88.8 & (548, 668)  & 96.0\\
& $5n$   & 732 &  (660, 803)  & (668, 801) &93.0 & (662, 803) &98.6\\[1ex]
 \emph{Naegleria} aerobic  &  $n$   & 307 & (272, 343)  &  (284, 331) & 66.2 & (272, 342) & 98.6 \\
$n = 959$ & $2n$   & 566 &  (514, 621)   & (524, 611) & 81.3 & (511, 622) & 100\\
& $3n$   & 798 &  (730, 873)   & (739, 861) & 85.3 & (726, 871) & 98.6 \\
& $4n$   & 1012 &  (921, 1099)   & (937, 1091) & 86.5 & (923, 1101) & 98.9\\
& $5n$   & 1212 &  (1108, 1319)   & (1122, 1307) & 87.7 & (1109, 1315) &97.6 \\[1ex]
 \emph{Naegleria} anaerobic  &  $n$   & 439 & (402, 476)  &  (415, 465) & 67.6 &  (402, 476)  & 100 \\
$n = 969$ & $2n$   & 812 &  (753, 871)  & (767, 860) & 78.8& (754, 870) & 98.3   \\
& $3n$   & 1146 &  (1065, 1219)  & (1083, 1213) & 84.4 & (1069, 1223) & 97.4  \\
& $4n$   & 1454 &  (1365, 1550)   & (1373, 1538) & 89.2 &  (1360, 1547) & 98.4\\
& $5n$   & 1741 &  (1635, 1855)   & (1645, 1843) & 90.0 & (1632, 1851) & 98.2 \\
\bottomrule
\end{tabularx}
\caption{\scriptsize{Additional sample $m$, BNP estimates of $K_{n,m}$,  95\% exact C.I., Mittag-Leffler C.I. with their coverages (of the the exact C.I.) and Gaussian credible intervals with their coverages (of the exact C.I.). All values are rounded to the nearest integer.}}
\label{tab:3} 
\end{table}

\begin{table}[h!]
\scriptsize
    \begin{tabularx}{\textwidth}{
    >{\raggedright \hsize=0.8\hsize \arraybackslash}      
   >{\centering \hsize=0.5\hsize \arraybackslash}X 
    >{\centering \hsize=0.1\hsize \arraybackslash}X
    >{\centering \hsize=0.2\hsize \arraybackslash}X
   >{\centering \hsize=0.3 \hsize \arraybackslash}X
    >{\centering \hsize=0.3 \hsize \arraybackslash}X
     >{\centering \hsize=0.25 \hsize \arraybackslash}X
>{\centering \hsize=0.3 \hsize \arraybackslash}X
>{\centering \hsize=0.25 \hsize \arraybackslash}X}
\toprule
\multirow{2}{*}{Library}  & \multirow{2}{*}{$m$}  & \multirow{2}{*}{$\hat{K}_{n,m}$} & \multirow{2}{*}{95\% Exact C.I.} &  \multicolumn{2}{c}{Mittag-Leffler C.I.}   &  \multicolumn{2}{c}{Gaussian C.I.}\\ 
\cmidrule(lr){5-6} \cmidrule(lr){7-8}
 &  &  & & 95\% C.I. & Coverage (\%) & 95\% C.I. & Coverage (\%) \\
\midrule
 Tomato flower &  $10n$   & 8432 & (8171, 8705)   &  (8188, 8687) & 93.4 &  (8164, 8700) & 99.1 \\
$n = 2586$ & $50n$   & 25926 &  (25137, 26706)   & (25176, 26712) & 97.5 & (25154, 26698) & 98.4\\
& $100n$   & 40888 &  (39690, 42082)   & (39705, 42128) & 99.4 & (39684, 42092) & 100 \\
& $1000n$   & 113848 &  (110620, 117142)   & (110555, 117300)& 100 & (110540, 117157) & 100  \\[1ex]
  \emph{Mastigamoeba}  &  $10n$   & 2634 & (2442, 2825)    &  (2463, 2809) &90.3 &  (2448, 2819) &96.9\\
$n = 715$ & $50n$   & 9718 &  (9010, 10370)   & (9089, 10367) &94.0 & (9063, 10372)&96.1 \\
& $100n$   & 16797 &  (15664, 17928)   & (15711, 17920) & 97.6 & (15674, 17921) & 99.2 \\
& $1000n$   & 58889 &  (54962, 62878)  & (55079, 62824)  & 97.8 &(54976, 62803) &98.9 \\[1ex]
 \emph{Mastigamoeba}--norm. & $10n$   & 1280 &(1158, 1392)  & (1169, 1393)& 95.3 & (1163, 1397) &97.9 \\
$n = 363$ & $50n$   & 4344 &  (3951, 4707)  & (3966, 4729)& 98.0 & (3969, 4720) & 97.6\\
& $100n$   & 7203 &  (6581, 7809)    & (6576, 7840)& 100 & (6585, 7820) & 99.7\\
& $1000n$   & 22759 &  (20826, 24856)  & (20779, 24774) &98.0 & (20825, 24694) & 96.0 \\[1ex]
 \emph{Naegleria} aerobic  &  $10n$   & 2084 & (1920, 2253)  &  (1926, 2246) &96.1 & (1917, 2252) & 99.7 \\
$n = 959$ & $50n$   & 6781 &  (6260, 7299)   & (6265, 7306) &99.5 & (6270, 7293) & 98.5\\
& $100n$   & 11030 &  (10209, 11923)   & (10190, 11883) & 97.7 & (10207, 11852) & 95.9 \\
& $1000n$   & 33286 &  (30777, 35851)   & (30752, 35862) & 100 & (30833, 35740) &96.7 \\[1ex]
 \emph{Naegleria} anaerobic  &  $10n$   & 2994 & (2809, 3175)  &  (2826, 3161) & 91.5 &  (2817, 3171)  & 96.7  \\
$n = 969$ & $50n$   & 9679 &  (9145, 10213)  & (9135, 10218) & 100 & (9140, 10218) & 100   \\
& $100n$   & 15671 &  (14809, 16530)  & (14791, 16544) & 100 & (14807, 16535) & 100  \\
& $1000n$   & 46683 &  (44226, 49260)   & (44062, 49283) & 100 & (44137, 49229) & 99.4 \\
\bottomrule
\end{tabularx}
\caption{\scriptsize{Additional sample $m$, BNP estimates of $K_{n,m}$,  95\% exact C.I., Mittag-Leffler C.I. with their coverages (of the the exact C.I.) and Gaussian credible intervals with their coverages (of the exact C.I.). All values are rounded to the nearest integer.}}
\label{tab:4} 
\end{table}

For the \emph{Naegleria} aerobic dataset, Figure \ref{fig:2} displays BNP estimates of $K_{n,m}$ with 95\% exact credible intervals, Mittag-Leffler credible intervals and Gaussian credible intervals as a function of $m$, for $m \in [0, 5n]$. See Appendix \ref{app:E} for similar plots for the other datasets described in Table \ref{table_fav}.

\begin{figure}[h!]
\begin{center}
\medskip
\small{\textit{Naegleria} aerobic}
\medskip

\includegraphics[width = \textwidth]{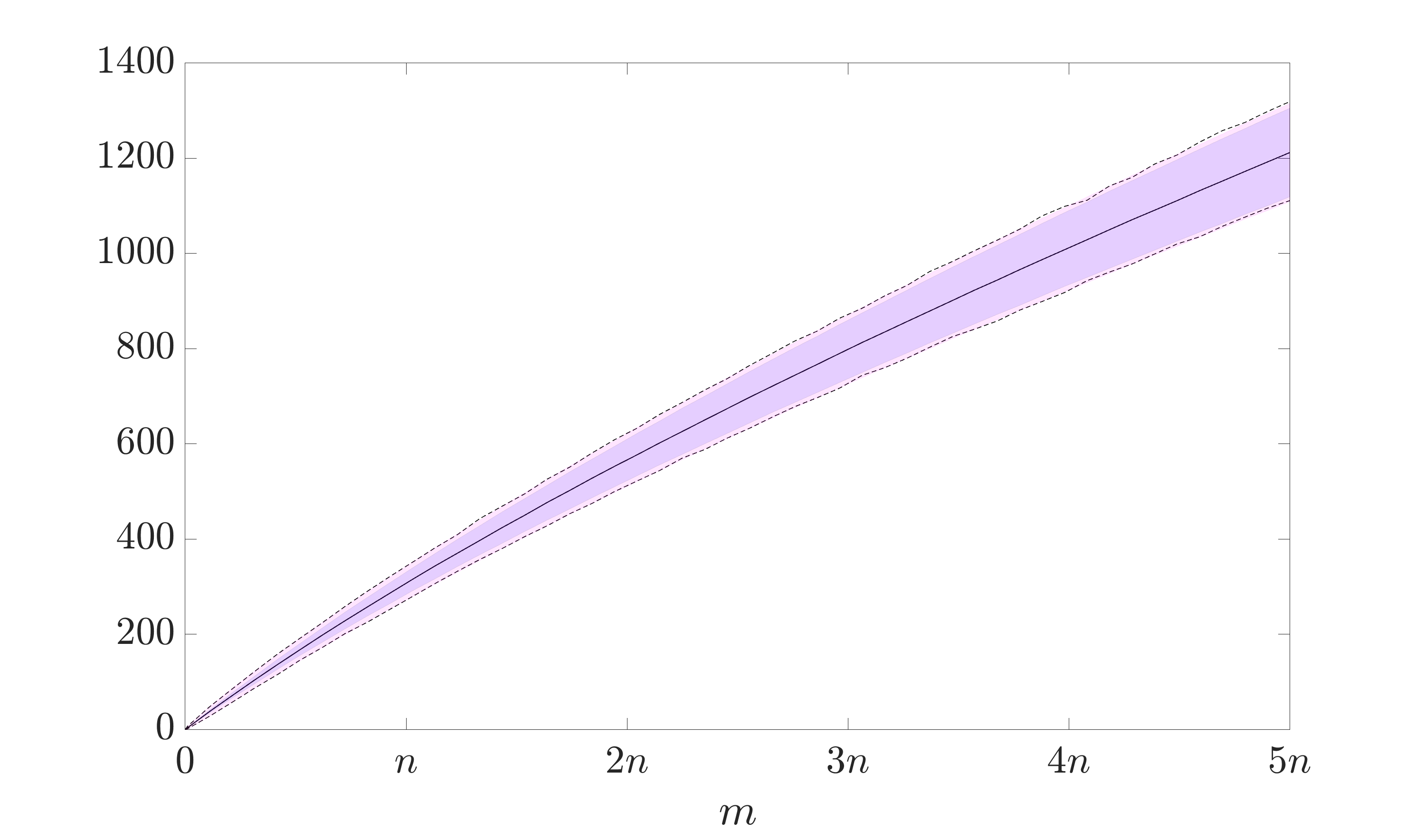}
\end{center}
\caption{\scriptsize{BNP estimates of $K_{n,m}$ (solid line --) with 95\% exact credible intervals (dashed line - -), Mittag-Leffler credible intervals (violet) and Gaussian credible intervals (pink), as a function of $m$, for $m \in [0, 5n]$.}}
\label{fig:2}
\end{figure}

Tables \ref{tab:3} and \ref{tab:4} confirm the behavior already observed on synthetic data, with Gaussian credible intervals providing a better coverage than Mittag-Leffler credible intervals for values of $m$ corresponding to $m = n, 2n, 3n, 4n, 5n, 10n$ and $50n$, and displaying near-to-optimal length. Again, the Mittag-Leffler credible intervals display a substantial improvement in performance as $m$ grows, with their coverage eventually matching or outperforming that of the Gaussian credible intervals    for values of $m = 100n, m = 1000n$. It shall be mentioned that, also when outperformed, the Gaussian intervals maintain in any case a coverage of at least 96\%.


\section{Discussion}\label{sec5}

The unseen-species problem is arguably the most popular example of ``species-sampling” problem. Given $n\geq1$ observed sample $\mathbf{X}_{n}=(X_{1},\ldots,X_{n})$ from a population of individuals belonging to different species $\mathbb{S}$, a broad class of ``species-sampling” (extrapolation) problems call for estimating features of the unknown species composition of $m\geq1$ additional unobservable samples $\mathbf{X}_{m}=(X_{n+1},\ldots,X_{n+m})$ from the same population. If $(N_{s,n})_{s\in\mathbb{S}}$ and $(N_{s,m})_{s\in\mathbb{S}}$ denote the (empirical) frequencies of species in $\mathbf{X}_{n}$ and $\mathbf{X}_{m}$, respectively, for $u,v\geq$ we set
\begin{equation}\label{ss_probl}
K_{n,m}(u,v)=\sum_{s\in\mathbb{S}}I(N_{s,n}=u)I(N_{s,m}=v),
\end{equation}
with $I(\cdot)$ being the indicator function, namely the number of species with frequency $u$ in $\mathbf{X}_{n}$ and with frequency $v$ in $\mathbf{X}_{m}$. The number of unseen species is recovered from \eqref{ss_probl} by taking
\begin{displaymath}
K_{n,m}=\sum_{v=1}^{m}K_{n,m}(0,v).
\end{displaymath}
We refer to \citet{Bal(23)} for an up-to-date overview on ``species-sampling” problems of the form \eqref{ss_probl}, with applications to biological data. While the emphasis is on the BNP approach, \citet{Bal(23)} also discusses the most recent advances in the distribution-free approach.

\subsection{Gaussian credible intervals for $K_{n,m}(0,v)$}

Among ``species-sampling” problems of the form \eqref{ss_probl}, the number of unseen rare species certainly stands out \citep{Den(19)}, with a rich literature under the distribution-free and BNP approach \citep{Fav(13),Hao(20)}. It calls for estimating $K_{n,m}(0,v)$, namely the number of hitherto unseen species that would be observed with frequency $v\geq1$ in the $m$ additional samples. Under the BNP approach with a Pitman-Yor prior, with $\alpha\in[0,1)$ and $\theta>-\alpha$, \citet[Proposition 1]{Bal(23)} provides a compound Binomial representation of the posterior distribution of $K_{n,m}(0,v)$, given $\mathbf{X}_{n}$. In particular, let $K_{n,m}^{(n)}(0,v)$ be random variable with such a distribution. Following the notation of \eqref{bino_rap}-\eqref{bino_rap0}, and denoting by $K^{\ast}_{m}(v)$ the (random) number of species with frequency $v$ in $m\geq1$ random samples from $\text{PYP}(\alpha,\theta+n)$, such that $K^{\ast}_{m}(v)\in\{0,1,\ldots,m\}$, there hold
\begin{itemize}
\item[i)] for $\alpha\in(0,1)$
\begin{equation}\label{bino_rap_freq}
K_{n,m}^{(n)}(0,v)\stackrel{\text{d}}{=}Q\left(K^{\ast}_{m}(v),\,B_{\frac{\theta}{\alpha}+k,\,\frac{n}{\alpha}-k}\right);
\end{equation}
\item[ii)] for $\alpha=0$
\begin{equation}\label{bino_rap0_freq}
K_{n,m}^{(n)}(0,v)\stackrel{\text{d}}{=}Q\left(K^{\ast}_{m}(v),\,\frac{\theta}{\theta+n}\right).
\end{equation}
\end{itemize}
The representations \eqref{bino_rap_freq}-\eqref{bino_rap0_freq} lead to posterior mean estimates of $K_{n,m}(0,v)$ that can be easily evaluated for any value of any $n$ and $m$ \citep[Equation 24 and Equation 30]{Fav(13)}. 

Exact credible intervals for $K_{n,m}(0,v)$ can be derived by Monte Carlo sampling the posterior distribution of $K_{n,m}(0,v)$, given $\mathbf{X}_{n}$. One may exploit the representations \eqref{bino_rap_freq}-\eqref{bino_rap0_freq}, as well as the predictive distributions of the Pitman-Yor prior, to sample $K_{n,m}^{(n)}(0,v)$, though with a computational burden that becomes overwhelming as $m$ increases. In particular, Algorithm \ref{alg:k} does not extend to $K_{n,m}^{(n)}(0,v)$; see \citet{Bal(23)} for details. Large $m$ asymptotic credible intervals for $K_{n,m}(0,v)$ can derived by relying on the large $m$ asymptotic behaviour of $K_{n,m}^{(n)}(0,v)$ in \citet[Theorem 4]{Fav(13)}, still involving the scaled Mittag-Leffler distribution. However, this approach would suffer from the same limitations as the approach of \citet{Fav(09)} for $K_{n,m}$. Alternatively, one may consider to extend our approach to in order to derive Gaussian credible intervals for $K_{n,m}(0,v)$, which is a promising direction for future work. This would require to extend Theorem \ref{thm_main} to $K^{\ast}_{m}(v)$, and then combine it with the representations \eqref{bino_rap_freq}-\eqref{bino_rap0_freq} along the same lines of Theorem \ref{thm_main_post}.

\subsection{More directions for future work}

Beyond the extension of our approach to other ``species-sampling”, this work opens several opportunities for future research. For example, one may consider the problem of deriving exact credible intervals for $K_{n,m}$, namely for any $n$ and $m$, by avoiding the use of Monte Carlo sampling the posterior distribution. By relying on the representations \eqref{bino_rap_freq}-\eqref{bino_rap0_freq}, this problem would require to refine the CLT \eqref{clt} with a Berry-Eseen type inequality or, better, to develop a concentration inequality for $K^{\ast}_{m}$. See \citet[Equation 11]{Con(24)} for a preliminary result in this direction, though limited to the case $\alpha=0$. Another research direction could involve the use of the Gaussian credible interval \eqref{eq: intervals} as a confidence interval under the semi-parametric approach of \citet{Fav(23)}, which assumes the tail of $P$ to be regularly varying of index $\alpha\in(0,1)$. This problem would open to the study of frequentist properties of our approach to construct Gaussian credible intervals under regular variation.


\section*{Acknowledgments}

Emanuele Dolera and Stefano Favaro received funding from the European Research Council (ERC) under the European Union’s Horizon 2020 research and innovation programme under grant agreement No 817257. Stefano Favaro also gratefully acknowledges the Italian Ministry of Education, University and Research (MIUR), “Dipartimenti di Eccellenza" grant 2013-2027. Stefano Favaro is also affiliated to IMATI-CNR “Enrico Magenes" (Milan, Italy).


\clearpage

\appendix\newpage\markboth{Appendix}{Appendix}
\renewcommand{\thesection}{\Alph{section}}
\numberwithin{equation}{section}
\numberwithin{figure}{section}


\section{Non-central generalized factorial coefficients}\label{app_comb}

We recall some definitions and basic results on signless Stirling numbers of the first type and on generalized factorial coefficients. We refer to the monograph by \citet[Chapter 2]{Cha(05)} for a comprehensive account on these combinatorial numbers and generalizations thereof. 

\subsection{Stirling numbers}\label{app_comb_stirling}

For $t>0$, the $(u,v)$-th (central or centered) signless Stirling number of the first type, denoted by $|s(u,v)|$, is the defined as the $v$-th coefficient in the expansion of $(t)_{(u)}$ into powers, i.e., 
\begin{equation}\label{stir1}
(t)_{(u)}=\sum_{v=0}^{u}|s(u,v)|t^{v}.
\end{equation}
It is assumed that $|s(0,0)|=1$, $|s(u,0)|=0$ for $u>0$ and $|s(u,v)|=0$ for $v>u$. As a generalization of \eqref{stir1}, for any $b>0$ let us consider the expansion of $(t+b)_{(u)}$ into powers, i.e., 
\begin{equation}\label{stir2}
(t+b)_{(u)}=\sum_{v=0}^{u}|s(u,v;b)|t^{v}.
\end{equation}
It is assumed that $|s(0,0;b)|=1$, $|s(u,0;b)|=(b)_{(u)}$ for $u>0$ and $|s(u,v;b)|=0$ for $v>u$. The $v$-th coefficient $|s(u,v;b)|$ in the expansion  \eqref{stir2} is defined as the non-central $(u,v)$-th signless Stirling number of the first type. The following identity shows a relationships between Stirling numbers of the first type and their corresponding non-central Stirling numbers, i.e., 
\begin{equation}\label{stir_ide}
|s(u,v;b)|=\sum_{i=v}^{u}{u\choose i}(b)_{(u-i)}|s(i,v)|.
\end{equation}
See \citet[Chapter 2]{Cha(05)} for the proof of Equation \eqref{stir_ide}, and for generalizations thereof.

\subsection{Generalized factorial coefficients}\label{app_comb_gencoeff}

For $t\in\mathbb{R}^{+}$, $a\in\mathbb{R}$ and $u\in\mathbb{N}_{0}$, let us consider the rising factorial of $at$ of order $u$, i.e., $(at)_{(u)}=\prod_{0\leq i\leq u-1}(at+i)$. The $(u,v)$-th (centered) generalized factorial coefficient, denoted by $\mathscr{C}(u,v;a)$, is defined as the  $v$-th coefficient in the expansion of $(at)_{(u)}$ into rising factorials, i.e.,
\begin{equation}\label{eq:genfact3}
(at)_{(u)}=\sum_{v=0}^{u}\mathscr{C}(u,v;a)(t)_{(v)}.
\end{equation}
It is assumed that $\mathscr{C}(0,0;a)=1$, $\mathscr{C}(u,0;a)=0$ for $u>0$, $\mathscr{C}(u,v;a)=0$ for $v>u$. As a generalization of \eqref{eq:genfact3}, for any $b>0$ let us consider the expansion of $(at-b)_{(u)}$ into rising factorials, i.e., 
\begin{equation}\label{eq:genfact5}
(at-b)_{(u)}=\sum_{v=0}^{u}\mathscr{C}(u,v;a,b)(t)_{(v)}.
\end{equation}
It is assumed that $\mathscr{C}(0,0;a,b)=1$, $\mathscr{C}(u,0;a,b)=(-b)_{(u)}$ for $u>0$, $\mathscr{C}(u,v;a,b)=0$ for $v>u$. The $v$-th coefficient $\mathscr{C}(u,v;a,b)$ in the expansion \eqref{eq:genfact5} is defined as the non-central $(u,v)$-th generalized factorial coefficient. In particular, an explicit expression for the $v$-th coefficient is
\begin{equation}\label{eq:genfact1}
\mathscr{C}(u,v;a,b)=\frac{1}{v!}\sum_{j=0}^{v}(-1)^{j}{v\choose j}(-ja-b)_{(u)}.
\end{equation}
The following identity shows an asymptotic relationships, as $a\rightarrow0$, between the non-central generalized factorial coefficients and the non-central signless Stirling numbers of the first type:
\begin{equation}\label{eq:genfact2}
\lim_{a\rightarrow0}\frac{\mathscr{C}(u,v;a,b)}{a^{k}}=|s(u,v;-b)|;
\end{equation}
See \citet[Chapter 2]{Cha(05)} for the proof of Equation \eqref{eq:genfact1} and the proof of Equation \eqref{eq:genfact2}.


\section{Sampling properties of the Pitman-Yor prior}\label{app_pyp}

\subsection{Predictive distributions}\label{predictive}

Due to the discreteness of the Pitman-Yor prior, a random sample $\mathbf{X}_{n}$ from $P\sim\text{PYP}(\alpha,\theta)$ induces a random partition of $[n]=\{1,\ldots,n\}$ into $K_{n}=j\leq n$ blocks, labelled by $\{S_{1}^{\ast},\ldots,S^{\ast}_{K_{n}}\}$, with frequencies $(N_{1,n},\ldots,N_{K_{n},n})=(n_{1},\ldots,n_{k})$ such that the $n_{i}$'s are positive and $\sum_{1\leq i\leq j}n_{i}=n$. The distribution of the random partition is determined by the predictive distribution, or generative scheme, of the Pitman-Yor prior \citep[Proposition 9]{Pit(95)}, i.e., 
\begin{displaymath}
\text{Pr}[X_{1}\in\cdot]=\nu(\cdot),
\end{displaymath}
and, for $n\geq1$,
\begin{equation}\label{eq:pred}
\text{Pr}[X_{n+1}\in\cdot\,|\, \mathbf{X}_{n}]=\frac{\theta+j\alpha}{\theta+n}\nu(\cdot)+\frac{1}{\theta+n}\sum_{i=1}^{j}(n_{i}-\alpha)\delta_{S_{i}^{\ast}}(\cdot).
\end{equation}
The expression in \eqref{eq:pred} provides the conditional distribution of the random partition of $[n+1]$ obtained after sampling one additional point $X_{n+1}$, given the previous partition of $[n]$. This is a linear combination of: i) the probability $(\theta+j\alpha)/(\theta+n)$ that $X_{n+1}$ belongs to a new species, i.e., creating a new block in the partition of the set $[n]$; ii) the probability  $(n_{i}-\alpha)/(\theta+n)$ that $X_{n+1}$ is of species $S^{\ast}_{i}$, i.e., increasing by $1$ the size of the block $S^{\ast}_{i}$ in the partition of the set $[n]$, for $i=1,\ldots,j$. A larger value of $\alpha\in(0,1)$ corresponds to a higher probability of observing new species in the sample. If $\alpha=0$, i.e. the Dirichlet prior, the probabilities in \eqref{eq:pred} become proportional to the empirical frequencies of species, and the probability of generating a new one no longer depends on the number of observed species.

\subsection{Sampling formula}\label{sampling}

For any $r \in [n]$ let $M_{r,n}$ be the number of distinct species with frequency $r$ in a random sample $\mathbf{X}_{n}$ from $P\sim\text{PYP}(\alpha,\theta)$, such that $M_{r,n}=\sum_{1\leq i\leq K_{n}}I(N_{i,n}=r)$ such that $\sum_{1\leq r\leq n}M_{r,n}=K_{n}$ and $\sum_{1\leq r\leq n}rM_{r,n}=n$. In particular, let us consider the (partition) set
\begin{displaymath}
\mathcal{M}_{n,j}=\left\{j\in\{1,\ldots,n\}\text{ and }(m_{1},\ldots,m_{n})\text{ : }m_{i}\geq0,\,\sum_{i=1}^{n}m_{i}=j,\,\sum_{i=1}^{n}im_{i}=n\right\}. 
\end{displaymath}
The distribution of the random variable $\mathbf{M}_{n}=(M_{1,n}, \ldots,M_{n,n})$ is referred to as the Ewens-Pitman sampling formula \citep[Proposition 9]{Pit(95)}, and it is such that for $(m_{1},\ldots,m_{n})\in\mathcal{M}_{n,j}$
\begin{equation}\label{eq_ewe_py}
\text{Pr}[\mathbf{M}_{n}=(m_{1},\ldots,m_{n})]=n!\frac{\left(\frac{\theta}{\alpha}\right)_{(\sum_{i=1}^{n}m_{i})}}{(\theta)_{(n)}}\prod_{i=1}^{n}\left(\frac{\alpha(1-\alpha)_{(i-1)}}{i!}\right)^{m_{i}}\frac{1}{m_{i}!}.
\end{equation}
The Ewens-Pitman sampling formula generalizes the Ewens sampling formula, which corresponds to $\alpha=0$. The distribution $K_{n}$ follows by marginalizing \eqref{eq_ewe_py}. For  $j\in\{1,\ldots,n\}$, if $\alpha\in(0,1)$ then
\begin{equation}\label{eq_dist_py}
\text{Pr}[K_{n}=j]=\frac{\left(\frac{\theta}{\alpha}\right)_{(j)}}{(\theta)_{(n)}}\mathscr{C}(n,j;\alpha),
\end{equation}
whereas if $\alpha=0$ then
\begin{equation}\label{eq_dist_dp}
\text{Pr}[K_{n}=j]=\frac{\theta^{j}}{(\theta)_{(n)}}|s(n,j)|.
\end{equation}
We refer to the monograph by \citet[Chapter 2 and Chapter 3]{Pit(06)} for a detailed account on distributional properties of the random partition induced by sampling from $P\sim\text{PYP}(\alpha,\theta)$.

\subsection{Large $n$ asymptotics}\label{asymptotics}

At the sampling level, the power-law tail behaviour of the PYP prior emerges from the analysis of the large $n$ asymptotic behaviour of $K_{n}$ and $M_{r,n}$. For $\alpha\in(0,1)$ and $\theta>-\alpha$ let $S_{\alpha,\theta/\alpha}$ be a Mittag-Leffler random variable \citet[Theorem 3.8]{Pit(06)} shows that, as $n\rightarrow+\infty$,
\begin{equation} \label{eq:sigma_diversity}
\frac{K_{n}}{n^{\alpha}}\stackrel{\text{a.s.}}{\longrightarrow} S_{\alpha,\theta/\alpha}
\end{equation}
and
\begin{equation} \label{eq:sigma_diversity_m}
\frac{M_{r,n}}{n^{\alpha}}\stackrel{\text{a.s.}}{\longrightarrow}  \frac{\alpha(1-\alpha)_{(r-1)}}{r!}S_{\alpha,\theta/\alpha}.
\end{equation}
See also \citet{DF(20a)} and \citet{DF(20b)} for refinements of \eqref{eq:sigma_diversity}. For $\alpha=0$, as $n\rightarrow+\infty$, \citet[Theorem 2.3]{Kor(73)} shows that, as $n\rightarrow+\infty$
\begin{equation}
\frac{K_{n}}{\log n}\stackrel{\text{a.s.}}{\longrightarrow}\theta
\end{equation}
and 
\begin{equation}
M_{r,n}\stackrel{\text{a.s.}}{\longrightarrow}P_{\theta/r},
\end{equation}
where $P_{\theta/r}$ is a Poisson random variable with parameter $\theta/r$. From \eqref{eq:sigma_diversity}, for large $n$, $K_{n}$ grows as $n^{\alpha}$. This is precisely the growth of the number of distinct species in $n\geq1$ random samples from a power-law distribution of exponent $c=\alpha^{-1}$. Also, from \eqref{eq:sigma_diversity} and \eqref{eq:sigma_diversity_m}, $p_{\alpha,r}=\alpha(1-\alpha)_{(r-1)}/r!$ is the large $n$ asymptotic proportion of the number of distinct species with frequency $r$. Then, $p_{\alpha,r}\propto r^{-\alpha -1}$ for large $r$, which is the distribution of the number of distinct species with frequency $r$ arising from a power-law distribution of exponent $c=\alpha^{-1}$.


\section{Proof of \eqref{mom_post_2} and of the LLN \eqref{lln_post} \label{proof1}}
The proof of \eqref{mom_post_2} follows the same lines when $\alpha \in (0, 1)$ or $\alpha = 0$, relying on representations \eqref{bino_rap} and \eqref{bino_rap0} respectively; we split the two cases for the sake of clarity. The LLN \eqref{lln_post}  follows directly from \eqref{mom_post_2} and its proof is carried out at the end of the section.
\subsection{Proof of \eqref{mom_post_2} in the case $\alpha \in (0, 1)$} \label{sec:proof_asy_1}
We resort to representation \eqref{bino_rap} which, in the regime $\theta = \tau m ,\,  n = \nu m, \, j = \varrho m$, becomes
$$  K_m^{(n)} \stackrel{\text{d}}{ =} Q\left(K_m^*, \, B_{\left(\frac{\tau}{\alpha} + \varrho\right)\, m, \,  \left(\frac{\nu}{\alpha} - \varrho\right)\, m }\right)
$$
By the law of total expectation and recalling the expression for the mean of the beta-binomial distribution, the asymptotics for the mean of $K_m^*$ in \eqref{mom2} and the definitions of $\mathfrak{m}_{\alpha, \lambda}$  and $\mathscr{M}_{\alpha, \tau, \nu, \varrho}$ in the case $\alpha \in (0, 1)$, 
\begin{align*}
 \mathbb{E} \left[K_m^{(n)} \right] &= \mathbb{E} \left[ \mathbb{E} \left[Q\left(K_m^*, \, B_{\left(\frac{\tau}{\alpha} + \varrho\right)\, m, \,  \left(\frac{\nu}{\alpha} - \varrho\right)\, m }\right) \, \bigg| \, K_m^* \right] \right]\\
 &  = \mathbb{E} \left[ K_m^* \cdot \frac{\tau + \varrho \alpha}{\tau + \nu}\right]\\
 & = m \cdot \mathfrak{m}_{\alpha, \lambda} \,  \frac{\tau + \varrho \alpha}{\tau + \nu} + O(1)\\
 & = m \cdot \mathscr{M}_{\alpha, \tau, \nu, \varrho} + O(1)
\end{align*}
By the law of total variance, and recalling the expression for the moments of the beta-binomial distribution, the asymptotics in \eqref{mom2} and the definitions of $\mathfrak{m}_{\alpha, \lambda}, \mathfrak{s}_{\alpha, \lambda}$ and $\mathscr{S}_{\alpha, \tau, \nu, \varrho}$ in the case $\alpha \in (0, 1)$, 
\begin{align*}
 & \operatorname{Var}\left(K_m^{(n)} \right) \\
 &\quad = \mathbb{E} \left[ \operatorname{Var} \left(Q\left(K_m^*, \, B_{\left(\frac{\tau}{\alpha} + \varrho\right)\, m, \,  \left(\frac{\nu}{\alpha} - \varrho\right)\, m }\right) \, \bigg| \, K_m^* \right) \right] \\
 &\quad  \quad + \operatorname{Var} \left( \mathbb{E}\left[Q\left(K_m^*, \, B_{\left(\frac{\tau}{\alpha} + \varrho\right)\, m, \,  \left(\frac{\nu}{\alpha} - \varrho\right)\, m }\right) \, \bigg| \, K_m^* \right] \right)\\
 & \quad  = \mathbb{E} \left[ K_m^* \, \frac{(\tau + \varrho \alpha)(\nu - \varrho \alpha)}{(\tau+ \nu)^2} \, \left(1 - \frac{\alpha}{(\tau + \nu)m+\alpha}\right)  
+ \frac{\left(K_m^*\right)^2}{m} \, \frac{(\tau + \varrho \alpha)(\nu - \varrho \alpha)}{(\tau+ \nu)^2 } \, \frac{\alpha m}{(\tau+\nu) m + \alpha} \right] \\
 &\quad  \quad +  \operatorname{Var} \left( K_m^* \cdot \frac{\tau + \varrho \alpha}{\tau + \nu}\right)\\
 &\quad  =  \left[  m \cdot \mathfrak{m}_{\alpha, \lambda} + O(1)  \right] \cdot \frac{(\tau + \varrho \alpha)(\nu - \varrho \alpha)}{(\tau+ \nu)^2} \left[ 1 + O\left(\frac{1}{m}\right)\right] + m \cdot \mathfrak{s}_{\alpha, \lambda} \,  \frac{(\tau + \varrho \alpha)^2}{(\tau + \nu)^2} + O(1)\\
 & \quad \quad + \frac{ m^2\cdot \mathfrak{m}_{\alpha, \lambda}^2 +O(m)}{m} \cdot  \frac{(\tau + \varrho \alpha)(\nu - \varrho \alpha)}{(\tau+ \nu)^2 } \left[ \frac{\alpha}{(\tau + \nu)  } +  O \left(\frac{1}{m}\right)\right]\\
 & \quad = m \left[ \mathfrak{m}_{\alpha, \lambda} \frac{(\tau+\varrho \alpha)(\nu - \varrho \alpha)}{(\tau + \nu)} + \mathfrak{s}_{\alpha, \lambda}\frac{(\tau+\varrho \alpha)^2}{(\tau + \nu)} + \mathfrak{m}^2_{\alpha, \lambda}\frac{ \alpha (\tau+\varrho \alpha)(\nu - \varrho \alpha)}{(\tau + \nu)^3} \right] +O(1)\\
  & \quad = m \cdot \mathscr{S}_{\alpha, \tau, \nu, \varrho} + O(1)
\end{align*}

\subsection{Proof of \eqref{mom_post_2} in the case $\alpha = 0$} \label{sec:proof_asy_2}
We resort to representation \eqref{bino_rap0} which, in the regime $\theta = \tau m ,\,  n = \nu m$, becomes
$$  K_m^{(n)} \stackrel{\text{d}}{ =} Q\left(K_m^*, \, \frac{\tau}{\lambda}\right)$$
By the law of total expectation, and recalling the expression for the mean of the binomial distribution, the asymptotics for the mean of $K_m^*$ in \eqref{mom2} and the definitions of $\mathfrak{m}_{0, \lambda}$  and $\mathscr{M}_{0, \tau, \nu, \varrho}$, 
\begin{align*}
 \mathbb{E} \left[K_m^{(n)} \right] &= \mathbb{E} \left[ \mathbb{E} \left[Q\left(K_m^*, \, \frac{\tau}{\lambda}\right) \, \bigg| \, K_m^* \right] \right]\\
 &= \mathbb{E} \left[ K_m^* \cdot \frac{\tau }{\lambda}\right]\\
 & = m \cdot \mathfrak{m}_{0, \lambda} \,  \frac{\tau }{\lambda} + O(1)\\
 & = m \cdot \mathscr{M}_{0, \tau, \nu, \varrho} + O(1)
\end{align*}
By the law of total variance, and recalling the expression for the moments of the binomial distribution, the asymptotics in \eqref{mom2} and the definitions of $\mathfrak{m}_{0, \lambda}, \mathfrak{s}_{0, \lambda}$ and $\mathscr{S}_{0, \tau, \nu, \varrho}$, 
\begin{align*}
 \operatorname{Var}\left(K_m^{(n)} \right) &= \mathbb{E} \left[ \operatorname{Var} \left(Q\left(K_m^*, \, \frac{\tau}{\lambda}\right) \, \bigg| \, K_m^* \right) \right] + \operatorname{Var} \left( \mathbb{E}\left[Q\left(K_m^*, \, \frac{\tau}{\lambda} \right) \, \bigg| \, K_m^* \right] \right)\\
 &  = \mathbb{E} \left[ K_m^* \cdot \frac{\tau \nu}{\lambda^2} \right]  +  \operatorname{Var} \left( K_m^* \cdot \frac{\tau}{\lambda }\right)\\
 & =   m \cdot \mathfrak{m}_{0, \lambda}  \cdot \frac{\tau \nu}{\lambda^2} + O(1)  + m \cdot \mathfrak{s}_{0, \lambda} \,  \frac{\tau^2}{\lambda^2} + O(1)\\
  & = m \cdot \mathscr{S}_{0, \tau, \nu, \varrho} + O(1)
\end{align*}

\subsection{Proof of the LLN \eqref{lln_post}} \label{sec:lln_proof}
To prove the LLN \eqref{lln_post}, we fix $\varepsilon >0$ and combine Chebychev inequality with the (variance) asymptotic expansion \eqref{mom_post_2}. In particular, we write
\begin{align*}
P\left[\left| \frac{K_m^{(n)} - \mathbb{E}\left[K_m^{(n)}\right]}{m} \right|> \varepsilon \right]& = P \left[\left| K_m^{(n)} - \mathbb{E}\left[K_m^{(n)}\right] \right| > m\varepsilon \right]\\
& \le \frac{\operatorname{Var}\left(K_m^{(n)}\right)}{m^2 \varepsilon^2}\\
&= O\left(\frac{1}{m}\right) \rightarrow 0
\end{align*}
as $m \rightarrow +\infty$. Since the asymptotic expansion \eqref{mom_post_1} of $\E\left[K_m^{(n)}\right]$ implies that $m^{-1}\mathbb{E}\left[K_m^{(n)}\right]\rightarrow \mathscr{M}_{\alpha, \tau, \nu, \varrho}$ as $m \rightarrow+\infty$, the proof is concluded by means of Slutsky's theorem.

\section{Proof of the CLT \eqref{clt_post}\label{proof2}}\label{sec:d}
We begin by providing some detail on the sketch of proof presented in section \ref{sec3}. Equality \eqref{F=I1+I2} follows from the chain of equalities
\begin{align*}
F_m(x) & =\int_0^{+\infty} P\left[Q_m(z) \le m\mathscr{M}_{\alpha, \tau, \nu, \varrho} + \sqrt{m} \mathscr{S}_{\alpha, \tau, \nu, \varrho} \, x \right] \ \mu_{\frac{K_m^*}{m}}(\mathrm{d} z) \\
& = \int_0^{+\infty} P\left[V_m(z) \le \frac{\sqrt{m}\, \left[\mathscr{M}_{\alpha, \tau, \nu, \varrho} - \mu(z)\right] + \mathscr{S}_{\alpha, \tau, \nu, \varrho} \, x}{\sigma(z)}\right] \ \mu_{\frac{K_m^*}{m}}(\mathrm{d} z) \\
& = \int_0^{+\infty}\Phi\left( \frac{\sqrt{m}\, \left[\mathscr{M}_{\alpha, \tau, \nu, \varrho} - \mu(z)\right] + \mathscr{S}_{\alpha, \tau, \nu, \varrho} \, x}{\sigma(z)} \right)\ \mu_{\frac{K_m^*}{m}}(\mathrm{d} z) \\
& \quad+ \int_0^{+\infty} \left\{F_{V_m(z)}\left( \frac{\sqrt{m}\, \left[\mathscr{M}_{\alpha, \tau, \nu, \varrho} - \mu(z)\right] + \mathscr{S}_{\alpha, \tau, \nu, \varrho} \, x}{\sigma(z)} \right) \right.\\
&\quad\quad\quad\quad\quad \left. - \Phi\left( \frac{\sqrt{m}\, \left[\mathscr{M}_{\alpha, \tau, \nu, \varrho} - \mu(z)\right] + \mathscr{S}_{\alpha, \tau, \nu, \varrho} \, x}{\sigma(z)} \right)\right\}\ \mu_{\frac{K_m^*}{m}}(\mathrm{d} z) \\
& = : \mathcal{I}^{(m)}_1(x) + \mathcal{I}^{(m)}_2(x).
\end{align*}

The following two lemmas make use Proposition \ref{p2} and the CLT for $K_m^*$,  i.e. \eqref{clt} of theorem \ref{thm_main}, and are instrumental to the proof of \eqref{rv_1} trough Proposition \ref{lem:lem3_ter}.

\begin{lem}
\label{lem:lem1_ter}
If $Y$ is a Gaussian random variable with mean $0$ and variance $\mathfrak{s}_{\alpha, \lambda}$, then
\begin{displaymath}
\mathbb{E} \left[\Phi\left(\frac{\mu'(\mathfrak{m}_{\alpha, \lambda}) Y + \mathscr{S}_{\alpha, \tau, \nu, \varrho} \, x}{\sigma\left(\mathfrak{m}_{\alpha, \lambda} \right)} \right)\right] = \Phi(x)
\end{displaymath}
for every $x\in \mathbb{R}$.
\end{lem}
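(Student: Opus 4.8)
The plan is to reduce the expectation to a single Gaussian tail probability by introducing an auxiliary standard normal variable, after which the identity becomes a matching of variances supplied by Proposition \ref{p2}. First I would write $\Phi(t) = \mathbb{P}[W \le t]$, where $W \sim N(0,1)$ is taken independent of $Y$; this is merely the defining property of the standard Gaussian distribution function. Substituting $t = (\mu'(\mathfrak{m}_{\alpha, \lambda}) Y + \mathscr{S}_{\alpha, \tau, \nu, \varrho}\, x)/\sigma(\mathfrak{m}_{\alpha, \lambda})$ and taking expectation over $Y$, the tower property yields
\[
\mathbb{E}\left[\Phi\left(\frac{\mu'(\mathfrak{m}_{\alpha, \lambda}) Y + \mathscr{S}_{\alpha, \tau, \nu, \varrho}\, x}{\sigma(\mathfrak{m}_{\alpha, \lambda})}\right)\right] = \mathbb{P}\left[\sigma(\mathfrak{m}_{\alpha, \lambda}) W - \mu'(\mathfrak{m}_{\alpha, \lambda}) Y \le \mathscr{S}_{\alpha, \tau, \nu, \varrho}\, x\right].
\]

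Next I would identify the law of the linear combination $\sigma(\mathfrak{m}_{\alpha, \lambda}) W - \mu'(\mathfrak{m}_{\alpha, \lambda}) Y$. Since $W$ and $Y$ are independent centered Gaussians, with variances $1$ and $\mathfrak{s}^2_{\alpha, \lambda}$ respectively, their linear combination is again a centered Gaussian, with variance
\[
\sigma^2(\mathfrak{m}_{\alpha, \lambda}) + \left(\mu'(\mathfrak{m}_{\alpha, \lambda})\right)^2 \mathfrak{s}^2_{\alpha, \lambda}.
\]
At this stage the crucial input is Proposition \ref{p2}: the identity \eqref{eq:sigmaQ(M)} states precisely that this variance equals $\mathscr{S}^2_{\alpha, \tau, \nu, \varrho}$. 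Consequently the probability above equals $\Phi\big(\mathscr{S}_{\alpha, \tau, \nu, \varrho}\, x / \mathscr{S}_{\alpha, \tau, \nu, \varrho}\big) = \Phi(x)$, and the arbitrariness of $x \in \mathbb{R}$ completes the argument.

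There is no serious obstacle here: the computation is the classical Gaussian smoothing identity, and every nontrivial algebraic fact has already been isolated in Proposition \ref{p2}. The only points demanding care are the choice of $W$ to be independent of $Y$ (so that the variances of the two summands simply add) and the bookkeeping of the variance of $Y$, which must be read as $\mathfrak{s}^2_{\alpha, \lambda}$ in order to be consistent both with the scaling in the CLT \eqref{clt} and with the left-hand side of \eqref{eq:sigmaQ(M)}. With that convention the matching of variances is exact, no remainder terms intervene, and the displayed equality follows for every $x$.
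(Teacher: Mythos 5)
Your proof is correct and follows essentially the same route as the paper's: introduce an independent standard Gaussian so that the expectation of $\Phi$ becomes a single Gaussian probability, then use the variance identity \eqref{eq:sigmaQ(M)} of Proposition \ref{p2} to normalize. Your remark that the variance of $Y$ must be read as $\mathfrak{s}^2_{\alpha,\lambda}$ (despite the statement's wording) is also consistent with how the paper's own proof uses it.
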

\begin{proof}
We introduce a standard Gaussian random variable $Z$, with $Z$ independent from the random variable $Y$. By a standard property of conditional probability, 
\begin{displaymath}
\Phi\left(\frac{\mu'(\mathfrak{m}_{\alpha, \lambda}) Y + \mathscr{S}_{\alpha, \tau, \nu, \varrho} \, x}{\sigma\left(\mathfrak{m}_{\alpha, \lambda} \right)} \right) = P \left[ Z \le \frac{\mu'(\mathfrak{m}_{\alpha, \lambda}) Y + \mathscr{S}_{\alpha, \tau, \nu, \varrho} \, x}{\sigma\left(\mathfrak{m}_{\alpha, \lambda} \right)} \, \bigg| \,Y \right],
\end{displaymath}
which implies
\begin{displaymath}
\mathbb{E}\left[ \Phi\left(\frac{\mu'(\mathfrak{m}_{\alpha, \lambda}) Y + \mathscr{S}_{\alpha, \tau, \nu, \varrho} \, x}{\sigma\left(\mathfrak{m}_{\alpha, \lambda} \right)}\right)\right]= P \left[ \frac{\sigma(\mathfrak{m}_{\alpha, \lambda}) Z - \mu'(\mathfrak{s}_{\alpha, \lambda}) Y}{\mathscr{S}_{\alpha, \tau, \nu, \varrho} }\le x\right].
\end{displaymath}
To conclude, notice that $\frac{\sigma(\mathfrak{m}_{\alpha, \lambda}) Z - \mu'(\mathfrak{s}_{\alpha, \lambda}) Y}{\mathscr{S}_{\alpha, \tau, \nu, \varrho} }$ is a linear combination of independent Gaussian random variables, hence it is Gaussian with mean $0$ and variance
\begin{displaymath}
 \frac{ \sigma^2\left(\mathfrak{m}_{\alpha, \lambda}\right) + \mathfrak{s}_{\alpha, \lambda}^{2}\cdot \left(\mu'\left(\mathfrak{m}_{\alpha, \lambda}\right)\right)^2 }{\mathscr{S}_{\alpha, \tau, \nu, \varrho}^2} =1,
\end{displaymath}
where the last identity follows from Proposition \ref{p2}. This completes the proof. 
\end{proof}

\begin{lem}
\label{lem:lemma2_ter}
Let $\psi:[0, +\infty) \to \mathbb{R}$ be a continuous function such that $\psi \in C^1((0, +\infty))$. If the function $\psi$ has bounded (first) derivative, then as $n \to + \infty$
\begin{displaymath}
\sqrt{n} \left[\psi \left(\frac{K_m^*}{m}\right) - \psi(\mathfrak{m}_{\alpha, \lambda}) \right] \stackrel{\text{w}}{\longrightarrow}\mathcal{N}\left(0, (\psi'(\mathfrak{m}_{\alpha, \lambda}))^2 \mathfrak{s}_{\alpha, \lambda}^2\right)
\end{displaymath}
\end{lem}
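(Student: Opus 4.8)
The plan is to recognize Lemma \ref{lem:lemma2_ter} as a textbook instance of the delta method, obtaining it by combining the law of large numbers \eqref{lln} and the central limit theorem \eqref{clt} of Theorem \ref{thm_main}. First I would rewrite \eqref{clt} in the equivalent centered-and-scaled form
\begin{displaymath}
\sqrt{m}\left(\frac{K_m^*}{m} - \mathfrak{m}_{\alpha, \lambda}\right) \stackrel{\text{w}}{\longrightarrow} \mathcal{N}\left(0, \mathfrak{s}_{\alpha, \lambda}^2\right),
\end{displaymath}
which is immediate from \eqref{clt} since the normalizing constant is $\sqrt{m\mathfrak{s}_{\alpha, \lambda}^2}$. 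This identifies the first-order fluctuation of $K_m^*/m$ about its limit $\mathfrak{m}_{\alpha, \lambda}$.

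The core step is a first-order Taylor expansion of $\psi$ about $\mathfrak{m}_{\alpha, \lambda}$. Because $K_m^* \in \{1, \ldots, m\}$ forces $K_m^*/m \ge 1/m > 0$ while $\mathfrak{m}_{\alpha, \lambda} > 0$, the closed interval with endpoints $K_m^*/m$ and $\mathfrak{m}_{\alpha, \lambda}$ lies inside $(0, +\infty)$, where $\psi$ is $C^1$. Hence the mean value theorem furnishes a (random) point $\xi_m$ between $K_m^*/m$ and $\mathfrak{m}_{\alpha, \lambda}$ for which
\begin{displaymath}
\sqrt{m}\left[\psi\left(\frac{K_m^*}{m}\right) - \psi(\mathfrak{m}_{\alpha, \lambda})\right] = \psi'(\xi_m)\cdot \sqrt{m}\left(\frac{K_m^*}{m} - \mathfrak{m}_{\alpha, \lambda}\right).
\end{displaymath}
I would then control the prefactor: by the LLN \eqref{lln}, $K_m^*/m \stackrel{\text{p}}{\longrightarrow} \mathfrak{m}_{\alpha, \lambda}$, so the squeezed point satisfies $\xi_m \stackrel{\text{p}}{\longrightarrow} \mathfrak{m}_{\alpha, \lambda}$, and since $\psi'$ is continuous at the interior point $\mathfrak{m}_{\alpha, \lambda} > 0$ the continuous mapping theorem gives $\psi'(\xi_m) \stackrel{\text{p}}{\longrightarrow} \psi'(\mathfrak{m}_{\alpha, \lambda})$. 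Slutsky's theorem, applied to the product of $\psi'(\xi_m)$ (converging in probability to the constant $\psi'(\mathfrak{m}_{\alpha, \lambda})$) and $\sqrt{m}(K_m^*/m - \mathfrak{m}_{\alpha, \lambda})$ (converging weakly to $\mathcal{N}(0, \mathfrak{s}_{\alpha, \lambda}^2)$), then delivers the claimed limit $\mathcal{N}(0, (\psi'(\mathfrak{m}_{\alpha, \lambda}))^2 \mathfrak{s}_{\alpha, \lambda}^2)$.

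The only genuine point requiring care — and the step I expect to be the main obstacle — is ensuring that the expansion is legitimate without $\xi_m$ drifting toward the boundary $0$, where $\psi$ is merely continuous and $\psi'$ need not be controlled. This is exactly where the hypotheses enter: the strict positivity of $\mathfrak{m}_{\alpha, \lambda}$ together with the LLN confines $\xi_m$ to a fixed neighborhood of $\mathfrak{m}_{\alpha, \lambda}$ with probability tending to one, and the assumed boundedness of $\psi'$ guarantees that the asymptotically negligible event $\{K_m^*/m \text{ near } 0\}$ cannot spoil the in-probability convergence of $\psi'(\xi_m)$. I would make this precise by restricting to the event $\{\,|K_m^*/m - \mathfrak{m}_{\alpha, \lambda}| < \mathfrak{m}_{\alpha, \lambda}/2\,\}$, whose complement has vanishing probability, and invoking continuity of $\psi'$ there.

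As an alternative that sidesteps the mean value theorem entirely, I could decompose
\begin{displaymath}
\sqrt{m}\left[\psi\left(\frac{K_m^*}{m}\right) - \psi(\mathfrak{m}_{\alpha, \lambda})\right] = \psi'(\mathfrak{m}_{\alpha, \lambda})\,\sqrt{m}\left(\frac{K_m^*}{m} - \mathfrak{m}_{\alpha, \lambda}\right) + \sqrt{m}\,R_m,
\end{displaymath}
and show $\sqrt{m}\,R_m \stackrel{\text{p}}{\longrightarrow} 0$ using differentiability of $\psi$ at $\mathfrak{m}_{\alpha, \lambda}$ (so that $R_m = o(|K_m^*/m - \mathfrak{m}_{\alpha, \lambda}|)$) together with the tightness of $\sqrt{m}(K_m^*/m - \mathfrak{m}_{\alpha, \lambda})$ supplied by \eqref{clt}; a final application of Slutsky's theorem then gives the same conclusion. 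In the intended application (Proposition \ref{lem:lem3_ter}) the function $\psi = \mu$ is affine, so $\mu'$ is constant and both the continuity and boundedness requirements hold trivially, making the argument entirely routine in that case.
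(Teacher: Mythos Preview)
Your proposal is correct and follows essentially the same delta-method argument as the paper. The only cosmetic difference is that the paper writes the first-order expansion via the integral form of the fundamental theorem of calculus, $\psi(K_m^*/m) - \psi(\mathfrak{m}_{\alpha,\lambda}) = (K_m^*/m - \mathfrak{m}_{\alpha,\lambda}) \int_0^1 \psi'\bigl(\mathfrak{m}_{\alpha,\lambda} + t[K_m^*/m - \mathfrak{m}_{\alpha,\lambda}]\bigr)\,\mathrm{d}t$, rather than the mean value theorem---this sidesteps any concern about the measurability of the intermediate point $\xi_m$---but otherwise the ingredients (the LLN \eqref{lln}, the CLT \eqref{clt}, boundedness of $\psi'$, and Slutsky's theorem) and their order of use are identical.
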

\begin{proof}
By means of the fundamental theorem of calculus, we can write that
\begin{equation}\label{part21}
\sqrt{n} \left[\psi \left(\frac{K_m^*}{m}\right) - \psi(\mathfrak{m}_{\alpha, \lambda}) \right] = \sqrt{n} \left(\frac{K_m^*}{m} - \mathfrak{m}_{\alpha, \lambda}\right)\int_0^1 \psi'\left(\mathfrak{m}_{\alpha, \lambda} + t \left[ \frac{K_m^*}{m} - \mathfrak{m}_{\alpha, \lambda} \right]\right) \mathrm{d}t.
\end{equation}
By \eqref{lln}, we have that $m^{-1}K_m^*-\mathfrak{m}_{\alpha, \lambda} \stackrel{\text{p}}{\longrightarrow}0$, as $m\rightarrow+\infty$. Since $\psi'$ is bounded, as $m\rightarrow+\infty$
\begin{equation}\label{part22}
\int_0^1 \psi'\left(\mathfrak{m}_{\alpha, \lambda} + t \left[ \frac{K_m^*}{m} - \mathfrak{m}_{\alpha, \lambda} \right]\right) \mathrm{d}t \stackrel{\text{p}}{\longrightarrow} \psi'(\mathfrak{m}_{\alpha, \lambda})
\end{equation}
and, by \eqref{clt}, as $m\rightarrow+\infty$
\begin{equation}\label{part23}
\sqrt{n} \left(\frac{K_m^*}{m} - \mathfrak{m}_{\alpha, \lambda}\right) \stackrel{\text{w}}{\longrightarrow}\mathcal{N}\left(0,  \mathfrak{s}_{\alpha, \lambda}^2\right).
\end{equation}
From \eqref{part21}, with \eqref{part22} and \eqref{part23}, the proof completed by means of Slutsky's theorem.
\end{proof}

\subsection{Proof of Proposition \ref{BE_Q} in the case $\alpha = 0$}\label{sec:d1}
\begin{proof}[\underline{Proof of \eqref{mean_Q} and \eqref{var_Q}}]
By a simple computation,
\begin{align*}
\mathbb{E}[Q_m(z)] &= \left\lfloor mz \right\rfloor  \cdot \frac{\tau}{\tau + \nu}\\
& = m z \cdot \frac{\tau}{\tau + \nu} - \left(m z - \left\lfloor mz \right\rfloor\right) \cdot\frac{\tau}{\tau + \nu}
\end{align*}
Since $m z - \left\lfloor mz \right\rfloor<1$ by definition, this proves the first equality. For the second equaility, 
\begin{align*}
 \operatorname {Var} \left(Q_m(z)\right)&= \left\lfloor mz \right\rfloor  \cdot \frac{\tau \nu}{(\tau + \nu)^2}\\
& = m z \cdot \frac{\tau \nu}{(\tau + \nu)^2}- \left(m z - \left\lfloor mz \right\rfloor\right) \cdot\frac{\tau \nu}{(\tau + \nu)^2}\\
& = m \sigma^2(z) + O(1)
\end{align*}
so that the $O(1)$ again accounts for the fact that we are discarding the floor function.
\end{proof}

\begin{proof}[\underline{Proof of \eqref{eq:be_Qm}}]
This is the standard Berry-Essen bound for the binomial distribution - see \citep[Chapter V, Theorem 4]{Pet(75)}.
\end{proof}

\subsection{Proof of Proposition \ref{BE_Q} in the case $\alpha \in (0, 1)$}	\label{sec:d2}
\begin{proof}[\underline{Proof of \eqref{mean_Q} and \eqref{var_Q}}]
By the law of total expectation, 
\begin{align*}
\mathbb{E}[Q_m(z)]&  = \mathbb{E} \left[\mathbb{E}\left[Q_m(z) \,  \bigg| \, B_{\left(\frac{\tau}{\alpha} + \varrho\right)\, m, \,  \left(\frac{\nu}{\alpha} - \varrho\right)\, m } \right]\right] \\
& = \left\lfloor mz \right\rfloor \mathbb{E}\left[ B_{\left(\frac{\tau}{\alpha} + \varrho\right)\, m, \,  \left(\frac{\nu}{\alpha} - \varrho\right)\, m } \right] \\
& = m z \cdot \frac{\tau + \varrho \alpha}{\tau + \nu} - \left(m z - \left\lfloor mz \right\rfloor\right) \cdot \frac{\tau + \varrho \alpha}{\tau + \nu}
\end{align*}
Since $m z - \left\lfloor mz \right\rfloor<1$ by definition, this proves the first equality. By the law of total variance,
\begin{align*}
 \operatorname {Var} \left(Q_m(z)\right)&=\mathbb{E} \left[\operatorname {Var} \left(Q_m(z) \, \bigg| \,  B_{\left(\frac{\tau}{\alpha} + \varrho\right)\, m, \,  \left(\frac{\nu}{\alpha} - \varrho\right)\, m }\right)\right]+\operatorname {Var} \left(\mathbb{E} \left[Q_m(z) \, \bigg| \,  B_{\left(\frac{\tau}{\alpha} + \varrho\right)\, m, \,  \left(\frac{\nu}{\alpha} - \varrho\right)\, m }\right]\right)   \\
 & = \left\lfloor mz \right\rfloor \mathbb{E}\left[ B_{\left(\frac{\tau}{\alpha} + \varrho\right)\, m, \,  \left(\frac{\nu}{\alpha} - \varrho\right)\, m } -  B^2_{\left(\frac{\tau}{\alpha} + \varrho\right)\, m, \,  \left(\frac{\nu}{\alpha} - \varrho\right)\, m } \right] + \left\lfloor mz \right\rfloor \operatorname{Var}\left( B_{\left(\frac{\tau}{\alpha} + \varrho\right)\, m, \,  \left(\frac{\nu}{\alpha} - \varrho\right)\, m }\right)\\
 & = m z \,\frac{\tau + \varrho \alpha}{\tau +\nu}   \left[  1 - \frac{\tau + \varrho \alpha}{\tau + \nu} + \alpha z\, \frac{ \nu - \varrho \alpha}{\left(\tau + \nu\right)^2}\right] + O(1)
\end{align*}
where the $O(1)$ accounts for the fact that we are discarding the floor function and that we are substituting the exact expression for the variance of $B_{\left(\frac{\tau}{\alpha} + \varrho\right)\, m, \,  \left(\frac{\nu}{\alpha} - \varrho\right)\, m }$  with its asymptotic principal part:
\begin{align*}
    m\, \operatorname{Var}\left(B_{\left(\frac{\tau}{\alpha} + \varrho\right)\, m, \,  \left(\frac{\nu}{\alpha} - \varrho\right)\, m }\right)  & = \frac{\alpha \, (\tau + \varrho \alpha)(\nu - \varrho \alpha)}{ (\tau + \nu)^2 (\tau + \nu + \alpha/m) } \\
    & = \frac{\alpha \, (\tau + \varrho \alpha)(\nu - \varrho \alpha)}{ (\tau + \nu)^3 \left[1 + \alpha/(\tau m + \nu m) \right] } \\
    & =  \frac{\alpha \, (\tau + \varrho \alpha)(\nu - \varrho \alpha)}{ (\tau + \nu)^3  } \, \left[1  + O\left(\frac{1}{m}\right)\right]
\end{align*}   
\end{proof}

\begin{proof}[\underline{Proof of \eqref{eq:be_Qm}}] The outline of the proof is as follows: fix $\delta \in (0, 1/4)$ and a constant $\mathcal{C}$ and start with the well-known inequality \citep[Chapter V, Theorem 2]{Pet(75)}
\begin{displaymath}
    \left\|F_{V_m(z)} - \Phi \right\|_\infty \le \int_{| \xi| \le \mathcal{C}\,\sigma(z) \, m^\delta} \left|\frac{\varphi_{V_m(z)} (\xi) - e^{- \frac{\xi^2}{2}}}{\xi} \right| \, \mathrm{d} \xi + \tilde{\mathcal{C}}{m^{-\delta}} 
    \end{displaymath}
    where $\varphi_{V_m(z)}$ denotes the characteristic function of $V_m(z)$ and $\tilde{\mathcal{C}} =\max_{z \in [\zeta_0, \zeta_1]} \sigma(z)\,  \mathcal{C}$. For notational convenience, let $B_m = B_{\left(\frac{\tau}{\alpha} + \varrho\right)\, m, \,  \left(\frac{\nu}{\alpha} - \varrho\right)\, m }$ and define, for $z \in [\zeta_0, \zeta_1]$,
 \begin{align*}
G_m(z) &= \sqrt{m} \left[\mu(z)  - z B_m \right] \\
S_m (z)&= \frac{z}{\sigma^2(z)} \left[B_m - B_m^2 \right] 
\end{align*}
and 
\begin{displaymath}
S(z) = \frac{z}{\sigma^2(z)} \cdot \frac{(\tau + \varrho \alpha)(\nu - \varrho \alpha)}{(\tau + \nu)^2}.
\end{displaymath}
Further, let $G(z)$ denote a Gaussian random variable with mean $0$ and variance 
$$s^2(z) = z^2 \cdot \frac{\alpha (\tau + \varrho \alpha)(\nu - \varrho \alpha)}{(\tau + \nu)^3}, $$ 
independent of $B_m$. In steps 1--3 we rewrite the right-end side of the inequality as 
\begin{align*}
   & \left\|F_{V_m(z)} - \Phi \right\|_\infty \\
   &  \quad \le \int_{| \xi| \le \mathcal{C}\,\sigma(z) \, m^\delta} \left|\frac{\mathbb{E}\left[ e^{ - i  \frac{\xi } {\sigma(z)} G_m(z)- \frac{\xi^2}{2} S_m(z)} \right] - \mathbb{E}\left[ e^{ - i  \frac{\xi } {\sigma(z)} G(z)- \frac{\xi^2}{2} S(z)} \right] }{\xi} \right| \, \mathrm{d} \xi \\
   & \quad \quad +  \tilde{C_2}{m^{2\delta - \frac{1}{2}}} +
 \tilde{\mathcal{C}}{m^{-\delta}} 
    \end{align*}
 Then, making use of the triangular inequality, we  further split the problem in two parts:
 \begin{align*}
   & \left\|F_{V_m(z)} - \Phi \right\|_\infty \\
   &  \quad \le \int_{| \xi| \le \mathcal{C}\,\sigma(z) \, m^\delta} \left|\frac{\mathbb{E}\left[ e^{ - i  \frac{\xi } {\sigma(z)} G_m(z)- \frac{\xi^2}{2} S_m(z)} \right] - \mathbb{E}\left[ e^{ - i  \frac{\xi } {\sigma(z)} G_m(z)- \frac{\xi^2}{2} S(z)} \right] }{\xi} \right| \, \mathrm{d} \xi \\
   & \quad 	\quad + \int_{| \xi| \le \mathcal{C}\,\sigma(z) \, m^\delta} \left|\frac{\mathbb{E}\left[ e^{ - i  \frac{\xi } {\sigma(z)} G_m(z)- \frac{\xi^2}{2} S(z)} \right] - \mathbb{E}\left[ e^{ - i  \frac{\xi } {\sigma(z)} G(z)- \frac{\xi^2}{2} S(z)} \right] }{\xi} \right| \, \mathrm{d} \xi\\
   & \quad \quad +  \tilde{C_2}{m^{2\delta - \frac{1}{2}}} +
 \tilde{\mathcal{C}}{m^{-\delta}} \\
 & \quad = \mathcal{I}_m^{(1)} (z) + \mathcal{I}_m^{(2)}(z) +  \tilde{C_2}{m^{2\delta - \frac{1}{2}}} +
 \tilde{\mathcal{C}}{m^{-\delta}} 
    \end{align*}
In steps 4 and 5 we bound $\mathcal{I}_m^{(1)}(z) $ and $\mathcal{I}_m^{(2)} (z)$ for every $z \in [\zeta_0, \zeta_1]$ respectively by
\begin{displaymath}
\mathcal{I}_m^{(1)}(z) \le c_1 \, m^{2\delta - \frac{1}{2}} + c m^{-3/2}
\end{displaymath}
for some suitable constants $c_1, c >0$ and 
\begin{displaymath}
\mathcal{I}_m^{(2)}(z) \le c_2 \, m^{-\gamma}
\end{displaymath}
for any $\gamma \in \left(0, \frac{1}{2}\right)$ and some constant $c_2>0$ depending only on $\gamma$. Thus,  
\begin{displaymath}
 \left\|F_{V_m(z)} - \Phi \right\|_\infty \le c_1 \, m^{2\delta - \frac{1}{2}} + c m^{-3/2}
 + c_2 \, m^{-\gamma}+ \tilde{C}_2{m^{2\delta - \frac{1}{2}}} +
 \tilde{\mathcal{C}}{m^{-\delta}}.
 \end{displaymath}
To conclude, it is easy to check that for every $\delta \in \left(0, \frac{1}{4}\right)$ and every $\gamma \in \left(\frac{1}{6}, \frac{1}{2}\right)$, 
\begin{displaymath}
\min\left(\frac{3}{2}, \delta,   -2\delta +\frac{1}{2} , \gamma \right) \ge \frac{1}{6}, 
\end{displaymath}
 whence for every $z \in [\zeta_0, \zeta_1]$ it holds 
\begin{displaymath}
 \left\|F_{V_m(z)} - \Phi \right\|_\infty \le  \bar{C} \, m^{-\frac{1}{6}}
\end{displaymath}
for some constant $\bar{C}>0$. This concludes the proof of \eqref{eq:be_Qm}. 
     
\noindent \underline{Step 1.} Let $f_{B_m}$ denote the density of $B_m$ and write
\begin{align*}
\varphi_{V_m(z)} (\xi) & = e^{- i \xi \sqrt{m} \frac{\mu(z)}{\sigma(z)}} \cdot \varphi_{Q_m(z)}\left(  \frac{ \xi }{\sqrt{m} \sigma(z)}\right) \\
& = e^{- i \xi \sqrt{m} \frac{\mu(z)}{\sigma(z)}} \cdot \mathbb{E} \left[\varphi_{Q\left(\lfloor mz \rfloor, B_m\right)}\left(  \frac{ \xi}{\sqrt{m} \sigma(z)}\right)\right]\\
& = \int_{0}^1 \exp \left\{ - i \xi \sqrt{m} \frac{\mu(z)}{\sigma(z)} +  \lfloor{mz}\rfloor  \log \left[ 1 + p \left(e^{ i \xi  \frac{1}{\sqrt{m} \sigma(z)}} -1 \right)\right]\right\} \, f_{B_m}(p) \, \mathrm{d}p
\end {align*}
If $\xi$ satisfies 
\begin{equation}
\label{eq: BE_xi_bound}
| \xi| \le \mathcal{C}\,\sigma(z) \, m^\delta, 
\end{equation}
then \citep[Chapter IV, Lemma 5]{Pet(75)} guarantees that, for every $p \in [0, 1]$, 
\begin{displaymath}
   \left|   p  \left(e^\frac{i \, \xi }{\sqrt{m} \sigma(z)}-1\right)\right| \le p \, \left| \frac{\xi }{\sqrt{m} \sigma(z)}\right| \le C \, m^{\delta - \frac{1}{2}}.
\end{displaymath} 
Then, we can apply Taylor's formula to $\log \left[ 1 + p \left(e^{  \frac{i \xi }{\sqrt{m}\sigma(z)}} -1 \right)\right]$ and then to $\left(e^\frac{i \, \xi }{\sqrt{m} \sigma(z)} - 1\right)$ to obtain
\begin{align*}
\varphi_{V_m(z)} (\xi)  &=  \int_0^1 \exp \left\{ - i  \frac{\xi } {\sigma(z)} \sqrt{m} \left[\mu(z)  - z p \right] - \frac{\xi^2}{2} \frac{z}{\sigma^2(z)} \left[p - p^2 \right]\right\}  \, \mathfrak{R}_m(p, z, \xi) \, f_{B_m}(p) \, \mathrm{d}p\\
& =  \mathbb{E} \left[ e^{ - i  \frac{\xi } {\sigma(z)} G_m(z)- \frac{\xi^2}{2} S_m(z)}  \cdot \mathfrak{R}_m\left( B_m, z, \xi\right) \right]
\end{align*}
where
\begin{align*}
\mathfrak{R}_m(p, z, \xi) &= \exp \left\{(mz - \lfloor mz \rfloor ) \, \log \left[ 1 + p \left(e^{ i \xi  \frac{1}{\sqrt{m} \sigma(z)}} -1 \right) \right]  \right.\\
& \quad\quad\quad  +mz p \left[\left(e^\frac{i \, \xi }{\sqrt{m} \sigma(z)} - 1\right) -\left(\frac{i \, \xi }{\sqrt{m} \sigma(z)} - \frac{\xi^2 }{2 m \sigma^2(z)}\right) \right]\\
& \quad\quad\quad + \frac{1}{2}\, mz  p^2 \left[ \left(e^\frac{i \, \xi }{\sqrt{m} \sigma(z)} - 1\right)^2 + \frac{\xi^2}{m \sigma^2(z)}\right]\\
   &\quad\quad\quad \left. + m z \, p^3  \left(e^\frac{i \, \xi }{\sqrt{m} \sigma(z)} - 1\right)^3 \int_0^1 \left(2 + tp \left(e^\frac{i \, \xi }{\sqrt{m} \sigma(z)} - 1\right)\right)^{-2} \, (1-t)^2 \, \mathrm{d} t\right\}.
\end{align*}
Making use of elementary properties of the exponential together with \citep[Chapter IV, Lemma 5]{Pet(75)}, we can prove that there exists a constant $\tilde{C}_1$ such that for every $z \in [\zeta_0, \zeta_1]$,  every $p \in [0, 1]$ and every $\xi$ satisfying \eqref{eq: BE_xi_bound}
$$
\left|  \mathfrak{R}_m(p, z, \xi) -1 \right|   \le \tilde{C}_1 \, 
| \xi | \, m^{2\delta - \frac{1}{2}}.
$$
It follows that 
\begin{equation}
\label{char_Vm}
\varphi_{V_m(z)} (\xi)  =  \mathbb{E}\left[ e^{ - i  \frac{\xi } {\sigma(z)} G_m(z)- \frac{\xi^2}{2} S_m(z)} \right] + \mathcal{R}_1(m)
\end{equation}
 with 
\begin{align*}
\left|\mathcal{R}_1(m)\right| &= \left| \varphi_{V_m(z)} (\xi)  -  \mathbb{E}\left[ e^{ - i  \frac{\xi } {\sigma(z)} G_m(z)- \frac{\xi^2}{2} S_m(z)} \right] \right|  \\
&= \left|  \mathbb{E} \left[ e^{ - i  \frac{\xi } {\sigma(z)} G_m(z)- \frac{\xi^2}{2} S_m(z)}  \cdot \left[\mathfrak{R}_m\left( B_m, z, \xi\right) -1 \right]\right] \right| \\
& \le \int_0^1  \left| \exp \left\{ - i  \frac{\xi } {\sigma(z)} \sqrt{m} \left[\mu(z)  - z p \right] - \frac{\xi^2}{2} \frac{z}{\sigma^2(z)} \left[p - p^2 \right]\right\}  \right| \cdot \left|\mathfrak{R}_m\left( p, z, \xi\right) -1 \right|  \, f_{B_m}(p) \mathrm{d}p \\
& \le \int_0^1 \tilde{C}_1 \, |\xi| \, m^{2\delta - \frac{1}{2}} \, f_{B_m}(p) \mathrm{d}p  = \tilde{C}_1 \, |\xi| \, m^{2\delta - \frac{1}{2}} 
\end{align*}

\noindent \underline{Step 2.} Now note that
$$s^2(z) = \sigma^2(z) \cdot \frac{\alpha }{\alpha z + \tau + \nu } $$
and therefore
\begin{align}
\label{char_phi}
\nonumber\mathbb{E}\left[ e^{ - i  \frac{\xi } {\sigma(z)} G(z)- \frac{\xi^2}{2} S(z)} \right]& = \varphi_{G(z)} \left(\frac{\xi}{\sigma(z)}\right)  \cdot e^{\frac{\xi^2}{2} S(z)}\\
&= \nonumber  \exp\left\{-\frac{\xi^2}{2} \cdot \left[ \frac{\alpha z}{\alpha z + \tau + \nu} +\frac{\tau + \nu}{\alpha z + \tau + \nu }  \right]\right\}\\
&= e^{-\frac{\xi^2}{2}}
\end{align}

\noindent \underline{Step 3.} Using \eqref{char_Vm} and \eqref{char_phi}, write
\begin{align*}
     &\int_{| \xi| \le \mathcal{C}\,\sigma(z) \, m^\delta} \left|\frac{\varphi_{V_m(z)} (\xi) - e^{- \frac{\xi^2}{2}}}{\xi} \right| \, \mathrm{d} \xi   \\
    & \quad \quad = \int_{| \xi| \le \mathcal{C}\,\sigma(z) \, m^\delta} \left|\frac{\mathbb{E}\left[ e^{ - i  \frac{\xi } {\sigma(z)} G_m(z)- \frac{\xi^2}{2} S_m(z)} \right] + \mathcal{R}_1(m) - \mathbb{E}\left[ e^{ - i  \frac{\xi } {\sigma(z)} G(z)- \frac{\xi^2}{2} S(z)} \right] }{\xi} \right| \, \mathrm{d} \xi \\
    & \quad \quad \le \int_{| \xi| \le \mathcal{C}\,\sigma(z) \, m^\delta} \left|\frac{\mathbb{E}\left[ e^{ - i  \frac{\xi } {\sigma(z)} G_m(z)- \frac{\xi^2}{2} S_m(z)} \right] - \mathbb{E}\left[ e^{ - i  \frac{\xi } {\sigma(z)} G(z)- \frac{\xi^2}{2} S(z)} \right] }{\xi} \right| \, \mathrm{d} \xi 
    + \tilde{C}_2 \, m^{2\delta - \frac{1}{2}}
        \end{align*}
where $\tilde{C}_2 =2 \, \mathcal{C}\,  \tilde{C}_1\,  \max_{z \in [\zeta_0, \zeta_1]}  \sigma(z)$.

\noindent \underline{Step 4.} 
By definition of $\mathcal{I}_m^{(1)} (z)$,  Jensen's inequality and the triangular inequality,  
 \begin{align*}
 \label{Im1}  \mathcal{I}_m^{(1)} (z) &= \int_{- \mathcal{C}\sigma(z) m^\delta}^{\mathcal{C}\sigma(z)  m^\delta} \left|\frac{\mathbb{E}\left[ e^{ - i  \frac{\xi } {\sigma(z)} G_m(z)} \left(e^{- \frac{\xi^2}{2} S_m(z)}  - e^{- \frac{\xi^2}{2} S(z)}\right)  \right] }{\xi} \right| \, \mathrm{d} \xi \\
 &\le  \int_{- \mathcal{C}\sigma(z) m^\delta}^{\mathcal{C}\sigma(z)  m^\delta}  \frac{\mathbb{E}\left[  \left|e^{- \frac{\xi^2}{2} S_m(z)}  - e^{- \frac{\xi^2}{2} S(z)}\right|  \right] }{|\xi|}  \, \mathrm{d} \xi
 \end{align*}
Now note that, for every $z \in [\zeta_0, \zeta_1]$, $S_m(z) \ge 0 $ a.s., and $S(z) \ge 0$. Since the function $[0, +\infty) \to (0, 1]; \, x \mapsto e^{-x}$ is 1-Lipschitz, the last term of the above inequality can be bounded by
\begin{displaymath}
\int_{- \mathcal{C}\sigma(z) m^\delta}^{\mathcal{C}\sigma(z)  m^\delta}  \frac{\mathbb{E}\left[  \left|e^{- \frac{\xi^2}{2} S_m(z)}  - e^{- \frac{\xi^2}{2} S(z)}\right|  \right] }{|\xi|}  \, \mathrm{d} \xi\le   \int_{- \mathcal{C}\sigma(z) m^\delta}^{\mathcal{C}\sigma(z)  m^\delta} \frac{|\xi|}{2} \, \mathbb{E}\left[ \left|S_m(z) - S(z)\right|\right]  \, \mathrm{d} \xi,
\end{displaymath}
such that
\begin{equation}\label{number}
 \mathcal{I}_m^{(1)} (z)\le \sqrt{\mathbb{E}\left[ \left(S_m(z) - S(z)\right)^2\right]}  \cdot \mathcal{C}^2 \sigma^2(z) \, m^{2\delta}.
\end{equation}
A simple computation shows
\begin{displaymath}
\mathbb{E}[S_m(z)] = S(z) \left[ 1 + O \left(m^{-1}\right)\right], 
\end{displaymath}
which in turn entails
\begin{displaymath}
\mathbb{E}\left[ \left(S_m(z) - S(z)\right)^2\right] = \left[ \mathbb{E}\left[S_m^2(z)\right]- S^2(z)\right] \cdot \left[1 + O \left(m^{-2}\right) \right].
\end{displaymath}
The second moment of $S_m(z)$ can be written as follows
\begin{displaymath}
 \mathbb{E}\left[S_m^2(z)\right] = \frac{z^2}{\sigma^4(z)} \cdot \left\{\mathbb{E}\left[B_m^2\right] -2 \mathbb{E}\left[B_m^3\right] + \mathbb{E}\left[B_m^4\right] \right\},
\end{displaymath}
reducing the problem to the study of the moments of $B_m$. By standard results regarding the beta distribution it is known that, for $k \in \mathbb{N}$, 
\begin{displaymath}
\mathbb{E} \left[B_m^k \right] = \frac{\Gamma \left(\frac{\tau + \varrho \alpha}{\alpha}\,  m  + k\right)\, \Gamma \left(\frac{\tau + \nu}{\alpha}\,  m \right)}{\Gamma \left(\frac{\tau + \varrho \alpha}{\alpha}\,  m  \right) \, \Gamma \left(\frac{\tau + \nu }{\alpha}\,  m  + k\right)},
\end{displaymath}
so that, for $m \to +\infty$, a straightforward application of \citep[formula (1)]{TE(51)} yields
\begin{align*}
\mathbb{E} \left[B_m^k \right] &  = \left(\frac{\tau + \varrho \alpha}{\alpha} \, m\right)^k \left[1+ \frac{k(k-1) \alpha}{2 (\tau + \varrho \alpha) \, m } + O \left(m^{-2}\right) \right]\\
& \quad + \left(\frac{\tau + \varrho \alpha}{\alpha} \, m\right)^k \left[1- \frac{k(k-1) \alpha}{2 (\tau + \nu) \, m } + O \left(m^{-2}\right) \right]\\
&= \left( \frac{\tau+ \varrho \alpha}{\tau + \nu} \right)^k \, \left[ 1 + \frac{k(k-1)}{m}  \, \eta
+ O \left(m^{-2}\right) \right]
\end{align*}
where $\eta = \eta (\tau, \nu, \varrho, \alpha) := \frac{ \alpha (\nu - \varrho \alpha) }{2 (\tau +\varrho \alpha) (\tau+\nu)}$ is a constant not depending on $k$ or $m$.
Then, 
\begin{align*}
 \mathbb{E}\left[S_m^2(z)\right] &= \frac{z^2}{\sigma^4(z)} \cdot  \left( \frac{\tau+ \varrho \alpha}{\tau + \nu} \right)^2 \cdot \left\{  \left[ 1 + \frac{2	\eta}{m} \right]  -2 \frac{\tau+ \varrho \alpha}{\tau + \nu} \left[ 1 + \frac{6\eta}{m}\right] \right.  \\
 & \quad \quad \quad \quad \quad \quad \quad \quad  \quad \quad \quad \left. +  \left( \frac{\tau+ \varrho \alpha}{\tau + \nu} \right)^2 \left[ 1 + \frac{12 \eta}{m}\right]+ O \left(m^{-2}\right)  \right\}\\
 & =  \frac{z^2}{\sigma^4(z)} \cdot  \left( \frac{\tau+ \varrho \alpha}{\tau + \nu} \right)^2 \cdot \left\{ \left(\frac{\nu - \varrho \alpha}{\tau + \nu}\right)^2 + \frac{1}{m} \cdot \left[ 2 \eta - 12 \eta \frac{\tau+\varrho \alpha}{\tau + \nu} + 12 \eta \left(\frac{\tau+\varrho \alpha}{\tau + \nu} \right)^2\right] \right.
 \\
 &\quad \quad \quad \quad \quad \quad \quad \quad  \quad \quad \quad  \left.+ O\left(m^{-2}\right)\right\}\\
 & = S^2(z) + \frac{1}{m} \, g(z, \tau, \nu, \varrho, \alpha) + O\left(m^{-2}\right), 
\end{align*}
with $g(z, \tau, \nu, \varrho, \alpha): =   \frac{z^2}{\sigma^4(z)} \cdot  \left( \frac{\tau+ \varrho \alpha}{\tau + \nu} \right)^2 \cdot \left[ 2 \eta - 12 \eta \frac{\tau+\varrho \alpha}{\tau + \nu} + 12 \eta \left(\frac{\tau+\varrho \alpha}{\tau + \nu} \right)^2\right]$. We can conclude that
\begin{displaymath}
\sqrt{\mathbb{E}\left[ \left(S_m(z) - S(z)\right)^2\right]}  = \frac{1}{\sqrt{m}} \, g(z, \tau, \nu, \varrho, \alpha) \left[1 +r_m\right]
\end{displaymath} 
with $r_m = O\left(m^{-1}\right)$. Since all the asymptotic expansions above hold uniformly on compact sets, in the sense of \citep[Definition A.9]{Con(24)}, and the function $z \to g(z, \tau, \nu, \varrho, \alpha)$ is continuous, one can resort to \citep[Lemma A.11]{Con(24)} to obtain
\begin{displaymath}
|R_m| : = \left| \sqrt{\mathbb{E}\left[ \left(S_m(z) - S(z)\right)^2\right]}  - \frac{1}{\sqrt{m}} \, g(z, \tau, \nu, \varrho, \alpha)\right| \le \frac{c}{m^{3/2}}
\end{displaymath}
fro some suitable positive constant $c$. To conclude, plug the above result in  \eqref{number} to obtain that for all $z \in [\zeta_0, \zeta_1]$, 
\begin{displaymath}
\mathcal{I}_m^{(1)}(z) \le c_1 \,  m^{2\delta - \frac{1}{2}} + \frac{c}{m^{3/2}}
\end{displaymath}
 for some suitable constant $c_1>0$. 

\noindent \underline{Step 5.} By definition, 
\begin{align*}
\mathcal{I}_m^{(2)}(z) &= \int_{-\mathcal{C} \sigma(z) m^\delta}^{\mathcal{C} \sigma(z) m^\delta} e^{-\frac{\xi^2 S(z)}{2}} \, \left| \int_{\R}   \frac{e^{i \xi x }-1}{\xi} \, \left[f_{G_m(z)}(x) - f_{G(z)}(x) \right] \, \mathrm{d} x\, \right| \, \mathrm{d} \xi \\
&\le  \int_{-\mathcal{C} \sigma(z) m^\delta}^{\mathcal{C} \sigma(z) m^\delta} e^{-\frac{\xi^2 S(z)}{2}} \, \int_{\R}  \left|  \frac{e^{i \xi x }-1}{\xi} \right|\, \left|f_{G_m(z)}(x) - f_{G(z)}(x) \right| \, \mathrm{d} x \, \mathrm{d} \xi \\
&=   \int_{\R} \left[ \int_{-\mathcal{C} \sigma(z) m^\delta}^{\mathcal{C} \sigma(z) m^\delta} e^{-\frac{\xi^2 S(z)}{2}} \,  \left|  \frac{e^{i \xi x }-1}{\xi} \right| \, \mathrm{d} \xi \right] \, \left|f_{G_m(z)}(x) - f_{G(z)}(x) \right| \, \mathrm{d} x 
\end{align*}
Making use of \citep[Chapter IV, Lemma 5]{Pet(75)} and of the elementary properties of the Gaussian density write the bound
\begin{align*}
 \int_{-\mathcal{C} \sigma(z) m^\delta}^{\mathcal{C} \sigma(z) m^\delta} e^{-\frac{\xi^2 S(z)}{2}} \,  \left|  \frac{e^{i \xi x }-1}{\xi} \right| \, \mathrm{d} \xi  & \le  \int_{-\mathcal{C} \sigma(z) m^\delta}^{\mathcal{C} \sigma(z) m^\delta} e^{-\frac{\xi^2 S(z)}{2}} \,  \left|  x \right| \, \mathrm{d} \xi\\
 & \le  |x| \,  \int_{\R} e^{-\frac{\xi^2 S(z)}{2}}  \, \mathrm{d} \xi \\
 & =  |x| \cdot \sqrt{\frac{2 \pi}{S(z)}}
\end{align*}
Hence, letting $c(z) =  \sqrt{\frac{2 \pi}{S(z)}}$, 
\begin{equation}
\label{bound_dens}
\mathcal{I}_m^{(2)} \le   c(z) \cdot  \int_{\R} |x| \cdot  \left|f_{G_m(z)}(x) - f_{G(z)}(x) \right| \, \mathrm{d} x 
\end{equation}
For $x \in \mathbb{R}$, 
\begin{displaymath}
f_{G_m(z)}(x) = \begin{cases} 0 \, \,  \quad  \quad \quad \quad \quad  \quad \quad \text{if } x \notin \left[- \sqrt{m} \mu(z) , \, \sqrt{m} (z- \mu(z) ) \right]& \\
\frac{1}{\sqrt{m} z} \, \frac{\Gamma \left( \frac{\tau + \nu}{\alpha} \, m\right)}{\Gamma \left( \frac{\tau + \varrho \alpha }{\alpha} \, m\right)\, \Gamma \left( \frac{\nu - \varrho \alpha }{\alpha} \, m\right)}\cdot  \left(\frac{\tau + \varrho \alpha}{\tau + \nu} - \frac{x}{\sqrt{m} z}\right)^{ \frac{\tau + \varrho \alpha }{\alpha} \, m -1} \cdot  \left(\frac{\nu - \varrho \alpha}{\tau + \nu} + \frac{x}{\sqrt{m} z}\right)^{ \frac{\nu - \varrho \alpha }{\alpha} \, m -1}  & \\
 \quad =: \psi_m(x)   \quad  \quad \quad \text{if } x \notin \left[- \sqrt{m} \mu(z) , \, \sqrt{m} (z- \mu(z) ) \right]& 
\end{cases}
\end{displaymath}
and 
\begin{displaymath}
f_{G(z)}(x) = \frac{1}{\sqrt{2 \pi s^2(z)}} \exp\left\{- \frac{x^2}{2 s^2(z)} \right\}.
\end{displaymath}
Introduce $\gamma \in (0, 1/2)$ and let $A_m = c \,  m^\gamma$, with $c$ a positive constant such that $-\mu(z) \sqrt{m}< -A_m < 0< A_m < (z- \mu(z)) \sqrt{m}$ for all $z \in [\zeta_0, \zeta_1]$, and write the integral in the RHS of \eqref{bound_dens} as
\begin{align*}
&  \int_{\R} |x| \cdot  \left|f_{G_m(z)}(x) - f_{G(z)}(x) \right| \, \mathrm{d} x   \\
  & \quad \quad =   \int_{-\infty}^{- \sqrt{m} \, \mu(z)}-x \, f_{G(z)}(x) \, \mathrm{d} x   +  \int^{-A_m}_{- \sqrt{m} \mu(z)} |x|  \cdot \left| \psi_m(x) -  f_{G(z)}(x)\right| \, \mathrm{d} x \\
&\quad \quad  \quad +  \int^{ -A_m}_{A_m} |x|  \cdot \left| \psi_m(x) -  f_{G(z)}(x)\right|  \, \mathrm{d} x \\
&\quad  \quad \quad+  \int^{ \sqrt{m}\, ( z-  \mu(z))}_{A_m}|x|  \cdot \left| \psi_m(x) -  f_{G(z)}(x)\right| \, \mathrm{d} x  +  \int_{ \sqrt{m}\, ( z-  \mu(z))}^{+\infty}x   \, f_{G(z)}(x) \, \mathrm{d} x \\
&\quad  \quad \le 		 \int_{-\infty}^{- A_m}-x \, f_{G(z)}(x) \, \mathrm{d} x   +  \int^{-A_m}_{- \sqrt{m} \, \mu(z)}|x| \, \psi_m(x) \, \mathrm{d} x \\
&\quad \quad \quad+  \int^{ -A_m}_{A_m} |x|  \cdot \left| \psi_m(x) -  f_{G(z)}(x)\right|  \, \mathrm{d} x \\
& \quad \quad \quad +  \int^{ \sqrt{m}\, ( z-  \mu(z))}_{A_m}|x |  \, \psi_m(x) \, \mathrm{d} x  +  \int_{ \sqrt{m}\, ( z-  \mu(z))}^{+\infty}x   \, f_{G(z)}(x) \, \mathrm{d} x \\
& \quad \quad = \mathrm{(I)} +\mathrm{(II)} +\mathrm{(III)} + \mathrm{(IV)} + \mathrm{(V)}
\end{align*}
The terms (I) and (V) vanish exponentially fast as $m \to +\infty$, in fact for any $\zeta>0$
\begin{displaymath}
 \int_{\zeta}^{+\infty}x   \, f_{G(z)}(x) \, \mathrm{d} x = \frac{s(z)}{\sqrt{2\pi}} \, e^{ -\frac{\zeta^2}{s^2(z) }}, 
\end{displaymath}
whence
\begin{displaymath}
 \mathrm{(I)} =  \frac{s(z)}{\sqrt{2\pi}} \, e^{ -\frac{c^2 \, m^{2\gamma}}{s^2(z) }}
\end{displaymath}
and 
\begin{displaymath}
 \mathrm{(V)} =  \frac{s(z)}{\sqrt{2\pi}} \, e^{ -\frac{c^2 \, m^{2\gamma}}{s^2(z) }}.
\end{displaymath}
We now study (II) and (IV). By a straightforward application of Cauchy-Schwartz inequality, 
\begin{align*}
\mathrm{(II)} &= \mathbb{E} \left[|G_m(z)| \cdot \mathds{1}_{\left\{G_m(z) \in [-\sqrt{m} \, \mu(z), -A_m ]\right\}}\right] \\
& \le \sqrt{\operatorname{Var}\left(G_m(z)\right)} \cdot \sqrt{P\left[G_m(z) \in [-\sqrt{m} \, \mu(z), -A_m ] \right]}.
\end{align*}
Now, 
\begin{displaymath}
\operatorname{Var}\left(G_m(z)\right) = mz^2 \operatorname{Var}\left(B_m\right) =  z^2\, \frac{ \alpha \, (\tau + \varrho \alpha)(\nu - \varrho \alpha)}{ (\tau + \nu)^3  } \, \left[1  + O\left(\frac{1}{m}\right)\right] \le  z^2 \, \tilde{c}_1 
\end{displaymath}
for some suitable constant $\tilde{c}_1$. By Chebichev's inequality
\begin{align*}
P\left(G_m(z) \in [-\sqrt{m} \, \mu(z), -A_m ] \right) & \le P\left(G_m(z) \le  -A_m  \right) + P\left(G_m(z)  \ge A_m  \right) \\
& \le \frac{\operatorname{Var}(G_m(z))}{A_m^2}\\
&  \le z^2 \, \tilde{c}_2 \, m^{-2\gamma}
 \end{align*}
for a suitable constant $\tilde{c}_2$. Combining these results we obtain
\begin{displaymath}
\mathrm{(II)} \le z^2 \, \tilde{c} \, m^{-\gamma}
\end{displaymath}
for $	\tilde{c} = \sqrt{\tilde{c}_1 \, \tilde{c}_2}$. The same argument allows to prove
\begin{displaymath}
\mathrm{(IV)} \le z^2 \, \tilde{c} \, m^{-\gamma}
\end{displaymath}
To assess the behavior of (III), use the well--known asymptotic expansion of the Gamma function for large argument \citep[Equation \href{https://dlmf.nist.gov/5.11.E10}{(5.11.3)}]{nist} and some simple algebraic rearrangements to write that, as $m \to +\infty$, 
\begin{align*}
& \psi_m(x)\\
& = \frac{1}{\sqrt{2\pi}} \, \sqrt{\frac{(\tau + \varrho \alpha)(\nu - \varrho \alpha)}{z^2 \alpha (\tau + \nu)}} \cdot \left(\frac{\tau+\varrho\alpha}{\tau + \nu} - \frac{x}{\sqrt{m} \, z}\right)^{-1} \cdot \left(\frac{\nu - \varrho\alpha}{\tau + \nu} + \frac{x}{\sqrt{m} \, z}\right)^{-1} \\
&\quad  \cdot \left\{\left(\frac{\tau + \nu}{\tau + \varrho \alpha}\right)^{\frac{\tau + \varrho \alpha}{\alpha}} \, \left(\frac{\tau + \nu}{\nu - \varrho \alpha}\right)^{\frac{\nu - \varrho \alpha}{\alpha}}  \cdot \left(\frac{\tau+\varrho\alpha}{\tau + \nu} - \frac{x}{\sqrt{m} \, z}\right)^{\frac{\tau + \varrho \alpha}{\alpha}} \cdot \left(\frac{\nu - \varrho\alpha}{\tau + \nu} + \frac{x}{\sqrt{m} \, z}\right)^{\frac{\nu -  \varrho \alpha}{\alpha}}\right\}^m \\
& \quad 	\cdot \mathfrak{R}_1^{(m)}\\
& =   \exp \left\{ m\cdot  \left[ \frac{\tau + \varrho \alpha}{\alpha} \log \left(1- \frac{x}{\sqrt{m}\, \mu(z)}\right)+ \frac{\nu - \varrho \alpha}{\alpha} \log \left(1+ \frac{x}{\sqrt{m}\, (z- \mu(z))}\right)\right] \right\} \\
& \quad \cdot \frac{1}{\sqrt{2\pi}} \, \Psi_m(z, x)\cdot \mathfrak{R}_1^{(m)} 
\end{align*}
where
\begin{displaymath}
\Psi_m(z, x) =\sqrt{\frac{(\tau + \varrho \alpha)(\nu - \varrho \alpha)}{ z^2 \, \alpha (\tau + \nu)}} \cdot \frac{1}{\left(\frac{\tau+\varrho\alpha}{\tau + \nu} - \frac{x}{\sqrt{m} \, z}\right) \cdot \left(\frac{\nu - \varrho\alpha}{\tau + \nu} + \frac{x}{\sqrt{m} \, z}\right)}
\end{displaymath}
and
\begin{displaymath}
\mathfrak{R}_1^{(m)} = \frac{\Gamma^* \left( \frac{\tau + \nu}{\alpha} \, m\right)}{\Gamma^* \left( \frac{\tau + \varrho \alpha }{\alpha} \, m\right)\, \Gamma^* \left( \frac{\nu - \varrho \alpha }{\alpha} \, m\right)}.
\end{displaymath}
We focus first on the behavior of these last two terms as $m \to +\infty$; simple algebraic rearrangements allow to write
\begin{align*}
\Psi_m(z, x) &=\sqrt{\frac{ \alpha(\tau + \nu)^3}{z^2 (\tau+\varrho\alpha)(\nu - \varrho\alpha)}}\,  \cdot \frac{1}{1 + \frac{x (\tau+\nu)}{\sqrt{m} \, z} \, \left(\frac{1}{\nu -  \varrho \alpha} - \frac{1}{\tau + \varrho \alpha} \right)- \frac{x^2 (\tau + \nu)^2}{m z (\tau + \varrho \alpha)(\nu - \varrho \alpha)} }\\
&= \frac{1}{s(z)} \,  \left[ 1+O \left(\frac{x}{m^{1/2}}\right)\right].
\end{align*}
uniformly for $z \in [\zeta_0, \zeta_1]$. Resorting again to \citep[Equations \href{https://dlmf.nist.gov/5.11.E10}{(5.11.3)} and \href{https://dlmf.nist.gov/5.11.E10}{(5.11.4)}]{nist}, we also obtain
\begin{displaymath}
\mathfrak{R}_1^{(m)}  = \frac{1+ \frac{\alpha}{12 \, (\tau+\nu) \, m } + O\left(m^{-2}\right)}{1+ \frac{\alpha \, (\tau+\nu)}{12 \, (\tau+\varrho \alpha)(\nu - \varrho \alpha) \, m } + O\left(m^{-2}\right)} = 1+  O\left(m^{-1}\right).
\end{displaymath}
uniformly for $z \in [\zeta_0, \zeta_1]$. For the exponential term, 
use Taylor's expansion of the logarithm around 1 to write
\begin{align*}
& \exp \left\{ m \cdot \left[\frac{\tau + \varrho \alpha}{\alpha} \log \left(1- \frac{x}{\sqrt{m}\, \mu(z)}\right)+ \frac{\nu - \varrho \alpha}{\alpha} \log \left(1+ \frac{x}{\sqrt{m}\, (z- \mu(z))}\right) \right]\right\} \\
& = \exp \left\{ m \cdot \left[ \frac{\tau + \varrho \alpha}{\alpha}  \left(- \frac{x}{\sqrt{m}\, \mu(z)} - \frac{x^2}{2m \, \mu^2(z)} \right)\right. \right.\\
& \left. \left.  \quad \quad \quad + \frac{\nu - \varrho \alpha}{\alpha}  \left( \frac{x}{\sqrt{m}\, (z- \mu(z))} - \frac{x^2}{2m\, (z- \mu(z))^2}\right)\right]\right\} \cdot \mathfrak{R}_2^{(m)}(x, z)\\
& = \exp \left\{ - \frac{x^2}{2} \cdot \frac{1}{s^2(z)}\right\} \cdot \mathfrak{R}_2^{(m)}(x, z)
\end{align*}
where
\begin{align*}
\mathfrak{R}_2^{(m)}(x, z)& = \exp \left\{ m \cdot \left[ \frac{\tau + \varrho \alpha}{\alpha}  \left(\log \left(1- \frac{x}{\sqrt{m}\, \mu(z)}\right) +\frac{x}{\sqrt{m}\, \mu(z)}  +\frac{x^2}{2m \, \mu^2(z)} \right)\right. \right.\\
& \left. \left.  \quad \quad \quad + \frac{\nu - \varrho \alpha}{\alpha}  \left( \log \left(1+ \frac{x}{\sqrt{m}\, (z- \mu(z))}\right)-\frac{x}{\sqrt{m}\, (z- \mu(z))} +\frac{x^2}{2m\, (z- \mu(z))^2}\right)\right]\right\} 
\end{align*}
As $m \to +\infty$,  
\begin{displaymath}
\mathfrak{R}_2^{(m)}(x, z) = 1+ O \left(\frac{x^3}{m^{1/2}}\right)
\end{displaymath}
uniformly for $z \in [\zeta_0, \zeta_1]$. Combining all the above results 
we can write
\begin{displaymath}
\psi_m(x) = \frac{1}{\sqrt{2\pi}\, s(z)} \,\exp \left\{ - \frac{x^2}{2} \cdot \frac{1}{s^2(z)}\right\} \left[1 + \mathcal{R}_m(x, z)\right] = f_{G(z)}(x) \left[1 + \mathcal{R}_m(x, z)\right]
\end{displaymath}
with
\begin{displaymath}
\mathcal{R}_m(x, z) =  O \left(\frac{x^3}{m^{1/2}}\right)
 \end{displaymath}
uniformly for $z \in [\zeta_0, \zeta_1]$. Resorting again to \citep[Lemma A.11]{Con(24)}, this in turn implies
\begin{displaymath}
\left|\psi_m(x)- f_{G(z)}(x) \right| \le \frac{C\, x^3}{m^{1/2}}
\end{displaymath}
for every $x \in [-A_m, A_m]$ and $z \in [\zeta_0, \zeta_1]$, for some constant $C>0$, whence 
\begin{align*}
\mathrm{(III)} &=  \int^{ -A_m}_{A_m} |x|  \cdot \left| \psi_m(x) -  f_{G(z)}(x)\right|  \, \mathrm{d} x\\
& \le  \int^{ -A_m}_{A_m} |x|  \cdot\frac{C\, x^3}{m^{1/2}}  \, \mathrm{d} x \\
& \le   \frac{\mathcal{C} \, \mathbb{E} \left[ G(z)^4\right]}{m^{1/2}}\\
\end{align*}
In conclusion, 
\begin{align*}
\mathcal{I}_m^{(2)}(z) &\le \frac{s(z)}{\sqrt{2\pi}} \left[ e^{ -\frac{c^2 \, m^{2\gamma}}{s^2(z) }} + e^{ -\frac{c^2 \, m^{2\gamma}}{s^2(z) }} \right] + 2z^2 \, c \,  m^{- \gamma} +\mathcal{C} \, \mathbb{E} \left[ G(z)^4\right] \, m^{-1/2}
\end{align*}
Recalling that $\gamma \in \left(0, \frac{1}{2}\right)$, we conclude
\begin{align*}
\mathcal{I}_m^{(2)} (z)&\le C_2(z) \, m^{-\gamma} \le c_2 \, m^{-\gamma} 
\end{align*}
for some continuous function $C_2: [\zeta_0, \zeta_1] \to (0, +\infty)$ depending only on $\gamma$ and $c_2 = \max_{z \in [\zeta_0, \zeta_1]} C_2(z)$.
\end{proof}

\section{Additional numerical illustrations} \label{app:E}
This section collects additional figures displaying the performance gap between the Mittag-Leffler credible intervals and the Gaussian confidence intervals, and its behavior with respect to the additional sample size $m$ on the datasets, both synthetic and real, considered in Section \ref{sec4}. 

\subsection{Synthetic data}
Figure \ref{fig:coverage_s} complements the analysis of the synthetic datasets in Section \ref{sec4} (Table \ref{tab:1} and Table \ref{tab:2}); in particular, it displays the coverage of the Mittag-Leffler credible interval (blue) and of the Gaussian credible interval (red) as a function of $m \in [0, 5n]$. The coverages are evaluated at a uniform mesh of $50$ points over $[0, 5n]$, as for Figure \ref{fig1_intro}. Monte Carlo algorithms to obtain exact credible intervals and Mittag-Leffler credible intervals apply $2000$ Monte Carlo samples.

\begin{figure}[h!]
\begin{minipage}{0.5 \textwidth}
\begin{center}
\medskip
A) Zipf
\medskip

\includegraphics[width = \textwidth]{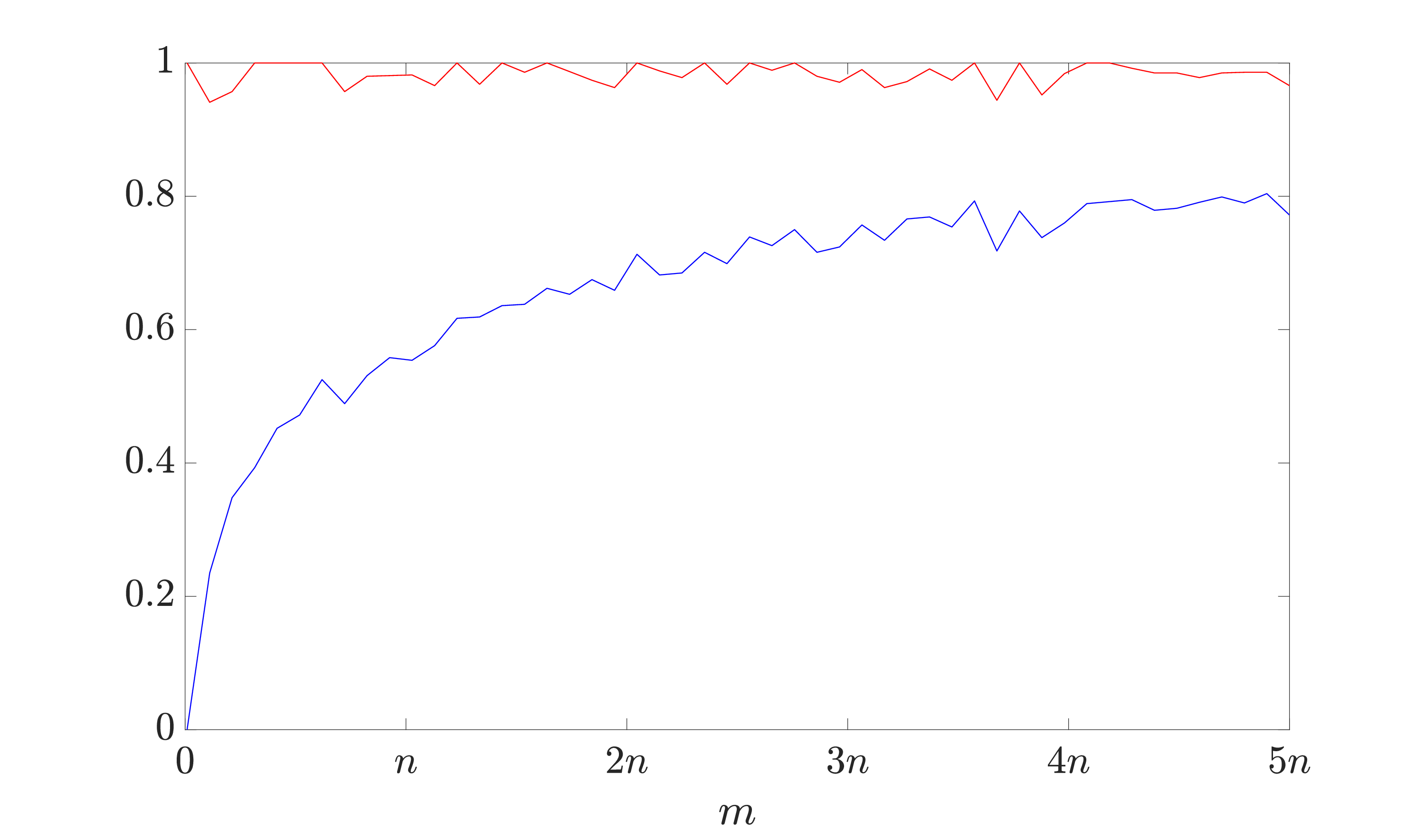}
\end{center}
\end{minipage}
\begin{minipage}{0.5 \textwidth}
\begin{center}
\medskip
B) Zipf
\medskip

\includegraphics[width = \textwidth]{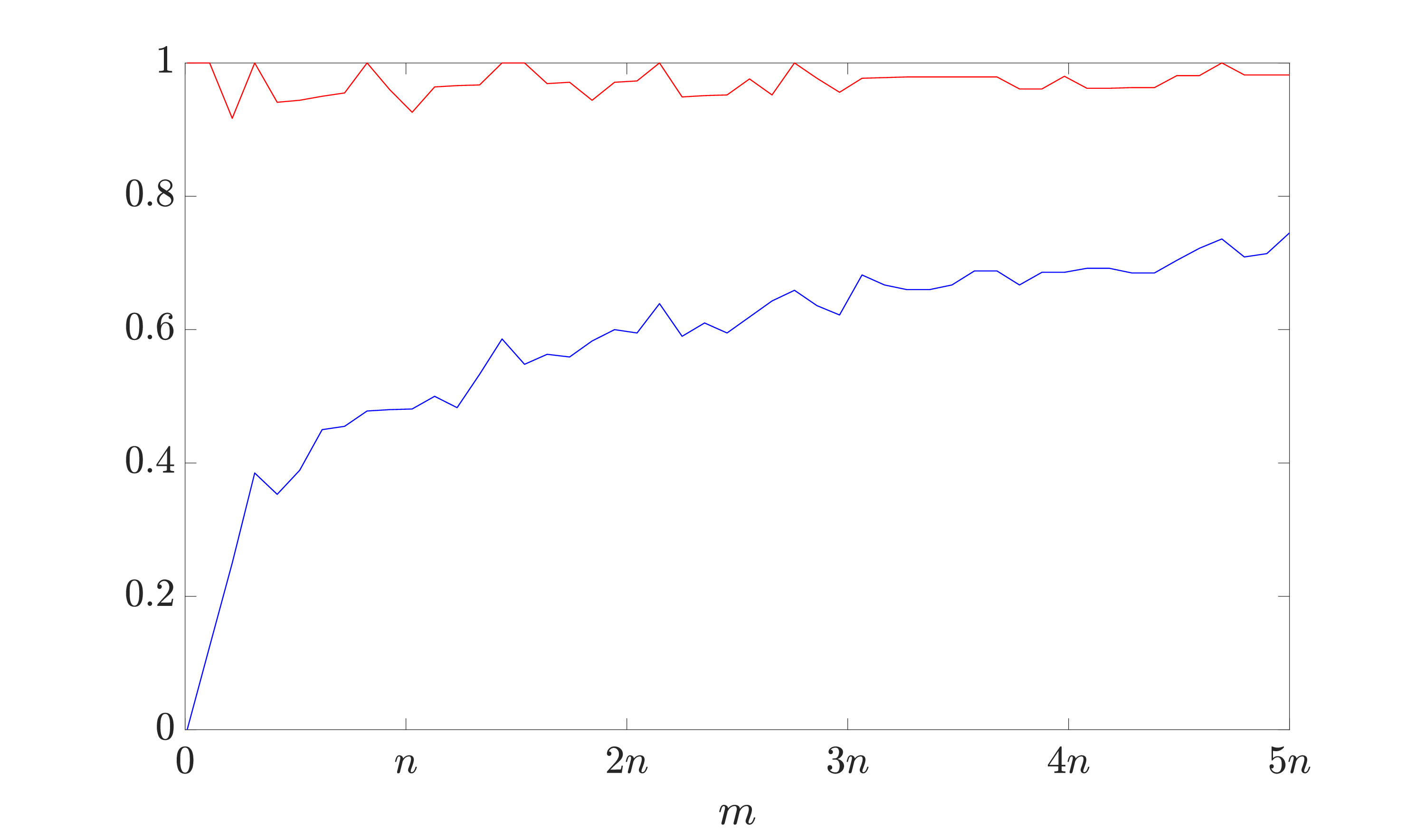}
\end{center}
\end{minipage}

\begin{minipage}{0.5 \textwidth}
\begin{center}
\medskip
C) P\'olya
\medskip

\includegraphics[width = \textwidth]{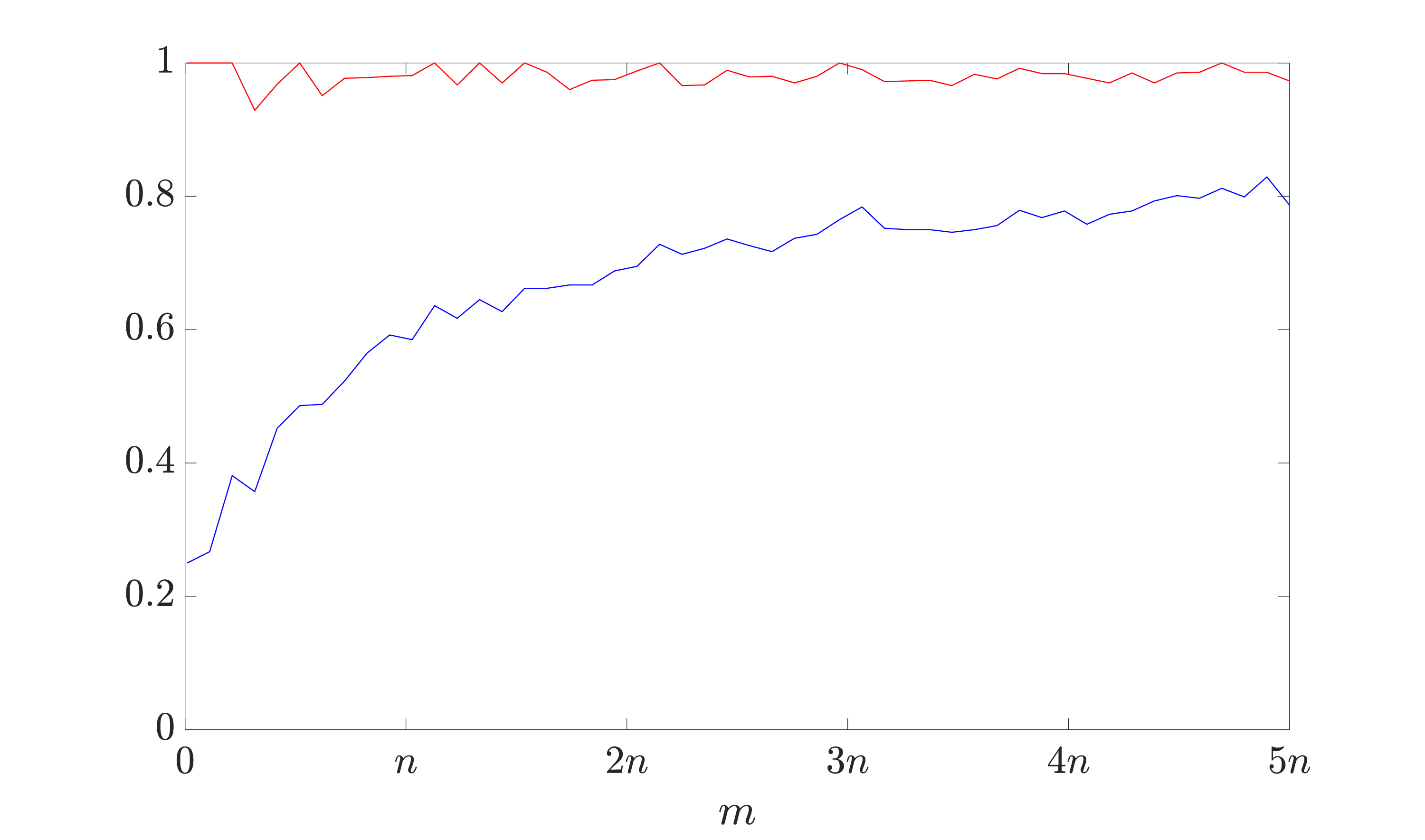}
\end{center}
\end{minipage}
\begin{minipage}{0.5 \textwidth}
\begin{center}
\medskip
D) Uniform
\medskip

\includegraphics[width = \textwidth]{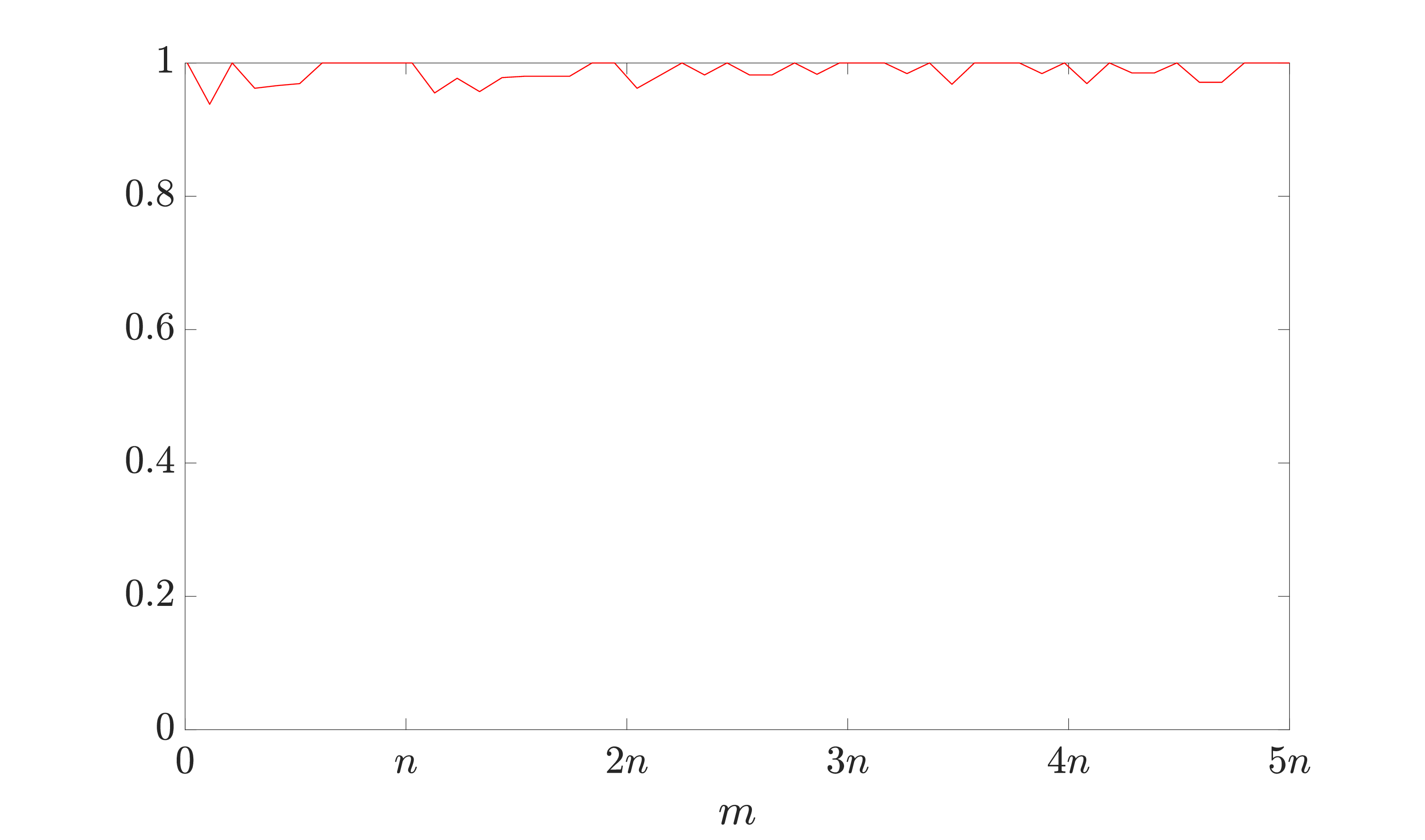}
\end{center}
\end{minipage}
\caption{\scriptsize{Coverage of the Mittag-Leffler credible interval (blue) and of the Gaussian credible interval (red) as a function of $m \in [0, 5n]$. }}
\label{fig:coverage_s}
\end{figure}

Figure \ref{fig:coverage_s} confirms the behavior of the coverage observed in Table \ref{tab:1}:  the coverage of the Gaussian credible intervals is nearly constant in $m$, with values oscillating between 95\% and 100\%. Instead, the coverage of the Mittag-Lefffler credible intervals is, for all values of $m$, lower than that of the corresponding Gaussian credible intervals; such a coverage increases in $m$.


\subsection{Real data}

Figure \ref{fig:naegAA} complements the analysis of the real EST datasets in Section \ref{sec4} (Table \ref{tab:3} and Table \ref{tab:4}); in particular, for the tomato flower, \emph{Mastigamoeba}, \emph{Mastigamoeba} normalized, and \emph{Naegleria} anaerobic EST datasets, it displays BNP estimates of $K_{n,m}$ with 95\% exact credible intervals, Mittag-Leffler credible intervals and Gaussian credible intervals as a function of $m \in [0, 5n]$. Credible intervals are evaluated at a uniform mesh of $50$ points over $[0, 5n]$, as for Figure \ref{fig:2}. Monte Carlo algorithms to obtain exact credible intervals and Mittag-Leffler credible intervals apply $2000$ Monte Carlo samples.

 \begin{figure}[h!]
\begin{center}
Tomato flower
\medskip

\includegraphics[width = 0.9\textwidth]{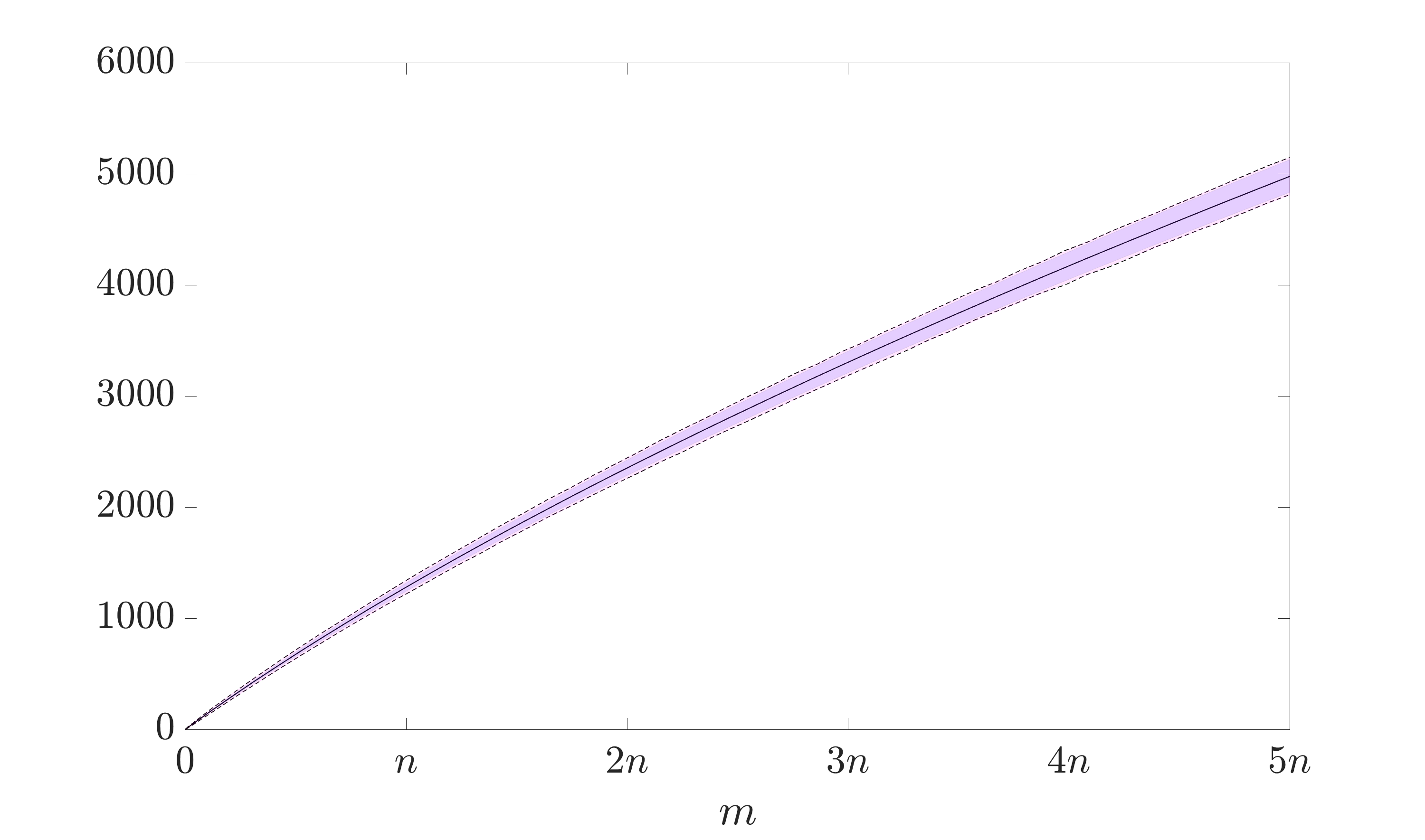}
\end{center}
\end{figure}\label{fig:tf}

 \begin{figure}[h!]
\begin{center}
\textit{Mastigamoeba}
\medskip

\includegraphics[width =0.9\textwidth]{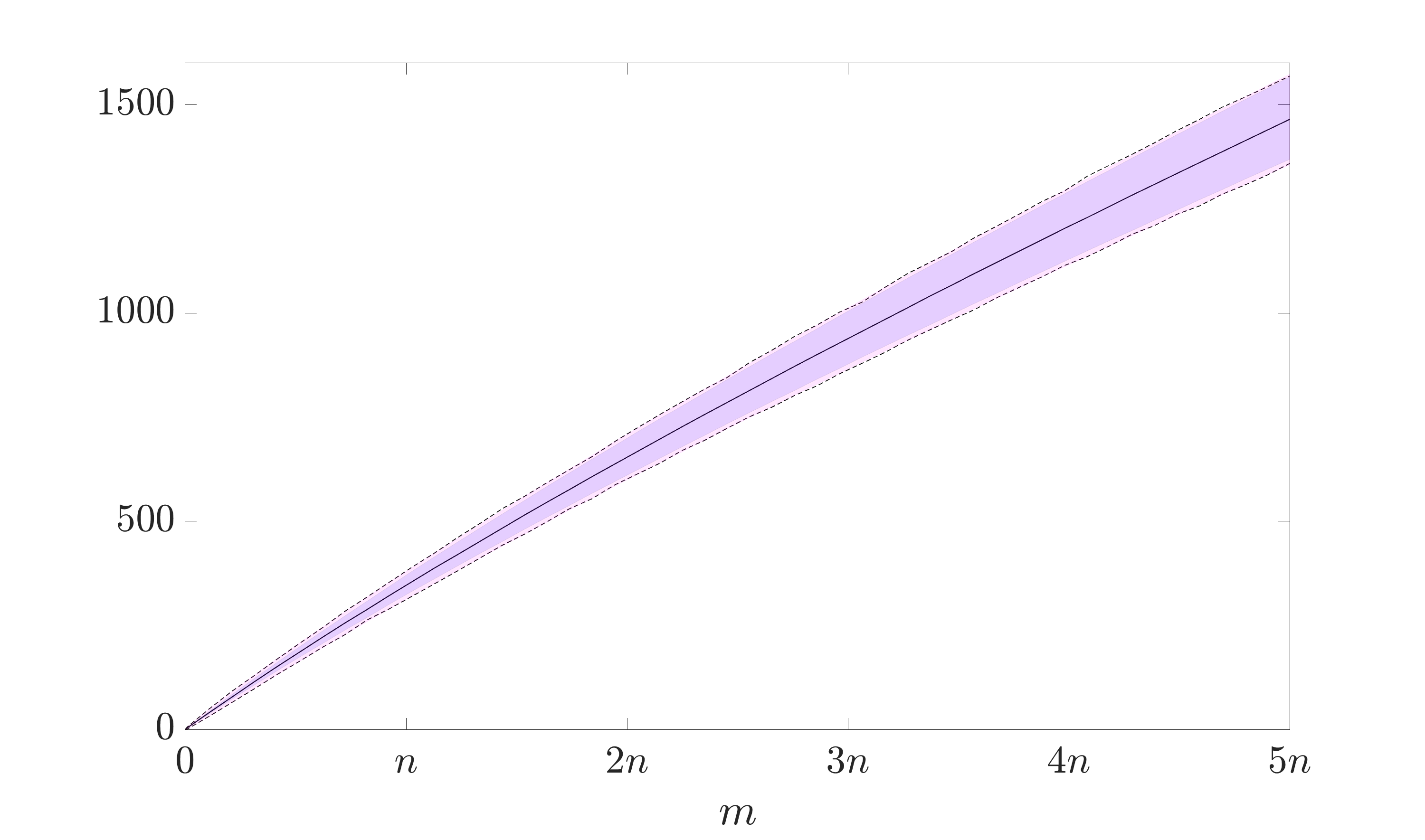}
\end{center}
\end{figure}\label{fig:mast}

 \begin{figure}[h!]
\begin{center}
\textit{Mastigamoeba}, normalized
\medskip

\includegraphics[width = 0.9\textwidth]{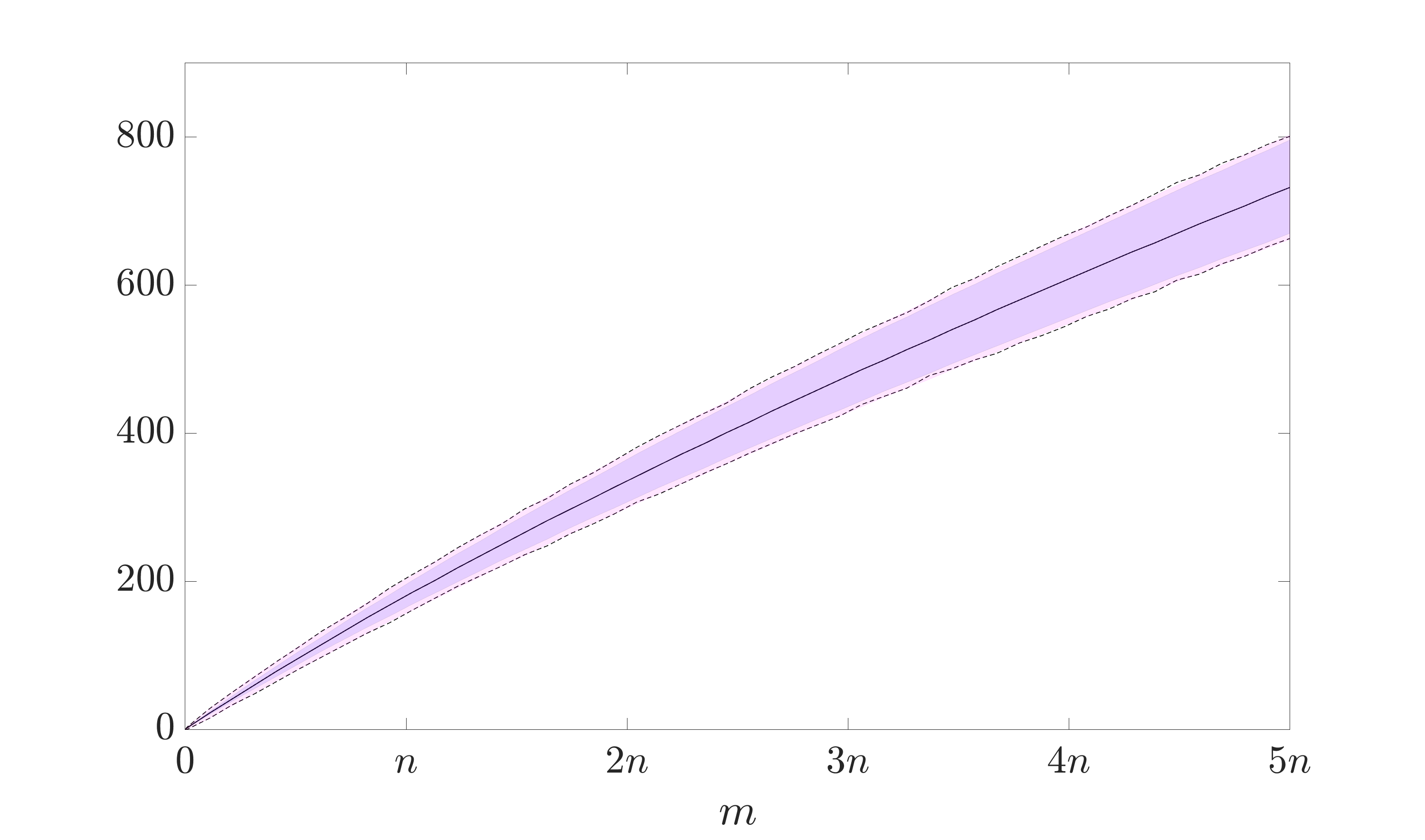}
\end{center}
\end{figure}\label{mastN}

 \begin{figure}[h!]
\begin{center}
\textit{Naegleria} anaerobic
\medskip

\includegraphics[width = 0.9\textwidth]{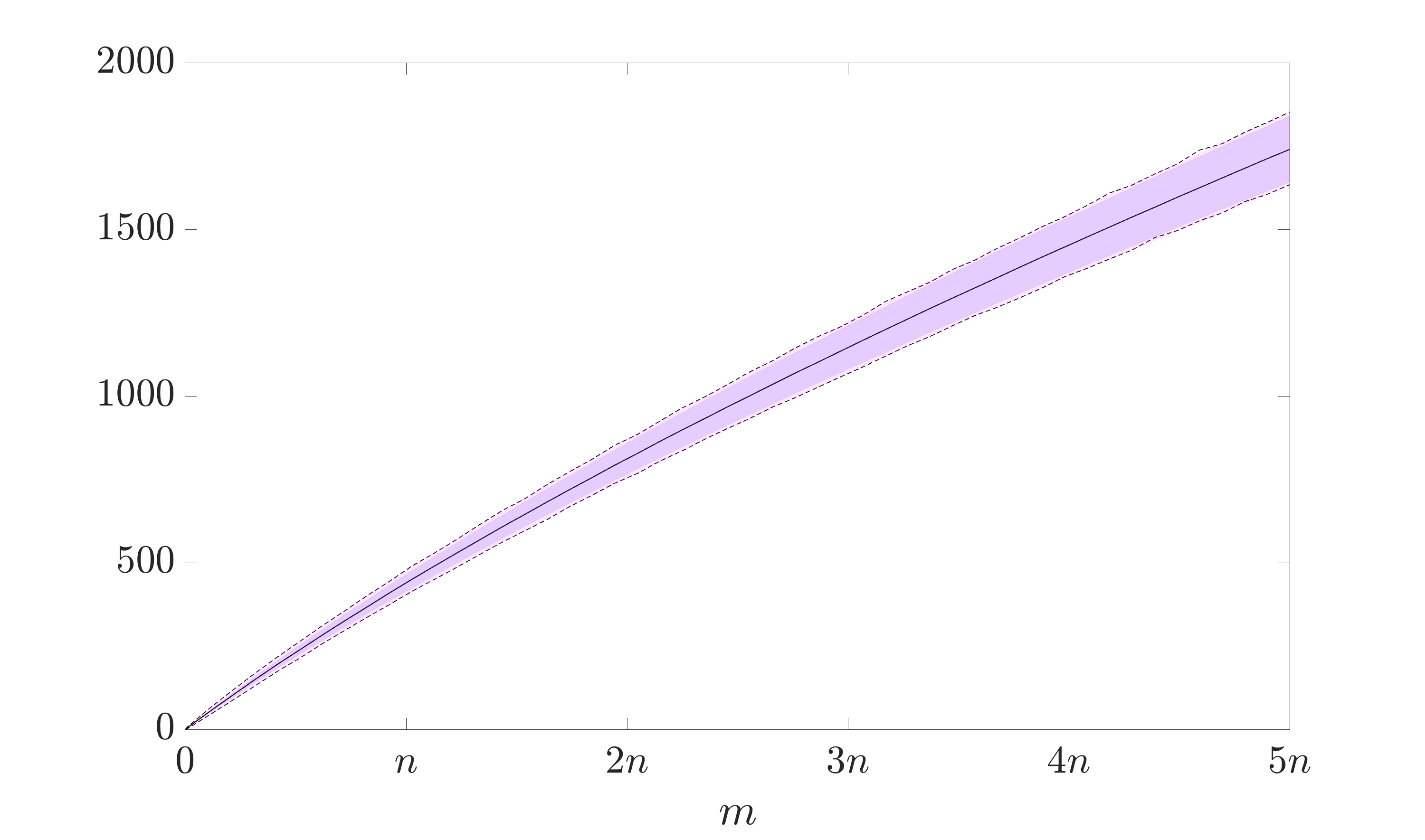}
\end{center}
\caption{\scriptsize{BNP estimates of $K_{n,m}$ (solid line --) with 95\% exact credible intervals (dashed line - -), Mittag-Leffler credible intervals (violet) and Gaussian credible intervals (pink), as a function of $m \in [0, 5n]$.}}
\label{fig:naegAA}
\end{figure}

For the tomato flower, \emph{Mastigamoeba}, \emph{Mastigamoeba} normalized,  \emph{Naegleria} aerobic and \emph{Naegleria} anaerobic EST datasets, Figure \ref{fig:coverage_r} shows the coverage of the Mittag-Leffler credible interval (blue) and of the Gaussian credible interval (red) as a function of $m \in [0, 5n]$. The coverages are evaluated at a uniform mesh of $50$ points over $[0, 5n]$, the same values of $m$  considered for Figure \ref{fig:2} and Figure \ref{fig:naegAA}. 

\begin{figure}[h!]
\begin{minipage}{0.5 \textwidth}
\begin{center}
\medskip
\textit{Mastigamoeba}
\medskip

\includegraphics[width = \textwidth]{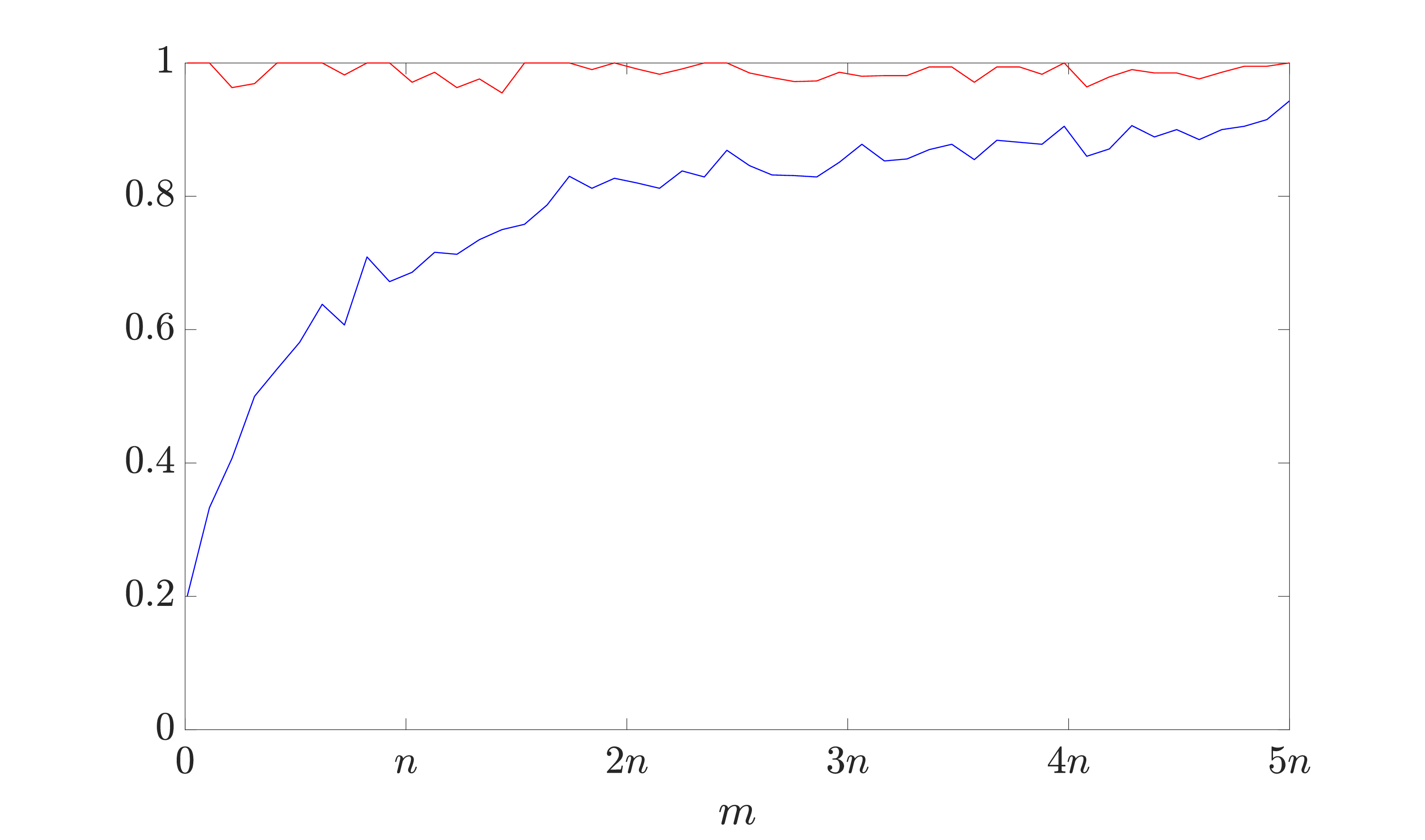}
\end{center}
\end{minipage}
\begin{minipage}{0.5 \textwidth}
\begin{center}
\medskip
\textit{Mastigamoeba}, normalized
\medskip

\includegraphics[width = \textwidth]{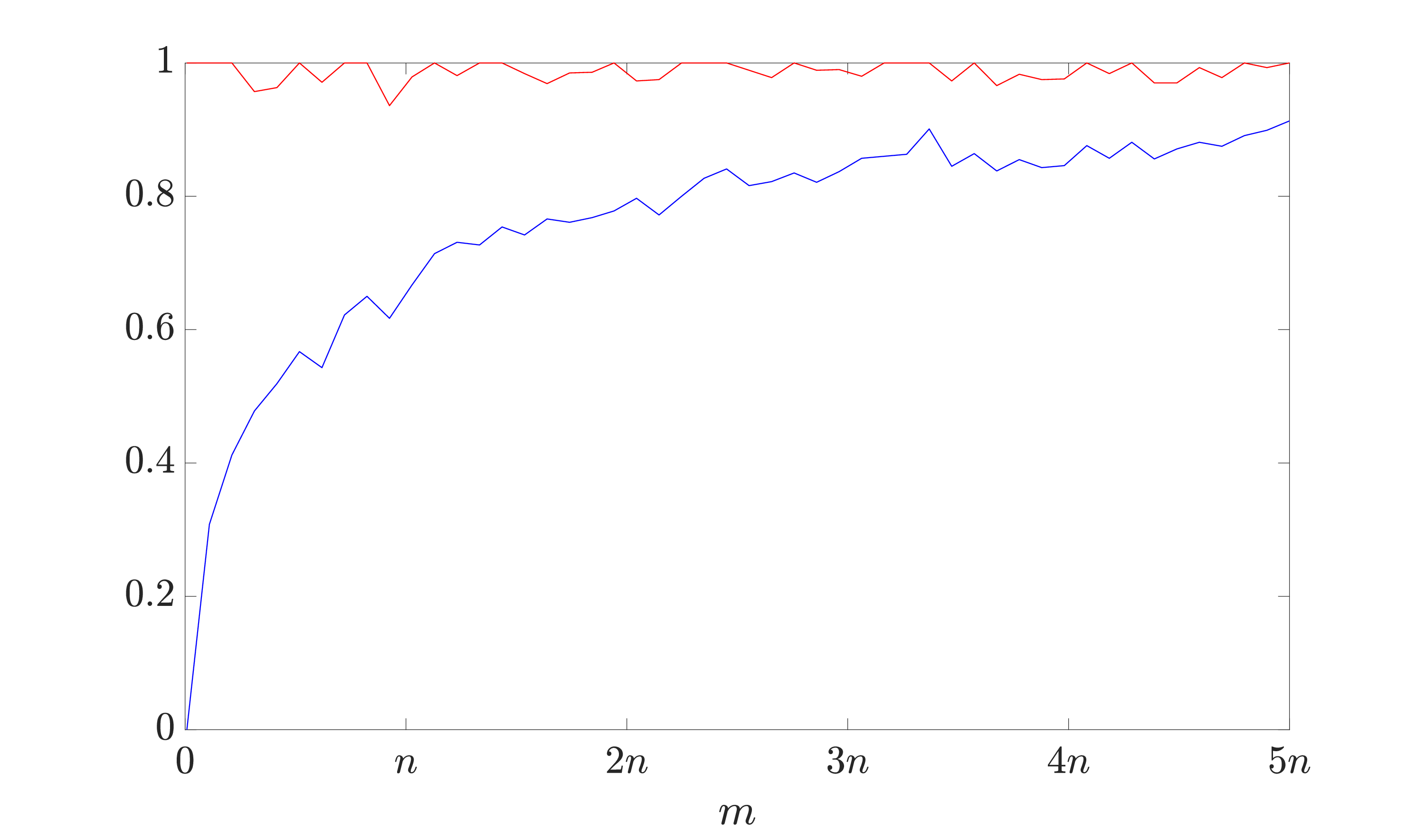}
\end{center}
\end{minipage}

\begin{minipage}{0.5 \textwidth}
\begin{center}
\medskip
\textit{Naegleria} aerobic
\medskip

\includegraphics[width = \textwidth]{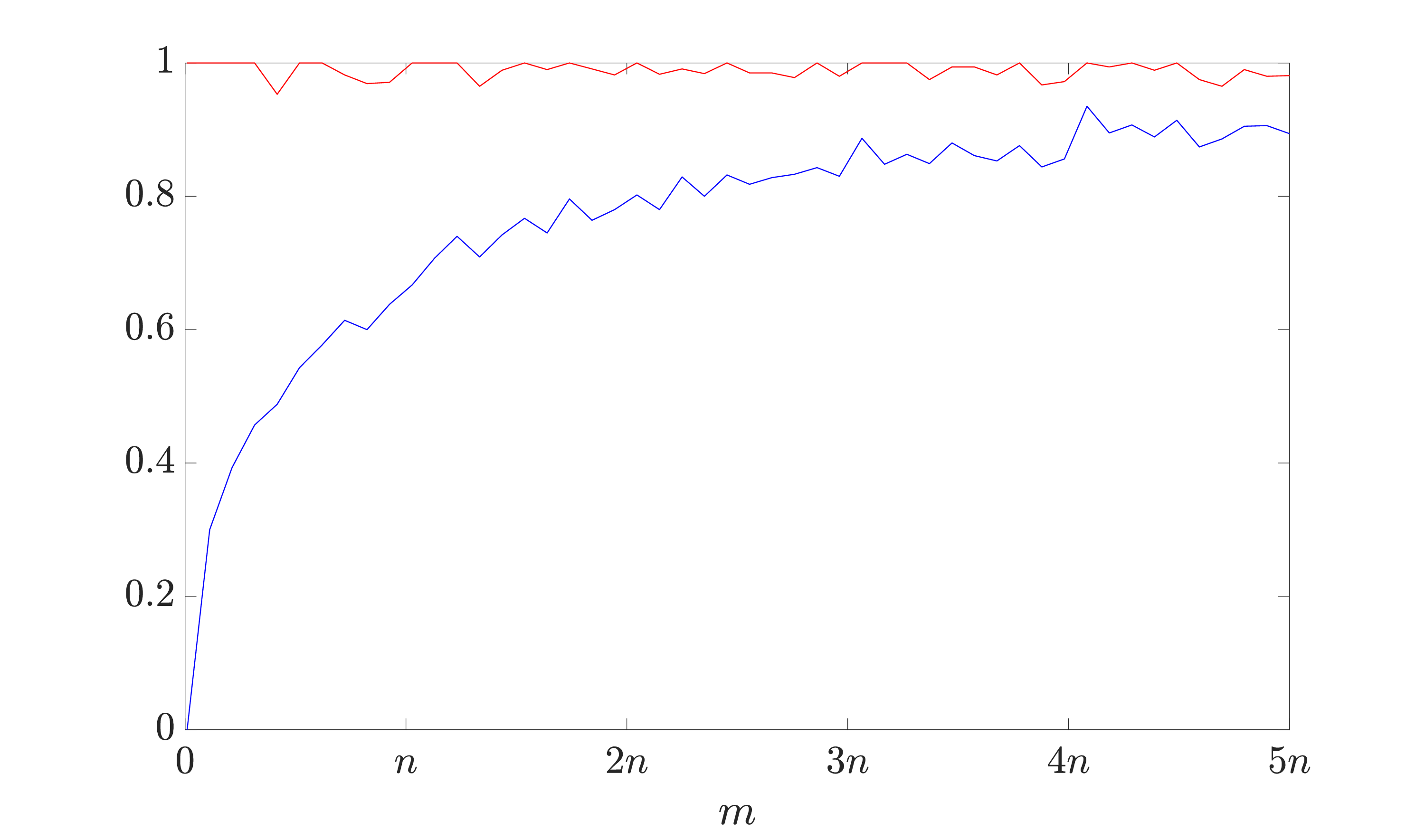}
\end{center}
\end{minipage}
\begin{minipage}{0.5 \textwidth}
\begin{center}
\medskip
\textit{Naegleria} anaerobic
\medskip

\includegraphics[width = \textwidth]{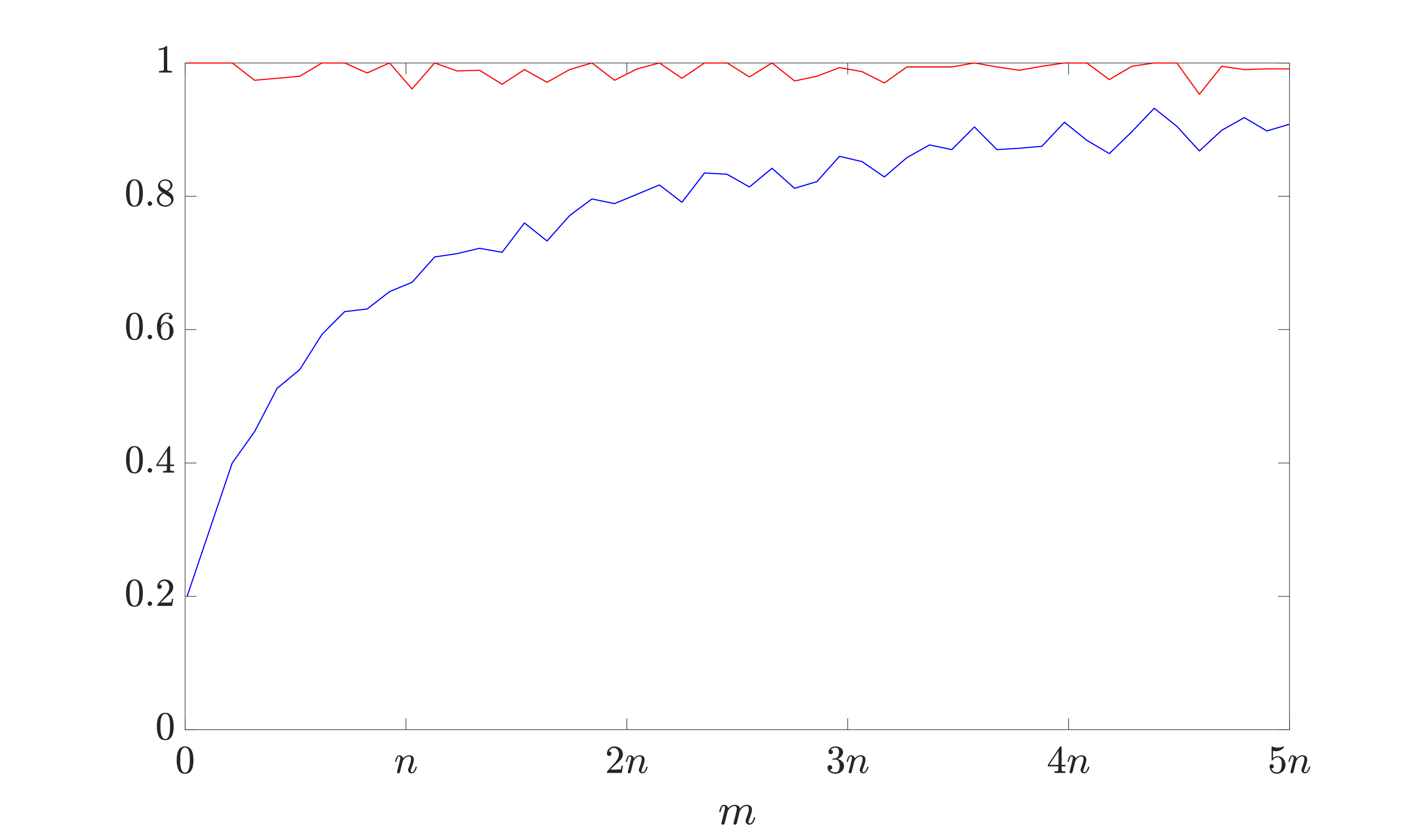}
\end{center}
\end{minipage}
\begin{minipage}{ \textwidth}
\begin{center}
\medskip
Tomato flower
\medskip

\includegraphics[width = 0.5\textwidth]{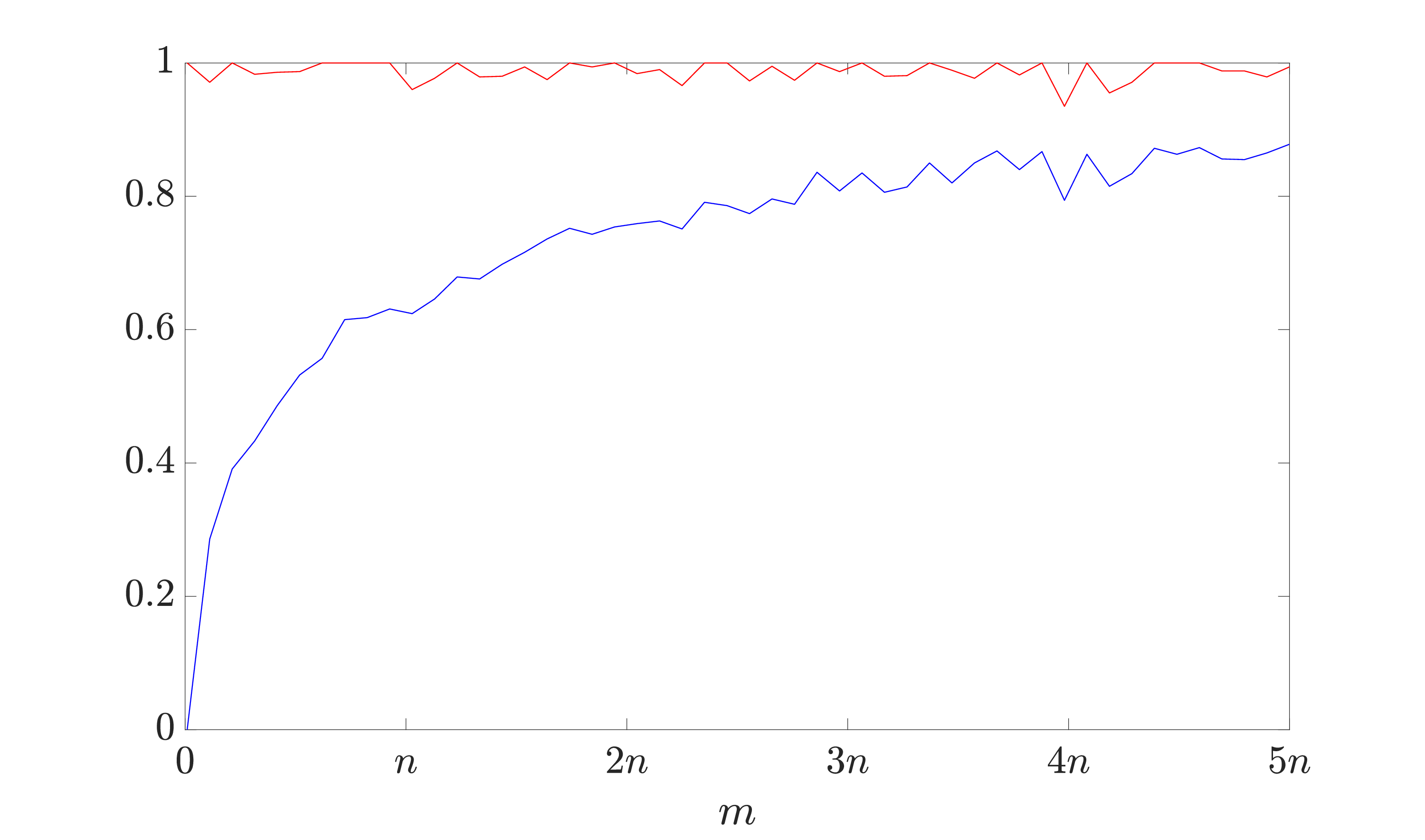}
\end{center}
\end{minipage}
\caption{\scriptsize{Coverage of the Mittag-Leffler credible interval (blue) and of the Gaussian credible interval (red) as a function of $m \in [0, 5n]$. }}
\label{fig:coverage_r}
\end{figure}

Figure \ref{fig:coverage_r} confirms the behavior of the coverage observed in Table \ref{tab:3}:  the coverage of the Gaussian credible intervals is nearly constant in $m$, with values oscillating between 95\% and 100\%. Instead, the coverage of the Mittag-Lefffler credible intervals is, for all values of $m$, lower than that of the corresponding Gaussian credible intervals; such a coverage increases in $m$.


\end{document}